\newtheorem{theorem}{Theorem}
\theoremstyle{plain}
\newtheorem{claim}{Claim}
\newtheorem{condition}{Condition}
\newtheorem{corollary}{Corollary}
\newtheorem{definition}{Definition}
\newtheorem{example}{Example}
\newtheorem{lemma}{Lemma}
\newtheorem{proposition}{Proposition}
\newtheorem{remark}{Remark}
\numberwithin{equation}{section}
\begin{document}
\title[2D and 3D Anisotropic Collapsing Estimates]{Collapsing Estimates and
the Rigorous Derivation of the 2d Cubic Nonlinear Schr\"{o}dinger Equation
with Anisotropic Switchable Quadratic Traps}
\author{Xuwen Chen}
\address{Department of Mathematics\\
University of Maryland\\
College Park, MD 20742}
\email{chenxuwen@math.umd.edu}
\date{01/05/2012}
\subjclass[2010]{Primary 35Q55, 35A02, 81V70; Secondary 35A23, 35B45, 81Q05.}
\keywords{Gross-Pitaevskii Hierarchy, Anisotropic Switchable Quadratic Trap,
Collapsing Estimate, Metaplectic Representation.}

\begin{abstract}
We consider the 2d and 3d many body Schr\"{o}dinger equations in the
presence of anisotropic switchable quadratic traps. We extend and improve
the collapsing estimates in Klainerman-Machedon \cite{KlainermanAndMachedon}
and Kirkpatrick-Schlein-Staffilani \cite{Kirpatrick}. Together with an
anisotropic version of the generalized lens transform in Carles \cite{Carles}%
, we derive rigorously the cubic NLS with anisotropic switchable quadratic
traps in 2d through a modified Elgart-Erd\"{o}s-Schlein-Yau procedure. For
the 3d case, we establish the uniqueness of the corresponding
Gross-Pitaevskii hierarchy without the assumption of factorized initial data.
\end{abstract}

\maketitle

\section{Introduction}

Bose-Einstein condensation (BEC) is the phenomenon that particles of integer
spin (\textquotedblleft Bosons\textquotedblright ) occupy a macroscopic
quantum state. The first experimental observation of BEC in an interacting
atomic gas occurred in 1995 \cite{Anderson, Davis}. Many similar experiments
were performed later \cite{Philips, Ketterle, Stamper}. In these laboratory
experiments, the particles are initially confined by traps, e.g., the
magnetic fields in \cite{Anderson, Davis}, then the traps are switched in
order to enable observation. To be more precise about the word "switch": in 
\cite{Anderson, Davis} the trap is removed, in \cite{Stamper} the initial
magnetic trap is switched to an optical trap, in \cite{Philips} the trap is
turned off in 2 spatial directions to generate a 2d Bose gas. The dynamic
during the period when the trap is shifting is sophisticated. To model the
evolution in this process, we use a quadratic potential multiplied by a
switch function in each spatial direction for analysis in this paper. This
simplified yet reasonably general model is expected to capture the salient
features of the actual traps: on the one hand the quadratic potential varies
slowly and tends to $\infty $ as $\left\vert x\right\vert \rightarrow \infty 
$; on the other hand, the switch functions describe the space-time
anisotropic properties of the confining potential. In the physics
literature, Lieb, Seiringer and Yngvason remarked in \cite{Lieb1} that the
confining potential is typically $\sim \left\vert x\right\vert ^{2}$ in the
available experiments. Mathematically speaking, the strongest trap we can
deal with in the usual regularity setting of NLS is the quadratic trap since
the work \cite{Yajima} by Yajima and Zhang points out that the ordinary
Strichartz estimates start to fail as the trap exceeds quadratic.

Motivated by the above considerations, we aim to investigate the evolution
of a many-body Boson system during the alteration of the trap. The N-body
wave function $\psi _{N}(\tau ,\overrightarrow{\mathbf{y}_{N}})$ solves the
many body Schr\"{o}dinger equation with anisotropic switchable quadratic
traps: 
\begin{eqnarray}
i\partial _{\tau }\psi _{N} &=&\frac{1}{2}H_{\overrightarrow{\mathbf{y}_{N}}%
}(\tau )\psi _{N}+\frac{1}{N}\sum_{i<j}N^{n\beta }V(N^{\beta }\left( \mathbf{%
y}_{i}-\mathbf{y}_{j}\right) )\psi _{N}  \label{eqn:ManybodySchrodinger} \\
\psi _{N}(0,\overrightarrow{\mathbf{y}_{N}}) &=&\dprod\limits_{j=1}^{N}\phi
_{0}(\mathbf{y}_{j}),  \notag
\end{eqnarray}%
where $\tau \in \mathbb{R}$, $\overrightarrow{\mathbf{y}_{N}}=\left( \mathbf{%
y}_{1},\mathbf{y}_{2},...,\mathbf{y}_{N}\right) \in \mathbb{R}^{nN},$ $V$ is
the interaction between particles, and 
\begin{equation}
H_{\overrightarrow{\mathbf{y}_{N}}}(\tau ):=\sum_{j=1}^{N}H_{\mathbf{y}%
_{j}}(\tau ):=\sum_{j=1}^{N}\left( \sum_{l=1}^{n}\left( -\frac{\partial ^{2}%
}{\partial y_{j,l}^{2}}+\eta _{l}(\tau )y_{j,l}^{2}\right) \right)
\label{Def:HermiteLikeOperator}
\end{equation}%
with the switch functions $\eta _{l}(\tau )$, $l=1,...,n$. Throughout this
paper, we only consider $n=2$ or $3$ and we assume the switch functions $%
\eta _{l}\in C^{1}(\mathbb{R}_{0}^{+}\rightarrow \mathbb{R}_{0}^{+})$
satisfy the following conditions.

\begin{condition}
\label{Condition:EvenExtension}$\dot{\eta}_{l}(0)=0$ i.e. The trap is not at
a switching stage initially.
\end{condition}

\begin{condition}
\label{Condition:FastSwitch}$\dot{\eta}_{l}$ is supported in $[0,T_{0}]$ and 
$T_{0}\sqrt{\sup_{\tau }\left\vert \eta _{l}(\tau )\right\vert }<\frac{\pi }{%
2}.$
\end{condition}

When the trap is fully on, Lieb, Seiringer, Solovej and Yngvason showed that
the ground state of the Hamiltonian exhibits complete BEC in \cite{Lieb2},
provided that the trapping potential $V_{trap}(x)$ satisfies $%
\inf_{\left\vert x\right\vert >R}V_{trap}(x)$ $\rightarrow \infty $ for $%
R\rightarrow \infty $ and the interaction potential is spherically
symmetric. To be more precise, let $\psi _{N,0}$ be the ground state, then 
\begin{equation*}
\gamma _{N,0}^{(1)}\rightarrow \left\vert \phi _{GP}\right\rangle
\left\langle \phi _{GP}\right\vert \text{ as }N\rightarrow \infty
\end{equation*}%
where $\gamma _{N,0}^{(1)}$ is the corresponding one particle marginal
density defined via formula \ref{def:marginal density} and $\phi _{GP}$
minimizes the Gross-Pitaevskii energy functional%
\begin{equation*}
\int \mathbf{(}\left\vert \nabla \phi \right\vert ^{2}+V_{trap}(x)\left\vert
\phi \right\vert ^{2}+4\pi a_{0}\left\vert \phi \right\vert ^{4}\mathbf{)}d%
\mathbf{x.}
\end{equation*}%
Because we are now considering the evolution while the trap is changing, we
start with a BEC state / factorized state in equation \ref%
{eqn:ManybodySchrodinger}.

However, $\psi _{N}$ does not remain a product of one-particle states i.e.%
\begin{equation*}
\psi _{N}(\tau ,\overrightarrow{\mathbf{y}_{N}})\neq
\dprod\limits_{j=1}^{N}\phi (\tau ,\mathbf{y}_{j}),\tau >0
\end{equation*}%
for some one particle state $\phi $. Moreover it is unrealistic to solve the 
$N$-body equation \ref{eqn:ManybodySchrodinger} for large $N$. Thence, to
observe BEC, we have to show mathematically that $\psi _{N}$ is very close
to $\dprod\limits_{j=1}^{N}\phi (\tau ,\mathbf{y}_{j}),$ the mean field
approximation, in an appropriate sense.

Notice that when $\phi \neq \phi ^{\prime }$%
\begin{equation*}
\left\Vert \dprod\limits_{j=1}^{N}\phi (\tau ,\mathbf{y}_{j})-\dprod%
\limits_{j=1}^{N}\phi ^{\prime }(\tau ,\mathbf{y}_{j})\right\Vert
_{2}^{2}\rightarrow 2\text{ }as\text{ }N\rightarrow \infty .
\end{equation*}%
i.e. our desired limit (the BEC state) is not stable against small
perturbations. One way to circumvent this difficulty is to use the concept
of the k-particle marginal density $\gamma _{N}^{(k)}$ associated with $\psi
_{N}\ $defined as%
\begin{equation}
\gamma _{N}^{(k)}(\tau ,\overrightarrow{\mathbf{y}_{k}};\overrightarrow{%
\mathbf{y}_{k}^{\prime }})=\int \psi _{N}(\tau ,\overrightarrow{\mathbf{y}%
_{k}},\overrightarrow{\mathbf{y}_{N-k}})\overline{\psi _{N}(\tau ,%
\overrightarrow{\mathbf{y}_{k}^{\prime }},\overrightarrow{\mathbf{y}_{N-k}}))%
}d\overrightarrow{\mathbf{y}_{N-k}},\text{ }\overrightarrow{\mathbf{y}_{k}},%
\overrightarrow{\mathbf{y}_{k}^{\prime }}\in \mathbb{R}^{nk}.
\label{def:marginal density}
\end{equation}%
Another way is to add a second order correction to the mean field
approximation. See \cite{Chen2ndOrder, GMM1, GMM2}.

In this paper, we take the marginal density approach and establish the
following theorem.

\begin{theorem}
\label{Theorem:BECin2D}Consider the 2d case when $\beta \in \left( 0,\frac{3%
}{4}\right) $. Assume the interaction potential $V$ is nonnegative and
belongs to $L^{1}\cap W^{2,\infty }$ and the switch functions $\eta _{l}$
satisfy Conditions \ref{Condition:EvenExtension} and \ref%
{Condition:FastSwitch}$.$ Moreover, suppose the initial data has bounded
energy per particle%
\begin{equation*}
\sup_{N}\frac{1}{N}\left\langle \psi _{N},H_{N}(\tau )\psi _{N}\right\rangle %
\bigg|_{\tau =0}<\infty .
\end{equation*}%
where the Hamiltonian $H_{N}(\tau )$ is 
\begin{equation*}
H_{N}(\tau )=\frac{1}{2}\sum_{j=1}^{N}\left( \sum_{l=1}^{2}\left( -\frac{%
\partial ^{2}}{\partial y_{j,l}^{2}}+\eta _{l}(\tau )y_{j,l}^{2}\right)
\right) +\frac{1}{N}\sum_{i<j}N^{2\beta }V(N^{\beta }\left( \mathbf{y}_{i}-%
\mathbf{y}_{j}\right) ).
\end{equation*}%
If $\left\{ \gamma _{N}^{(k)}\right\} $ are the marginal densities
associated with $\psi _{N}$, the solution of the N-body Schr\"{o}dinger
equation \ref{eqn:ManybodySchrodinger}, and $\phi $ solves the 2d
Gross-Pitaevskii equation:%
\begin{eqnarray*}
i\partial _{\tau }\phi -\frac{1}{2}H_{\mathbf{y}}(\tau )\phi
&=&b_{0}\left\vert \phi \right\vert ^{2}\phi \\
\phi (0,\mathbf{y}) &=&\phi _{0}(\mathbf{y}),
\end{eqnarray*}%
where $H_{\mathbf{y}}(\tau )$ is the operator inside formula \ref%
{Def:HermiteLikeOperator} and $b_{0}=\int V(x)dx$, then $\forall \tau \in
\lbrack 0,T_{0}]$ and $k\geqslant 1$, we have the convergence: 
\begin{equation*}
\left\Vert \gamma _{N}^{(k)}(\tau ,\overrightarrow{\mathbf{y}_{k}};%
\overrightarrow{\mathbf{y}_{k}^{\prime }})-\dprod\limits_{j=1}^{k}\phi (\tau
,\mathbf{y}_{j})\overline{\phi (\tau ,\mathbf{y}_{j}^{\prime })}\right\Vert
_{L^{2}(d\overrightarrow{\mathbf{y}_{k}}d\overrightarrow{\mathbf{y}%
_{k}^{\prime }})}\rightarrow 0\text{ as }N\rightarrow \infty .
\end{equation*}
\end{theorem}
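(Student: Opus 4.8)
The plan is to follow the now-standard Elgart--Erd\H{o}s--Schlein--Yau scheme, adapted to the time-dependent Hermite-type operator $H_{\mathbf{y}}(\tau)$ by means of the anisotropic lens (generalized Carles) transform. First I would apply this transform to conjugate away the quadratic traps: under the change of variables dictated by the metaplectic representation associated with $H_{\overrightarrow{\mathbf{y}_N}}(\tau)$, the $N$-body equation \eqref{eqn:ManybodySchrodinger} becomes a standard $N$-body Schr\"{o}dinger equation with a time- and space-rescaled interaction potential (and a modified coupling), on a finite time interval whose length is controlled by Condition~\ref{Condition:FastSwitch} (the bound $T_0\sqrt{\sup|\eta_l|}<\pi/2$ is exactly what keeps the transform nonsingular). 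Condition~\ref{Condition:EvenExtension} guarantees the transform and the initial data match up smoothly at $\tau=0$, so the factorized initial data is preserved. The transformed marginals $\gamma_N^{(k)}$ satisfy a BBGKY hierarchy, and it suffices to prove the convergence in the transformed variables and then undo the transform, which is a bounded, invertible operation on the relevant $L^2$ spaces.

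Second, I would establish \emph{a priori} energy bounds: using the positivity of $V$ and the assumed bound on $\frac1N\langle\psi_N,H_N(0)\psi_N\rangle$, together with a Gronwall-type argument exploiting $\dot\eta_l\in C^1$ supported in $[0,T_0]$, I get $\sup_{\tau\in[0,T_0]}\frac1N\langle\psi_N,H_N(\tau)\psi_N\rangle<\infty$ and, more importantly, higher-order bounds of the form $\langle\psi_N,(H_N(\tau))^k\psi_N\rangle\leq C^k N^k$ (the Erd\H{o}s--Schlein--Yau energy estimate), which after the lens transform translate into uniform Sobolev bounds $\mathrm{Tr}(S^{(k)}\gamma_N^{(k)}(\tau)S^{(k)})\leq C^k$ where $S^{(k)}=\prod_{j=1}^k\langle\nabla_{x_j}\rangle$. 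These bounds give compactness of $\{\gamma_N^{(k)}\}$ in the appropriate weak-* topology, so along a subsequence $\gamma_N^{(k)}\to\gamma_\infty^{(k)}$.

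Third, I would pass to the limit in the BBGKY hierarchy. The key point is that the interaction term $N^{n\beta}V(N^\beta(\mathbf{y}_i-\mathbf{y}_j))$ converges to $b_0\,\delta(\mathbf{y}_i-\mathbf{y}_j)$ in the sense needed, and controlling this singular limit is where the \textbf{collapsing estimates} of the paper enter and constitute the main obstacle: one needs an estimate of the schematic form $\|\,|\nabla|^{?}\, B_{i,j}\,U(t)\gamma^{(k+1)}\,\|_{L^2_{x}L^1_t}\lesssim \|S^{(k+1)}\gamma^{(k+1)}\|$, now for the free-evolution $U(t)$ in the transformed (flat) coordinates, where $B_{i,j}$ is the contraction/restriction operator. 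The 2d range $\beta\in(0,3/4)$ is dictated precisely by the strength of the collapsing estimate available in 2d, improving on Kirkpatrick--Schlein--Staffilani; this step requires the refined/anisotropic collapsing estimate proved earlier in the paper. With that in hand, the limit points $\gamma_\infty^{(k)}$ are shown to solve the (transformed) Gross--Pitaevskii hierarchy
\begin{equation*}
i\partial_t\gamma^{(k)} = \sum_{j=1}^k[-\Delta_{x_j},\gamma^{(k)}] + b_0\sum_{j=1}^k B_{j,k+1}\gamma^{(k+1)}.
\end{equation*}

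Finally, I would invoke uniqueness for the Gross--Pitaevskii hierarchy: since $\prod_{j=1}^k\phi(t,x_j)\overline{\phi(t,x_j')}$ (with $\phi$ solving the transformed cubic NLS, which is the lens transform of the GP equation in the statement) is one solution with the same factorized initial data and it satisfies the same Sobolev bounds, the Klainerman--Machedon board-game argument combined with the collapsing estimate forces $\gamma_\infty^{(k)} = \prod_j\phi\overline{\phi}$ for every $k$. Because the limit is unique, the whole sequence converges, not just a subsequence; transferring back through the lens transform yields the asserted convergence for the original marginals and the original GP equation. The main difficulty throughout is the borderline nature of the collapsing/Strichartz estimates — both in pushing $\beta$ up to $3/4$ and in ensuring the estimates survive conjugation by the metaplectic operator with only $C^1$ control on the switch functions; I expect the bulk of the technical work to be in the proof of the anisotropic collapsing estimate and in verifying it is compatible with the time-dependent free evolution produced by the lens transform.
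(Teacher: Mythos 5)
Your proposal follows essentially the same route as the paper: conjugate by the anisotropic generalized lens transform to a variable-coefficient flat BBGKY hierarchy, obtain compactness and the a priori space-time bound by adapting Kirkpatrick--Schlein--Staffilani to the rescaled interaction $\widetilde{V}$ (with the Gronwall remark controlling the non-conserved energy of the transformed, time-dependent Hamiltonian), close the hierarchy via the collapsing estimate of Theorem \ref{Theorem:3*nd} and the uniqueness Theorem \ref{Theorem:Uniqueness of 2d unknown GP}, and transfer back using unitarity of $T_l$ together with the $L^2$-criticality of the 2d cubic NLS. The one clarification worth making is that the new anisotropic collapsing estimate is needed precisely because the lens transform produces a time-dependent, anisotropic operator $L_{\mathbf{x}}(t)$ in the general case, not to enlarge the $\beta$-range, which is inherited unchanged from Kirkpatrick--Schlein--Staffilani.
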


\begin{example}
We give a simple example to explain the switching process we are considering
here: say 
\begin{eqnarray*}
\eta _{1}(\tau ) &=&C_{1}\text{ when }\tau \in (-\infty ,\frac{1}{2}],\text{ 
}C_{2}\text{ when }\tau \in \lbrack 1,\infty ), \\
\eta _{2}(\tau ) &=&C_{3}\text{ when }\tau \in (-\infty ,\frac{1}{4}],\text{ 
}C_{4}\text{ when }\tau \in \lbrack \frac{3}{2},\infty ).
\end{eqnarray*}%
Then our switching process contains the cases: turning off / on: $C_{2}=0$ / 
$C_{1}=0$ and tuning up / down: $C_{1}\leqslant C_{2}$ / $C_{2}\leqslant
C_{1}.$ As long as $\eta _{1}(\tau )\in C^{1}$ and satisfies Condition \ref%
{Condition:FastSwitch}$,$ $\eta _{1}$ can behave as one likes inside $[\frac{%
1}{2},1]$. Same comment applies to $\eta _{2}$ too. Furthermore, Theorem \ref%
{Theorem:BECin2D} addresses the time intervals $(-\infty ,0]$ and $[\frac{3}{%
2},\infty )$ as well. Since the equation is time translation invariant in
these two intervals, we can use Theorem \ref{Theorem:BECin2D} separately in
each sufficiently small time intervals.
\end{example}

\begin{remark}
Technically, one should interpret Conditions \ref{Condition:EvenExtension}
and \ref{Condition:FastSwitch} in the following way. Due to Condition \ref%
{Condition:EvenExtension}, we have a $C^{1}$ even extension of $\eta _{l}\ $%
i.e. we define $\eta _{l}(\tau )=\eta _{l}(-\tau )$ for $\tau <0.$ The fast
switching condition \ref{Condition:FastSwitch} in fact ensures that $\beta
_{l}$ defined via equation \ref{eqn: Alpha and Beta} is non-zero in $%
[0,T_{0}]$ which is crucial in this paper. See Claim \ref{Claim:Properties
of Alpha and Beta} for the proof.
\end{remark}

\begin{remark}
We assume $\beta \in \left( 0,\frac{3}{4}\right) $ to match
Kirkpatrick-Schlein-Staffilani \cite{Kirpatrick} in which the authors
studied the $\eta _{l}=0$ case. $\beta =0$ will yield a Hartree equation
instead of the cubic NLS.
\end{remark}

The approach with $\gamma _{N}^{(k)}$ has been proven to be successful in
the $\eta _{l}=0$ and $n=3$ case, which corresponds to the evolution after
the removal of the traps, in the fundamental papers \cite{E-E-S-Y1, E-Y1,
E-S-Y1,E-S-Y2,E-S-Y4, E-S-Y5, E-S-Y3} by Elgart, Erd\"{o}s, Schlein, and
Yau. Their program, outlined by Spohn \cite{Spohn}, consists of two
principal parts: on the one hand, they prove that an appropriate limit of
the sequence $\left\{ \gamma _{N}^{(k)}\right\} _{k=1}^{N}$ as $N\rightarrow
\infty $ solves the Gross-Pitaevskii hierarchy 
\begin{equation}
\left( i\partial _{t}+\frac{1}{2}\triangle _{\overrightarrow{\mathbf{x}_{k}}%
}-\frac{1}{2}\triangle _{\overrightarrow{\mathbf{x}_{k}^{\prime }}}\right)
\gamma ^{(k)}=b_{0}\sum_{j=1}^{k}B_{j,k+1}\left( \gamma ^{(k+1)}\right) ,%
\text{ }k=1,...,n,...
\label{equation:Gross-Pitaevskii hiearchy without a trap}
\end{equation}%
where $B_{j,k+1}$ are in formula \ref{formula:B}; on the other hand, they
show that hierarchy \ref{equation:Gross-Pitaevskii hiearchy without a trap}
has a unique solution which is therefore a completely factored state.
However, the uniqueness theory for hierarchy \ref{equation:Gross-Pitaevskii
hiearchy without a trap} is surprisingly delicate due to the fact that it is
a system of infinitely many coupled equations over an unbounded number of
variables. In \cite{KlainermanAndMachedon}, by assuming a space-time bound,
Klainerman and Machedon gave another proof of the uniqueness in \cite{E-S-Y2}
through a collapsing estimate and a board game argument. We call the
space-time estimates of the solution of Schr\"{o}dinger equations restricted
to a subspace of $\mathbb{R}^{n}$ "collapsing estimates". We can interpret
them as local smoothing estimates for which integrating in time results in a
gain of one hidden derivative in the sense of the trace theorem. To be
specific, the collapsing estimate of \cite{KlainermanAndMachedon} reads:
Suppose $u^{(k+1)}$ solves%
\begin{equation*}
\left( i\partial _{t}+\frac{1}{2}\triangle _{\overrightarrow{\mathbf{x}_{k+1}%
}}-\frac{1}{2}\triangle _{\overrightarrow{\mathbf{x}_{k+1}^{\prime }}%
}\right) u^{(k+1)}=0,
\end{equation*}%
there is $C>0,$ independent of $j$, $k$ or $u^{(k+1)}(0,\overrightarrow{%
\mathbf{x}_{k+1}};\overrightarrow{\mathbf{x}_{k+1}^{\prime }})$ s.t. 
\begin{eqnarray}
&&\left\Vert \left( \prod_{j=1}^{k}\left( \nabla _{\mathbf{x}_{j}}\nabla _{%
\mathbf{x}_{j}^{\prime }}\right) \right) u^{(k+1)}(t,\overrightarrow{\mathbf{%
x}_{k}},\mathbf{x}_{1};\overrightarrow{\mathbf{x}_{k}^{\prime }},\mathbf{x}%
_{1})\right\Vert _{L^{2}(\mathbb{R}\times \mathbb{R}^{3k}\times \mathbb{R}%
^{3k})}  \label{estimate:KeyEstimateInKlainermanMachedon} \\
&\leqslant &C\left\Vert \left( \prod_{j=1}^{k+1}\left( \nabla _{\mathbf{x}%
_{j}}\nabla _{\mathbf{x}_{j}^{\prime }}\right) \right) u^{(k+1)}(0,%
\overrightarrow{\mathbf{x}_{k+1}};\overrightarrow{\mathbf{x}_{k+1}^{\prime }}%
)\right\Vert _{L^{2}(\mathbb{R}^{3(k+1)}\times \mathbb{R}^{3(k+1)})}.  \notag
\end{eqnarray}%
Later, the method in Klainerman and Machedon \cite{KlainermanAndMachedon}
was taken up by Kirkpatrick, Schlein, and Staffilani in \cite{Kirpatrick},
where they studied the corresponding problem in 2d, and Chen, Pavlovi\'{c}
and Tzirakis \cite{TChenAndNpGP1, TChenAndNP, TCNPNT}, in which they
considered the 1d and 2d 3-body interaction problem and the general
existence theory of hierarchy \ref{equation:Gross-Pitaevskii hiearchy
without a trap}.

We are interested in the case $\eta _{l}\neq 0$. So we study the
Gross-Pitaevskii hierarchy with anisotropic switchable quadratic traps. That
is a sequence of functions $\left\{ \gamma ^{(k)}(\tau ,\overrightarrow{%
\mathbf{y}_{k}};\overrightarrow{\mathbf{y}_{k}^{\prime }})\right\}
_{k=1}^{\infty }$, where $\tau \in \mathbb{R}$, $\overrightarrow{\mathbf{y}%
_{k}},\overrightarrow{\mathbf{y}_{k}^{\prime }}\in \mathbb{R}^{nk},$ which
are symmetric, in the sense that

\begin{equation*}
\gamma ^{(k)}(\tau ,\overrightarrow{\mathbf{y}_{k}};\overrightarrow{\mathbf{y%
}_{k}^{\prime }})=\overline{\gamma ^{(k)}(\tau ,\overrightarrow{\mathbf{y}%
_{k}^{\prime }};\overrightarrow{\mathbf{y}_{k}})}
\end{equation*}%
and%
\begin{equation*}
\gamma ^{(k)}(\tau ,\mathbf{y}_{\sigma (1)},\mathbf{y}_{\sigma (2)},...,%
\mathbf{y}_{\sigma (k)};\mathbf{y}_{\sigma (1)}^{\prime },\mathbf{y}_{\sigma
(2)}^{\prime },...,\mathbf{y}_{\sigma (k)}^{\prime })=\gamma ^{(k)}(\tau ,%
\mathbf{y}_{1},\mathbf{y}_{2},...,\mathbf{y}_{k};\mathbf{y}_{1}^{\prime },%
\mathbf{y}_{2}^{\prime },...,\mathbf{y}_{k}^{\prime })
\end{equation*}%
for any permutation $\sigma ,$ since we are considering Bosons, and satisfy
the anisotropic switchable quadratic traps Gross-Pitaevskii infinite
hierarchy of equations:

\begin{equation}
\left( i\partial _{\tau }-\frac{1}{2}H_{\overrightarrow{\mathbf{y}_{k}}%
}(\tau )+\frac{1}{2}H_{\overrightarrow{\mathbf{y}_{k}^{\prime }}}(\tau
)\right) \gamma ^{(k)}=b_{0}\sum_{j=1}^{k}B_{j,k+1}\left( \gamma
^{(k+1)}\right) .
\label{equation:Gross-Pitaevskii hiearchy with anisotropic traps}
\end{equation}%
In the above, $B_{j,k+1}=B_{j,k+1}^{1}-B_{j,k+1}^{2}$ are defined as%
\begin{eqnarray}
&&B_{j,k+1}^{1}\left( \gamma ^{(k+1)}\right) (\tau ,\overrightarrow{\mathbf{y%
}_{k}};\overrightarrow{\mathbf{y}_{k}^{\prime }})  \label{formula:B} \\
&=&\int \int \delta (\mathbf{y}_{j}-\mathbf{y}_{k+1})\delta (\mathbf{y}_{j}-%
\mathbf{y}_{k+1}^{\prime })\gamma ^{(k+1)}(\tau ,\overrightarrow{\mathbf{y}%
_{k+1}};\overrightarrow{\mathbf{y}_{k+1}^{\prime }})d\mathbf{y}_{k+1}d%
\mathbf{y}_{k+1}^{\prime }  \notag \\
&&B_{j,k+1}^{2}\left( \gamma ^{(k+1)}\right) (\tau ,\overrightarrow{\mathbf{y%
}_{k}};\overrightarrow{\mathbf{y}_{k}^{\prime }})  \notag \\
&=&\int \int \delta (\mathbf{y}_{j}^{\prime }-\mathbf{y}_{k+1})\delta (%
\mathbf{y}_{j}^{\prime }-\mathbf{y}_{k+1}^{\prime })\gamma ^{(k+1)}(\tau ,%
\overrightarrow{\mathbf{y}_{k+1}};\overrightarrow{\mathbf{y}_{k+1}^{\prime }}%
)d\mathbf{y}_{k+1}d\mathbf{y}_{k+1}^{\prime }.  \notag
\end{eqnarray}%
These Dirac delta functions in $B_{j,k+1}$ are the reason we consider the
collapsing estimates like estimate \ref%
{estimate:KeyEstimateInKlainermanMachedon}.

When the initial data is a BEC / factorized state%
\begin{equation*}
\gamma ^{(k)}(0,\overrightarrow{\mathbf{y}_{k}};\overrightarrow{\mathbf{y}%
_{k}^{\prime }})=\dprod\limits_{j=1}^{k}\phi _{0}(\mathbf{y}_{j})\overline{%
\phi _{0}(\mathbf{y}_{j}^{\prime })},
\end{equation*}%
hierarchy \ref{equation:Gross-Pitaevskii hiearchy with anisotropic traps}
admits one solution%
\begin{equation*}
\gamma ^{(k)}(\tau ,\overrightarrow{\mathbf{y}_{k}};\overrightarrow{\mathbf{y%
}_{k}^{\prime }})=\dprod\limits_{j=1}^{k}\phi (\tau ,\mathbf{y}_{j})%
\overline{\phi (\tau ,\mathbf{y}_{j}^{\prime })},
\end{equation*}%
which is also a BEC state, provided $\phi $ solves the $n-d$
Gross-Pitaevskii equation%
\begin{eqnarray*}
i\partial _{\tau }\phi -\frac{1}{2}H_{\mathbf{y}}(\tau )\phi
&=&b_{0}\left\vert \phi \right\vert ^{2}\phi \\
\phi (0,\mathbf{y}) &=&\phi _{0}(\mathbf{y}).
\end{eqnarray*}%
Hence we would like to have uniqueness theorems of hierarchy \ref%
{equation:Gross-Pitaevskii hiearchy with anisotropic traps}.

\subsection{Main Auxiliary Theorems}

To obtain Theorem \ref{Theorem:BECin2D}, we need the auxiliary theorems in
this subsection which are of independent interest. We show them in 3d as
well. On the one hand, the general idea for the 2d case is derived from the
higher dimensional case. On the other hand, the 2d and 3d cases are
dramatically different when they are viewed in the context of Theorem \ref%
{Theorem:BECin2D}. We will explain this difference between the 2d and 3d
case in Section \ref{Sec:ProofOfBEC}. For the moment, notice that the
uniqueness theorems in 2d and 3d address two different Gross-Pitaevskii
hierarchies which stand for the two sides of the lens transform. Also, we
currently do not have a 3d version of the 2d convergence / Theorem \ref%
{Theorem:BECin2D}. We state our auxiliary theorems regarding different
dimensions separately for comparison.

First, we have the following collapsing estimates which generalizes estimate %
\ref{estimate:KeyEstimateInKlainermanMachedon}.

\begin{theorem}
\label{Theorem:3*nd}(3*n-d optimal collapsing estimate) Let $n=2$ or $3$,
write 
\begin{equation*}
L_{\mathbf{x}}(t)=\sum_{l=1}^{n}a_{l}(t)\frac{\partial ^{2}}{\partial
x_{l}^{2}},
\end{equation*}%
where the $L_{loc}^{1}$ functions $a_{l}$ satisfy 
\begin{equation*}
a_{l}\geqslant c_{0}>0\text{ }a.e.
\end{equation*}%
Assume $u(t,\mathbf{x}_{1},\mathbf{x}_{2},\mathbf{x}_{2}^{\prime })$ solves
the Schr\"{o}dinger equation 
\begin{eqnarray}
iu_{t}+L_{\mathbf{x}_{1}}(t)u+L_{\mathbf{x}_{2}}(t)u\pm L_{\mathbf{x}%
_{2}^{\prime }}(t)u &=&0\text{ in }\mathbb{R}^{3n+1}  \label{eqn:3*3d} \\
u(0,\mathbf{x}_{1},\mathbf{x}_{2},\mathbf{x}_{2}^{\prime }) &=&f(\mathbf{x}%
_{1},\mathbf{x}_{2},\mathbf{x}_{2}^{\prime }),  \notag
\end{eqnarray}%
then%
\begin{equation*}
\int_{\mathbb{R}^{n+1}}\left\vert \left\vert \nabla _{\mathbf{x}}\right\vert
^{\frac{n-1}{2}}u(t,\mathbf{x},\mathbf{x},\mathbf{x})\right\vert ^{2}d%
\mathbf{x}dt\leqslant C\left\Vert \left\vert \nabla _{\mathbf{x}%
_{1}}\right\vert ^{\frac{n-1}{2}}\left\vert \nabla _{\mathbf{x}%
_{2}}\right\vert ^{\frac{n-1}{2}}\left\vert \nabla _{\mathbf{x}_{2}^{\prime
}}\right\vert ^{\frac{n-1}{2}}f\right\Vert _{2}^{2}.
\end{equation*}
\end{theorem}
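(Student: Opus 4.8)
**Proof proposal for Theorem~\ref{Theorem:3*nd}.**

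The plan is to pass to the Fourier side in all space variables, reduce the trilinear collapsing bound to an $L^2$ estimate on the restriction to the diagonal, and then run the change-of-variables argument of Klainerman--Machedon, with the constant-coefficient Laplacians replaced by the time-dependent symbols $\sum_l a_l(t)\xi_l^2$. Write $\hat u(t,\xi_1,\xi_2,\xi_2')$ for the spatial Fourier transform of $u$. Solving \eqref{eqn:3*3d} gives
\begin{equation*}
\hat u(t,\xi_1,\xi_2,\xi_2')=e^{i\Lambda(t)(|\xi_1|_a^2+|\xi_2|_a^2\pm|\xi_2'|_a^2)}\hat f(\xi_1,\xi_2,\xi_2'),
\end{equation*}
where I abbreviate $|\xi|_a^2$ for the time-frozen symbol and $\Lambda(t)=\int_0^t$ of the appropriate coefficient; the only structural fact I will use is that, because each $a_l\ge c_0>0$ a.e., the phase in each block is a genuinely elliptic (positive-definite) quadratic form in the frequencies with a common time factor. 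The restriction $u(t,\mathbf x,\mathbf x,\mathbf x)$ corresponds on the Fourier side to integrating $\hat u$ over the hyperplane where the three frequency blocks are summed against a single dual variable $\xi$, i.e.
\begin{equation*}
\widehat{u(t,\cdot,\cdot,\cdot)\big|_{\Delta}}(\xi)=\int\int \hat u(t,\xi-\xi_2-\xi_2',\xi_2,\xi_2')\,d\xi_2\,d\xi_2'.
\end{equation*}
Taking the Fourier transform in $t$ as well, Plancherel in $(t,\xi)$ converts the left-hand side of the claimed inequality into an $L^2$ norm over a measure supported on the graph of the combined phase, and the task becomes a Schur-type / Cauchy--Schwarz bound for the resulting integral operator.

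The key steps, in order: (1) rewrite both sides via Plancherel so the inequality becomes
\begin{equation*}
\int\Big|\int\int \langle\xi\rangle^{n-1}\,\delta\big(\lambda-\Phi\big)\,\hat f\,d\xi_2 d\xi_2'\Big|^2 d\xi\, d\lambda
\le C\int |\xi_1|_a^{n-1}|\xi_2|_a^{n-1}|\xi_2'|_a^{n-1}|\hat f|^2,
\end{equation*}
where $\Phi=\Phi(\xi-\xi_2-\xi_2',\xi_2,\xi_2')$ is the total phase; (2) apply Cauchy--Schwarz in $(\xi_2,\xi_2')$ against the measure $\delta(\lambda-\Phi)$, which produces a factor equal to the $\delta$-measure of the level set of $\Phi$ for fixed $(\xi,\lambda)$; (3) bound that level-set factor — this is the Klainerman--Machedon Jacobian/coarea computation: for fixed $\xi_1=\xi-\xi_2-\xi_2'$ one completes the square in $\xi_2'$, and because $|\xi_2'|_a^2$ is uniformly elliptic the constraint $\lambda=\Phi$ cuts out a sphere of controlled radius, so the measure of the level set is bounded by $C\langle\xi\rangle^{-(n-1)}$ times the weights $|\xi_2|_a^{-(n-1)}|\xi_2'|_a^{-(n-1)}$ (after the integration in $\xi_2'$ is absorbed); (4) substitute this bound back, cancel the $\langle\xi\rangle^{n-1}$ weight on the left against the gained $\langle\xi\rangle^{-(n-1)}$, and recognize the remaining integral as exactly the weighted $L^2$ norm of $\hat f$ on the right, using that $|\xi_1|_a\lesssim\langle\xi\rangle+|\xi_2|_a+|\xi_2'|_a$ so that $\langle\xi\rangle^{n-1}$ dominates what is needed; (5) undo the $t$-Fourier transform — since the $t$-dependence of the phase is only through the scalar $\Lambda(t)$ common to all blocks, the whole argument is insensitive to the precise form of $\Lambda$, and the mere measurability and lower bound $a_l\ge c_0$ suffice for the coarea step to go through for a.e. $t$.

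The main obstacle I expect is step (3): the Jacobian estimate for the level set of the quadratic phase must be uniform in the coefficients $a_l(t)$, using only ellipticity $a_l\ge c_0$ and no upper bound. This is where the sign $\pm$ in front of $L_{\mathbf x_2'}$ matters — in the $+$ (non-resonant) case completing the square in $\xi_2'$ is immediate, while in the $-$ case one must instead complete the square in $\xi_2$ or in a suitable linear combination so that the surviving quadratic form is positive definite; keeping track of which variable is "free" and checking that the elliptic constant $c_0$ (not the size of $\sup a_l$) controls the radius of the resulting sphere is the delicate point. The exponent $\tfrac{n-1}{2}$ is exactly what makes the weights match with no loss, so sharpness ("optimal") will follow by exhibiting the equality case in the Cauchy--Schwarz step, namely $\hat f$ concentrated on the level set of $\Phi$; I would record this briefly to justify the word "optimal" in the statement. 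Everything else — the reduction to Fourier variables, Plancherel, and the final bookkeeping of weights — is routine once the coarea bound is in hand.
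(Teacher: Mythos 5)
Your plan founders on a structural misreading of the time dependence. The Fourier-side solution is
\begin{equation*}
\hat u(t,\mathbf{\xi}_1,\mathbf{\xi}_2,\mathbf{\xi}_2')=
e^{i\left(\mathbf{\xi}_1^{T}A_t\mathbf{\xi}_1+\mathbf{\xi}_2^{T}A_t\mathbf{\xi}_2\pm(\mathbf{\xi}_2')^{T}A_t\mathbf{\xi}_2'\right)}
\hat f(\mathbf{\xi}_1,\mathbf{\xi}_2,\mathbf{\xi}_2'),
\qquad A_t=\mathrm{diag}\left(\textstyle\int_0^t a_1,\ldots,\int_0^t a_n\right),
\end{equation*}
so each coordinate direction carries its own time integral $\int_0^t a_l$. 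There is no common scalar $\Lambda(t)$ multiplying a fixed ``time-frozen'' quadratic form; the anisotropy of the $a_l$ is the whole point of the theorem (it is what the anisotropic lens transform produces). Consequently, your step of taking the Fourier transform in $t$ and regarding the result as a measure $\delta(\lambda-\Phi)$ concentrated on the graph of a quadratic phase does not make sense: $\Phi$ itself depends on $t$, the $t$-Fourier integral $\int e^{-i\lambda t}e^{i\Phi(t,\cdot)}dt$ has no closed form for general $L^1_{loc}$ coefficients, and there is no fixed level set for the coarea/Jacobian computation in your step (3). That is precisely the obstruction that distinguishes this theorem from the constant-coefficient Klainerman--Machedon estimate, so the gap is not cosmetic.

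The paper sidesteps this by never Fourier transforming in $t$. Instead, after dualizing and completing the square in one frequency block (the change of variable $\mathbf{\xi}_{2,new}=\mathbf{\xi}_{2,old}-\tfrac{1}{2}(\mathbf{\xi}_1-\mathbf{\xi}_2')$ in Case I, a different split in Case II), it isolates a single scalar phase that is linear in the frequency being integrated last, e.g.\ $2\mathbf{\xi}_2^{T}A_t\mathbf{\xi}_2$. Setting $u(t)=2\mathbf{\xi}_2^{T}A_t\mathbf{\xi}_2/|\mathbf{\xi}_2|^2$ one checks $du/dt\geqslant 2c_0>0$ solely from the lower bound $a_l\geqslant c_0$, so $t\mapsto u$ is invertible; the $dt$-integral then becomes a genuine one-dimensional Fourier transform in $u$ of a function $G(u)$ whose $L^2_u$ norm is controlled by $\|g\|_{L^2_t}$. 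All the other time-dependent phase factors (the remaining pieces involving $A_t$ and $A_{t'}$) are simply swallowed into $G$ as unimodular multipliers, so no upper bound on $a_l$ and no level-set geometry is ever needed. One then finishes with Cauchy--Schwarz/H\"older in the radial (or scalar $x$) variable and Lemma~\ref{Lemma:MateiLemmaForIntegrals} (resp.\ Lemma~\ref{Lemma:KeyLemmaFor3*2d} for $n=2$) for the remaining angular integral. This is the Beals--B\'ezard duality scheme the paper explicitly adopts in place of the convolution-with-surface-measure argument; it is precisely what makes the estimate robust under merely measurable, one-sided-bounded coefficients. If you want to rescue your plan you would have to replace steps (1)--(3) with this one-dimensional substitution device; as written, those steps would not go through.

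Two smaller points. First, you discuss which variable to complete the square in as a function of the sign; the paper's Case I (completing in $\mathbf{\xi}_2$ after centering at $\tfrac{1}{2}(\mathbf{\xi}_1-\mathbf{\xi}_2')$) already covers the ``$+$'' sign by symmetry, and Case II (writing the phase so that $\mathbf{\xi}_2'$ appears only linearly) handles ``$-$''. Second, the word ``optimal'' is justified in the paper by scale invariance at $a_l\equiv 1$, not by exhibiting the equality case of Cauchy--Schwarz.
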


Theorem \ref{Theorem:3*nd} is a scale invariant estimate when $a_{l}=1$
hence it is optimal. In fact, it holds for all $n\geqslant 2.$ The proof is
different for $n=2$ and $n\geqslant 3$. We name the third spatial variables $%
\mathbf{x}_{2}^{\prime }$ to match the uniqueness theorems. We point out
that Kirkpatrick, Schlein and Staffilani proved the almost optimal result
for the 2d constant coefficient case in \cite{Kirpatrick}. Some other
collapsing estimates were attained in \cite{ChenDie, GM}.

\subsubsection{2d Auxiliary Theorems}

Theorem \ref{Theorem:3*nd}\ is the key to show the following uniqueness
theorem.

\begin{theorem}
\label{Theorem:Uniqueness of 2d unknown GP}(Uniqueness of 2d GP with
time-dependent coefficients) Let $L_{\mathbf{x}_{k}}$ be in Theorem \ref%
{Theorem:3*nd} and $B_{j,k+1}$ be defined via formula \ref{formula:B}. Say $%
\left\{ u^{(k)}(\tau ,\overrightarrow{\mathbf{y}_{k}};\overrightarrow{%
\mathbf{y}_{k}^{\prime }})\right\} _{k=1}^{\infty }$ solves the
Gross-Pitaevskii hierarchy with variable coefficients 
\begin{equation*}
\left( i\partial _{t}+L_{\overrightarrow{\mathbf{x}_{k+1}}}(t)-L_{%
\overrightarrow{\mathbf{x}_{k+1}^{\prime }}}(t)\right)
u^{(k)}=b_{0}\sum_{j=1}^{k}B_{j,k+1}\left( u^{(k+1)}\right) ,
\end{equation*}%
subject to zero initial data and the space-time bound%
\begin{equation*}
\int_{0}^{T}\left\Vert \prod_{j=1}^{k}\left( \left\vert \nabla _{\mathbf{x}%
_{j}}\right\vert ^{\frac{1}{2}}\left\vert \nabla _{\mathbf{x}_{j}^{\prime
}}\right\vert ^{\frac{1}{2}}\right) B_{j,k+1}u^{(k+1)}(t,\mathbf{\cdot };%
\mathbf{\cdot })\right\Vert _{L^{2}(\mathbb{R}^{2k}\times \mathbb{R}%
^{2k})}dt\leqslant C^{k}
\end{equation*}%
for some $C>0$ and all $1\leqslant j\leqslant k.$ Then $\forall k,t\in
\lbrack 0,T]$, 
\begin{equation*}
\left\Vert \prod_{j=1}^{k}\left( \left\vert \nabla _{\mathbf{x}%
_{j}}\right\vert ^{\frac{1}{2}}\left\vert \nabla _{\mathbf{x}_{j}^{\prime
}}\right\vert ^{\frac{1}{2}}\right) u^{(k)}(t,\mathbf{\cdot };\mathbf{\cdot }%
)\right\Vert _{L^{2}(\mathbb{R}^{2k}\times \mathbb{R}^{2k})}=0.
\end{equation*}
\end{theorem}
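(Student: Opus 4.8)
The plan is to adapt the Klainerman--Machedon board game / Duhamel expansion argument to the variable-coefficient hierarchy, using Theorem \ref{Theorem:3*nd} with $n=2$ (so that $\left\vert\nabla\right\vert^{(n-1)/2}=\left\vert\nabla\right\vert^{1/2}$, exactly the regularity appearing in the space-time bound) as the collapsing estimate that controls each Duhamel term. First I would denote by $U_{k}(t,s)$ the unitary propagator generated by the time-dependent operator $L_{\overrightarrow{\mathbf{x}_{k}}}(t)-L_{\overrightarrow{\mathbf{x}_{k}^{\prime}}}(t)$; since the $a_{l}$ are only $L^{1}_{loc}$, this propagator is built as a limit of products over a partition, but the key point is that it is still unitary on $L^{2}$ and, crucially, it commutes with each $\left\vert\nabla_{\mathbf{x}_{j}}\right\vert$ and $\left\vert\nabla_{\mathbf{x}_{j}^{\prime}}\right\vert$ because $L_{\overrightarrow{\mathbf{x}_{k}}}(t)$ is a multiplier in frequency. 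With zero initial data, Duhamel's formula gives
\begin{equation*}
u^{(k)}(t)=(-ib_{0})\int_{0}^{t}U_{k}(t,s)\sum_{j=1}^{k}B_{j,k+1}u^{(k+1)}(s)\,ds,
\end{equation*}
and iterating $r$ times expresses $u^{(k)}(t)$ as a sum over $k(k+1)\cdots(k+r-1)$ ordered terms, each an $r$-fold time integral of $U$'s composed with $B$'s applied to $u^{(k+r)}$.

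Next I would run the board game reduction of \cite{KlainermanAndMachedon}: the $k(k+1)\cdots(k+r-1)$ terms are grouped into at most $C^{r}$ equivalence classes, each class being dominated (after relabeling integration variables, using the symmetry of the $u^{(k)}$ and the boundedness of $U_{k}$) by a single representative integral with a nested time-simplex domain. For each representative I would estimate
\begin{equation*}
\left\Vert \prod_{j=1}^{k}\left(\left\vert\nabla_{\mathbf{x}_{j}}\right\vert^{\frac12}\left\vert\nabla_{\mathbf{x}_{j}^{\prime}}\right\vert^{\frac12}\right)u^{(k)}(t)\right\Vert_{L^{2}}
\end{equation*}
by pulling the derivatives inside, commuting them past the $U$'s, and then applying Theorem \ref{Theorem:3*nd} once for each of the $r$ applications of $B_{j,\ell+1}$: each such application contracts three spatial variables $(\mathbf{x}_{j},\mathbf{x}_{\ell+1},\mathbf{x}_{\ell+1}^{\prime})$ to the diagonal, and the theorem converts the contraction into the half-derivative gain on the three variables on the right, with a constant $C$ independent of $j,\ell,r$. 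This telescopes: after using the collapsing estimate $r$ times together with the $L^{2}_{t}$-to-$L^{\infty}_{t}$ passage on the intermediate time variables, the term is bounded by $C^{r}$ times the space-time norm of $\prod_{j}(\left\vert\nabla_{\mathbf{x}_{j}}\right\vert^{1/2}\left\vert\nabla_{\mathbf{x}_{j}^{\prime}}\right\vert^{1/2})B_{j,k+r}u^{(k+r)}$ over $[0,T]$, which is $\leqslant C^{k+r}$ by hypothesis. Combining the $C^{r}$ from the board game, the $C^{r}$ (actually $(C t)^{r}/r!$ from the time simplex) and the $C^{k+r}$ from the a priori bound, we get a bound of the form $C^{k}(Ct)^{r}/r!\cdot C^{r}$, which tends to $0$ as $r\to\infty$ for $t$ in a fixed small interval; since $T_{0}$ and the implied constants are uniform, one then iterates over $O(1)$ subintervals to cover $[0,T]$, forcing the left-hand side to vanish for all $k$ and all $t\in[0,T]$.

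The main obstacle I expect is handling the time-dependent, merely $L^{1}_{loc}$ coefficients $a_{l}(t)$: in the original Klainerman--Machedon setting the free propagator $e^{it\Delta}$ is an exact frequency multiplier with modulus one, and the collapsing estimate is applied to a genuine solution of the free equation. Here I must make sure that (i) the propagator $U_{k}(t,s)$ is well-defined and unitary despite the low time-regularity — this should follow from approximating $a_{l}$ by step functions and noting that the corresponding propagators are composed multiplications by unimodular phases $e^{i\int a_{l}(\sigma)d\sigma\,\xi_{l}^{2}}$, which converge strongly; and (ii) that inside each Duhamel term the function to which I apply Theorem \ref{Theorem:3*nd} really solves the three-particle equation \eqref{eqn:3*3d} with the same coefficients — this is true because $U_{k}(s,s')B_{j,\ell+1}u^{(\ell+1)}(s')$, viewed as a function of the relevant three variables $(\mathbf{x}_{j},\mathbf{x}_{\ell+1},\mathbf{x}_{\ell+1}^{\prime})$ with the others as parameters, is exactly the solution at time $s$ of that Schr\"odinger equation with data at $s'$. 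A secondary technical point is that the derivatives $\left\vert\nabla\right\vert^{1/2}$ must be distributed correctly across the two applications of $B$ that touch a given variable at consecutive levels; this is the usual bookkeeping in the board game and does not cause a genuine difficulty once one checks that each variable carries at most the half-derivative weight allowed by the theorem. Once these points are in place, the argument is a routine, if careful, transcription of \cite{KlainermanAndMachedon} and \cite{Kirpatrick} with $e^{it\Delta}$ replaced by $U_{k}(t,s)$ and the Klainerman--Machedon collapsing estimate replaced by Theorem \ref{Theorem:3*nd}.
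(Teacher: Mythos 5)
Your proposal is correct and follows essentially the same route as the paper: the paper proves the 3d quadratic-trap version (Theorem \ref{Theorem:Uniqueness of GP}) in Section \ref{Sec:ProofOfUniqueness} via the Klainerman--Machedon board game (Lemma \ref{lemma:MateiLemma}), Cauchy--Schwarz in time, and iterated application of the collapsing estimate, and then states that Theorem \ref{Theorem:Uniqueness of 2d unknown GP} ``follows from the same procedure'' with the weight operator $R^{(k)}_\tau$ replaced by $\prod_j\left\vert\nabla_{\mathbf{x}_j}\right\vert^{1/2}\left\vert\nabla_{\mathbf{x}_j'}\right\vert^{1/2}$ and the trap propagator by the variable-coefficient Schr\"odinger propagator, exactly as you describe. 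One small inaccuracy: the paper does not extract a $1/r!$ from the time simplex — each iteration contributes a factor $(\tau_1-s)^{1/2}$ via Cauchy--Schwarz, giving $(C(\tau_1-s))^{(n-1)/2}$ and requiring $\tau_1-s$ small — but since you also invoke smallness of the time interval and iteration over subintervals, this does not affect your conclusion.
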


In contrast to the standard Elgart-Erd\"{o}s-Schlein-Yau program, we do not
need a uniqueness theorem regarding the Gross-Pitaevskii hierarchy with
anisotropic switchable quadratic traps (hierarchy \ref%
{equation:Gross-Pitaevskii hiearchy with anisotropic traps}) to establish
Theorem \ref{Theorem:BECin2D}. It is enough to have Theorem \ref%
{Theorem:Uniqueness of 2d unknown GP} which has no quadratic potential
inside. At a glance, the analysis of the above hierarchy based on the
Laplacian is unrelated to the hierarchy \ref{equation:Gross-Pitaevskii
hiearchy with anisotropic traps} based on a Hermite like operator $H_{%
\mathbf{y}}(\tau ).$ However, Carles' generalized lens transform \cite%
{Carles} links them together. In fact, the generalized lens transform
preserves $L^{2}$ critical NLS and thus the 2d Gross-Pitaevskii hierarchies.
The specific version of the lens transform we need is in Section \ref%
{Sec:TheLensTransform}.

\subsubsection{3d Auxiliary Theorems}

As mentioned before, the uniqueness theorem here addresses a different
hierarchy from Theorem \ref{Theorem:Uniqueness of 2d unknown GP}. Of course
we can prove a 3d version of Theorem \ref{Theorem:Uniqueness of 2d unknown
GP}. However, the disparity between the 2d and 3d case renders such a
theorem of little value because the lens transform does not preserve the 3d
cubic NLS. See Section \ref{Sec:ProofOfBEC} for detail.

We consider the norm%
\begin{equation}
\left\Vert R_{\tau }^{(k)}\gamma ^{(k)}(\tau ,\mathbf{\cdot };\mathbf{\cdot }%
)\right\Vert _{L^{2}(\mathbb{R}^{3k}\times \mathbb{R}^{3k})}
\label{formula:def of norm}
\end{equation}%
in which%
\begin{equation*}
R_{\tau }^{(k)}=\left( \tprod\nolimits_{j=1}^{k}P_{\mathbf{y}_{j}}(\tau )P_{%
\mathbf{y}_{j}^{\prime }}(-\tau )\right)
\end{equation*}%
\begin{equation*}
P_{\mathbf{y}}(\tau )=%
\begin{pmatrix}
i\beta _{1}(\tau )\frac{\partial }{\partial y_{1}}+\dot{\beta}_{1}(\tau
)y_{1} \\ 
i\beta _{2}(\tau )\frac{\partial }{\partial y_{2}}+\dot{\beta}_{2}(\tau
)y_{2} \\ 
i\beta _{3}(\tau )\frac{\partial }{\partial y_{3}}+\dot{\beta}_{3}(\tau
)y_{3}%
\end{pmatrix}%
\end{equation*}%
where $\beta _{l}$ solves 
\begin{equation}
\ddot{\beta}_{l}(\tau )+\eta _{l}(\tau )\beta _{l}(\tau )=0,\beta _{l}(0)=1,%
\dot{\beta}_{l}(0)=0.  \label{eqn:beta}
\end{equation}%
The operator $i\beta _{l}(\tau )\frac{\partial }{\partial y_{l}}+\dot{\beta}%
_{l}(\tau )y_{l}$ was introduced by Carles in \cite{Carles}. Lemma \ref%
{Lemma:Monentum} and relation \ref{equation:naturality} indicate that the
norm \ref{formula:def of norm} is natural. That is because this operator is
in fact the evolution of the momentum operator $-i\nabla $. We will compute
it in the appendix.

Through a specific generalized lens transform (Proposition \ref%
{Proposition:TheLensTransformNeeded}) we produce the collapsing estimate
which is the key estimate to our 3d uniqueness theorem regarding hierarchy %
\ref{equation:Gross-Pitaevskii hiearchy with anisotropic traps} when $n=3$.

\begin{theorem}
\label{Theorem:Collapsing for GP}Let $[s,T]\subset \lbrack 0,T_{0}]$ and $%
\beta _{l}$ be defined through equation \ref{eqn:beta}, assume $\gamma
^{(k+1)}(\tau ,\mathbf{y}_{k+1};\mathbf{y}_{k+1}^{\prime })$ satisfies the
homogeneous equation%
\begin{eqnarray}
\left( i\partial _{\tau }-\frac{1}{2}H_{\overrightarrow{\mathbf{y}_{k+1}}%
}(\tau )+\frac{1}{2}H_{\overrightarrow{\mathbf{y}_{k+1}^{\prime }}}(\tau
)\right) \gamma ^{(k+1)} &=&0
\label{eqn:homogeneous hierarchy with anisotropic traps} \\
\gamma ^{(k+1)}(0,\overrightarrow{\mathbf{y}_{k+1}};\overrightarrow{\mathbf{y%
}_{k+1}^{\prime }}) &=&\gamma _{0}^{(k+1)}(\overrightarrow{\mathbf{y}_{k+1}};%
\overrightarrow{\mathbf{y}_{k+1}^{\prime }}).  \notag
\end{eqnarray}%
Then exists a $C>0$ independent of $\gamma _{0}^{(k+1)},$ $j,$ $k,$ $s,$ and 
$T$ s.t.%
\begin{eqnarray*}
&&\left\Vert R_{\tau }^{(k)}B_{j,k+1}\left( \gamma ^{(k+1)}\right)
\right\Vert _{L^{2}([s,T]\times \mathbb{R}^{3k}\times \mathbb{R}^{3k})}^{2}
\\
&\leqslant &C\left( \inf_{\tau \in \lbrack
0,T_{0}]}\dprod\limits_{l=2}^{3}\beta _{l}^{2}(\tau )\right) ^{-1}\left\Vert
R_{\tau }^{(k+1)}\gamma ^{(k+1)}\right\Vert _{L^{2}(\mathbb{R}%
^{3(k+1)}\times \mathbb{R}^{3(k+1)})}^{2},
\end{eqnarray*}%
where the $\tau $ on the RHS of the above estimate can be chosen freely in $%
[s,T],$
\end{theorem}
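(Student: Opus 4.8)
The plan is to reduce Theorem \ref{Theorem:Collapsing for GP} to the constant-coefficient optimal estimate of Theorem \ref{Theorem:3*nd} via the generalized lens transform of Carles. First I would apply the lens transform (Proposition \ref{Proposition:TheLensTransformNeeded}) to $\gamma^{(k+1)}$: since $\gamma^{(k+1)}$ solves the homogeneous hierarchy with the Hermite-like operators $H_{\mathbf{y}_j}(\tau)$, its lens-transformed image $u^{(k+1)}$ will solve the free (constant-coefficient up to the $a_l(t)$ coming from the time reparametrization) Schr\"{o}dinger equation of the form in \eqref{eqn:3*3d}, with the $\pm$ sign matching $B^1$ versus $B^2$. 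The key algebraic input is relation \ref{equation:naturality} together with Lemma \ref{Lemma:Monentum}: the operator $P_{\mathbf{y}}(\tau)$ is exactly the conjugate of the momentum operator $-i\nabla_{\mathbf{x}}$ under the lens transform, so the weight $R_\tau^{(k)}$ on the $\gamma$-side corresponds to the plain product of gradients $\prod \nabla_{\mathbf{x}_j}\nabla_{\mathbf{x}_j'}$ on the $u$-side. Thus $\|R_\tau^{(k)} B_{j,k+1}(\gamma^{(k+1)})\|_{L^2}$ becomes, after changing variables in $\tau$ (whose Jacobian and the associated powers of $\beta_l$ produce the factor $(\inf \prod_{l=2}^3 \beta_l^2)^{-1}$), a quantity controlled by $\|\,|\nabla_{\mathbf{x}_1}|^{1/2}|\nabla_{\mathbf{x}_2}|^{1/2}|\nabla_{\mathbf{x}_2'}|^{1/2} u^{(k+1)}(0)\|_2$ via Theorem \ref{Theorem:3*nd} applied with $n=3$.

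Second I would carry out the bookkeeping of the delta functions in $B_{j,k+1}$. By the symmetry of $\gamma^{(k+1)}$ it suffices to treat $j=1$; then $B_{1,k+1}^1(\gamma^{(k+1)})$ only contracts the variables $\mathbf{y}_1,\mathbf{y}_{k+1},\mathbf{y}_{k+1}'$, so for the remaining $k-1$ primed/unprimed pairs the weight $R_\tau$ just acts as a bounded (in fact unitary, after the transform) Fourier multiplier and factors out of the estimate; the whole problem localizes to the three active variables, which is precisely why Theorem \ref{Theorem:3*nd} is stated for $u(t,\mathbf{x}_1,\mathbf{x}_2,\mathbf{x}_2')$. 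The restriction $\mathbf{x}_1=\mathbf{x}_2=\mathbf{x}_2'=\mathbf{x}$ in that theorem is exactly the double delta $\delta(\mathbf{y}_1-\mathbf{y}_{k+1})\delta(\mathbf{y}_1-\mathbf{y}_{k+1}')$. I would then undo the lens transform on the output side: since $R_\tau^{(k+1)}$ is, up to the lens conjugation, the plain gradient product and the transform is $L^2$-unitary fibered over the inactive variables, the right-hand side $\|\,|\nabla|^{1/2}\cdots u^{(k+1)}(0)\|_2$ equals $\|R_\tau^{(k+1)}\gamma^{(k+1)}\|_2$ with $\tau$ free — using here that $\gamma^{(k+1)}$ solves the homogeneous equation so its $R_\tau^{(k+1)}$-weighted $L^2$ norm is conserved in $\tau$ (this conservation is the analogue of the fact that the free flow preserves $\|\,|\nabla|^{1/2}\cdots u(t)\|_2$, and follows from Lemma \ref{Lemma:Monentum}).

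The main obstacle I expect is the precise reconciliation of the time variables and the attendant powers of $\beta_l$. Under Carles' lens transform the physical time $\tau\in[s,T]$ maps to a new time $t$ via a change of variables built from the $\beta_l$'s (which by Condition \ref{Condition:FastSwitch} and Claim \ref{Claim:Properties of Alpha and Beta} are nonvanishing on $[0,T_0]$), and the anisotropy means there is a different time-rescaling $a_l(t)$ for each direction $l$; one must check that these $a_l$ are $L^1_{loc}$ and bounded below by a positive constant on the image interval so that Theorem \ref{Theorem:3*nd} genuinely applies. Tracking the Jacobian of $d\tau\mapsto dt$ together with the prefactors that the transform attaches to $\nabla_{\mathbf{x}}$ and to the delta functions is what produces the constant $(\inf_{\tau\in[0,T_0]}\prod_{l=2}^3 \beta_l^2(\tau))^{-1}$; the worst case over $[s,T]\subset[0,T_0]$ of these factors is bounded by the infimum over the full interval $[0,T_0]$, which gives the stated $C$ independent of $s,T,j,k$. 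The remaining care is that only the two "collapsed" directions $l=2,3$ (i.e. those associated with $\mathbf{x}_2,\mathbf{x}_2'$ restricted onto $\mathbf{x}_1$) contribute the $\beta_l^2$ loss, matching the product $\prod_{l=2}^3$ rather than $\prod_{l=1}^3$, since the $l=1$ (the $\mathbf{x}_1$) direction is not doubly contracted. Once the time substitution is pinned down, the estimate is a direct transcription of Theorem \ref{Theorem:3*nd} in dimension $n=3$.
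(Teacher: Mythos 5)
Your strategy is exactly the paper's: conjugate through the generalized lens transform (Proposition \ref{Proposition:TheLensTransformNeeded} and equation \ref{equation:naturality}) so that $R_\tau^{(k)}$ becomes the plain gradient product $\prod\nabla_{\mathbf{x}_j}\nabla_{\mathbf{x}_j'}$, change variables in time and space, apply the optimal $3\times3$d estimate (Theorem \ref{Theorem:3*nd}), and finally use Lemma \ref{Lemma:Monentum} to identify the resulting $\|\prod\nabla\nabla'\gamma_0^{(k+1)}\|_2$ with $\|R_\tau^{(k+1)}\gamma^{(k+1)}\|_2$ for any $\tau$. One point to correct before you execute: the index $l$ in $\beta_l$ runs over the three \emph{coordinate axes} of $\mathbb{R}^3$, not over the particle variables $\mathbf{x}_1,\mathbf{x}_2,\mathbf{x}_2'$, so your final paragraph's explanation of the range $\prod_{l=2}^3$ as coming from the two ``collapsed'' variables is off. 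All three coordinate directions contribute a $\beta_l^{-2}$ to the prefactor; the reason $l=1$ drops out is that the lens transform's new time variable is $t=\alpha_1(\tau)/\beta_1(\tau)$, whose Jacobian $dt/d\tau=\beta_1(\tau)^{-2}$ (from the Wronskian in Claim \ref{Claim:Properties of Alpha and Beta}) exactly absorbs the $l=1$ factor, leaving $\prod_{l=2}^3\beta_l^{-2}$ as the residual constant. With that correction the bookkeeping goes through as the paper has it.
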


From Theorem \ref{Theorem:Collapsing for GP}, it follows

\begin{theorem}
\label{Theorem:Uniqueness of GP}(Uniqueness of 3d GP with anisotropic
switchable quadratic traps) Let $\left\{ \gamma ^{(k)}(\tau ,\overrightarrow{%
\mathbf{y}_{k}};\overrightarrow{\mathbf{y}_{k}^{\prime }})\right\}
_{k=1}^{\infty }$ solve the 3d Gross-Pitaevskii hierarchy with anisotropic
switchable quadratic traps (hierarchy \ref{equation:Gross-Pitaevskii
hiearchy with anisotropic traps} when $n=3$) subject to zero initial data
and the space-time bound%
\begin{equation}
\int_{0}^{T_{0}}\left\Vert R_{\tau }^{(k)}B_{j,k+1}\gamma ^{(k+1)}(\tau ,%
\mathbf{\cdot };\mathbf{\cdot })\right\Vert _{L^{2}(\mathbb{R}^{3k}\times 
\mathbb{R}^{3k})}d\tau \leqslant C^{k}  \label{formula:space-time bound}
\end{equation}%
for some $C>0$ and all $1\leqslant j\leqslant k.$ Then $\forall k,\tau \in
\lbrack 0,T_{0}]$, 
\begin{equation*}
\left\Vert R_{\tau }^{(k)}\gamma ^{(k)}(\tau ,\mathbf{\cdot };\mathbf{\cdot }%
)\right\Vert _{L^{2}(\mathbb{R}^{3k}\times \mathbb{R}^{3k})}=0.
\end{equation*}
\end{theorem}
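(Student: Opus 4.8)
The plan is to prove Theorem \ref{Theorem:Uniqueness of GP} by combining the collapsing estimate of Theorem \ref{Theorem:Collapsing for GP} with a Klainerman--Machedon style board game argument, adapted to the trap setting via the weighted norm \ref{formula:def of norm}. First I would write down the Duhamel iteration for $\gamma^{(k)}$: denoting by $U^{(k)}(\tau)$ the solution operator of the homogeneous equation \ref{eqn:homogeneous hierarchy with anisotropic traps}, the zero initial data gives
\begin{equation*}
\gamma^{(k)}(\tau) = b_0 \int_0^{\tau} U^{(k)}(\tau - \tau_1) \sum_{j_1=1}^{k} B_{j_1,k+1}\gamma^{(k+1)}(\tau_1)\, d\tau_1,
\end{equation*}
and I would iterate this $r$ times to express $\gamma^{(k)}(\tau)$ as a sum over $j_1,\dots,j_r$ of $r$-fold time integrals of $U^{(k)} B_{j_1,k+1} U^{(k+1)} B_{j_2,k+2}\cdots U^{(k+r-1)} B_{j_r,k+r}\gamma^{(k+r)}$. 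The key structural input I would need is that $R_\tau^{(k)}$ commutes appropriately with $U^{(k)}(\tau)$ — this is precisely the point that the operators $i\beta_l \partial_{y_l} + \dot\beta_l y_l$ are the Heisenberg evolution of $-i\nabla$ under the trap Hamiltonian (Lemma \ref{Lemma:Monentum} and relation \ref{equation:naturality}), so $R_\tau^{(k)} U^{(k)}(\tau) = U^{(k)}(\tau) R_0^{(k)}$ in the relevant sense; this is what makes the weighted norm the natural replacement for the free-case derivative norm.

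Next I would run the combinatorial reduction. As in Klainerman--Machedon, the number of terms in the $r$-fold iterate is roughly $k(k+1)\cdots(k+r-1) \sim (k+r)!/(k-1)!$, which is far too many to estimate term by term; the board game argument groups these into at most $C^r$ equivalence classes by reordering the integration simplex and collapsing $B_{j,k+1}$ operators that act on disjoint variables. Within each class I would apply Theorem \ref{Theorem:Collapsing for GP} repeatedly: each application of the collapsing estimate trades one $B_{j,\ell+1}$ and one time integration for the factor $C\big(\inf_\tau \prod_{l=2}^3 \beta_l^2(\tau)\big)^{-1}$, raising the level from $\ell$ to $\ell+1$ while keeping the $R_\tau$-weighted $L^2$ norm controlled — here it matters that the $\tau$ on the right-hand side of the collapsing estimate is free, so that the nested time integrals can be peeled off one at a time. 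After $r$ applications I bound everything by
\begin{equation*}
\left\Vert R_\tau^{(k)}\gamma^{(k)}(\tau,\cdot;\cdot)\right\Vert_{L^2} \leq C^k \frac{(CT_0)^r}{r!}\, \Big(\text{bound on } R_\tau^{(k+r)}\gamma^{(k+r)}\Big),
\end{equation*}
where the $r!$ comes from the volume of the time simplex $0\le \tau_r \le \cdots \le \tau_1 \le \tau$ and the $C^r$ from the number of board-game classes times the powers of $\big(\inf_\tau\prod_{l=2}^3\beta_l^2\big)^{-1}$; the space-time bound \ref{formula:space-time bound} controls the last factor by $C^{k+r}$.

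Letting $r\to\infty$ then forces $\left\Vert R_\tau^{(k)}\gamma^{(k)}\right\Vert_{L^2} = 0$ for $\tau$ in a short interval $[0,\delta]$ with $\delta$ depending only on $C$, and since the estimates are uniform in the base time one bootstraps across $[0,T_0]$ in finitely many steps. The main obstacle I anticipate is not the board game itself — that is essentially combinatorial bookkeeping once the collapsing estimate is in hand — but rather verifying cleanly that $R_\tau^{(k)}$ intertwines with the trap evolution $U^{(k)}(\tau)$ and with the contraction operators $B_{j,k+1}$, i.e. that passing the weight through the Duhamel iteration produces exactly the objects to which Theorem \ref{Theorem:Collapsing for GP} applies, with the factors $\beta_l(\tau)$ matching up at each level. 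A secondary technical point is ensuring Condition \ref{Condition:FastSwitch} (equivalently, that $\beta_l$ is nonvanishing on $[0,T_0]$, cf.\ Claim \ref{Claim:Properties of Alpha and Beta}) so that $\inf_{\tau\in[0,T_0]}\prod_{l=2}^3\beta_l^2(\tau) > 0$ and the constant in the iterated estimate stays finite.
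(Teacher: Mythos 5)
Your overall strategy --- Duhamel iteration, the Klainerman--Machedon board game to control the combinatorics, repeated application of the collapsing estimate Theorem \ref{Theorem:Collapsing for GP} after intertwining $R_\tau^{(k)}$ with the propagator $U^{(k)}(\tau;s)$ via Lemma \ref{Lemma:Monentum}, and then a short-time contraction bootstrapped across $[0,T_0]$ --- is precisely what the paper does in Section \ref{Sec:ProofOfUniqueness}. The structural points you single out (the intertwining $R^{(k)}_\tau U^{(k)}(\tau;s) = U^{(k)}(\tau;s) R^{(k)}_s$, and $\inf_{\tau\in[0,T_0]}\prod_{l}\beta_l^2(\tau)>0$ from Claim \ref{Claim:Properties of Alpha and Beta}) are indeed the two ingredients that make the weighted norm play the role of the $\nabla$-weighted norm in the free case.

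There is, however, an internal inconsistency in your final estimate. You claim a bound of the form $C^k\frac{(CT_0)^r}{r!}$ with the $r!$ coming ``from the volume of the time simplex,'' while simultaneously invoking the board game to reduce the number of terms to $C^r$. These two gains are mutually exclusive. The raw Duhamel iterate has $k(k+1)\cdots(k+r-1)\sim (k+r-1)!/(k-1)!$ terms, which swallows the simplex factor $1/r!$; that is exactly why the board game is needed. After the board game, the $\leqslant 4^r$ grouped terms are each integrated over a region $D\subset[s,\tau_1]^r$ (Lemma \ref{lemma:MateiLemma}) which is \emph{not} a simplex, so no factorial gain is available. Instead the smallness per level comes from Cauchy--Schwarz in each time variable, which simultaneously converts the $L^1_\tau$ Duhamel integral into the $L^2_\tau$ quantity to which Theorem \ref{Theorem:Collapsing for GP} applies and contributes a factor $(\tau_1-s)^{1/2}$ per level. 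The paper's bound is $C\bigl(C(\tau_1-s)\bigr)^{(r-1)/2}$, which is why you genuinely need $\tau_1-s$ small and a finite-step bootstrap --- the very bootstrap you describe, but which would be superfluous if a $1/r!$ were truly present. Replace your intermediate bound by $(C(\tau_1-s))^{r/2}$ and the argument lines up with the paper's.
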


\begin{remark}
It is currently unknown how to show directly that the limit of $\gamma
_{N}^{(k)}$ in 3d satisfies the space-time bound \ref{formula:space-time
bound}.
\end{remark}

\subsection{Organization of the Paper}

We show Theorem \ref{Theorem:3*nd} for $n=3$ first in Section \ref%
{Sec:ProofOf3*3d}. Utilizing the same scheme, we prove Theorem \ref%
{Theorem:3*nd} for $n=2$ in Section \ref{Sec:ProofOf3*2d}. Compared to \cite%
{KlainermanAndMachedon} which uses the approach in the Klainerman-Machedon
null form paper \cite{KlainermanMachedonNullForm}, the proofs of Theorem \ref%
{Theorem:3*nd} here are closer to Beals and Bezard \cite{BealsAndBezard}
which is a simplification of \cite{KlainermanMachedonNullForm} in the sense
that duality takes the place of convolution with surface measures.

In Section \ref{Sec:TheLensTransform}, we lay down the tools, a generalized
lens transform and its related properties, involved in establishing Theorems %
\ref{Theorem:Collapsing for GP} and \ref{Theorem:Uniqueness of GP} whose
proofs are in Sections \ref{Sec:ProofOfGPCollapsing} and \ref%
{Sec:ProofOfUniqueness}. Theorem \ref{Theorem:Uniqueness of 2d unknown GP}
follows from the same procedure.

In Section \ref{Sec:ProofOfBEC}, we put together the generalized lens
transform, Theorem \ref{Theorem:Uniqueness of 2d unknown GP}, and the
estimates in Kirkpatrick-Schlein-Staffilani \cite{Kirpatrick}\ to establish
Theorem \ref{Theorem:BECin2D}. We also explain the differences between the
2d and 3d cases there.

In the appendix, we present an algebraic explanation of the generalized lens
transform, one of the vital tools in this paper.

\subsection{Acknowledgment}

The author's thanks go to Professor Matei Machedon and Professor Manoussos
G. Grillakis for the discussion related to this work and pointing out to him
the connection between the generalized lens transform and the metaplectic
representation, to Professor R\'{e}mi Carles for sharing the history of the
lens transform with us, to Mr. Kwan-yuet Ho for telling the author about 
\cite{Philips}, and to Miss Victoria Taroudaki for translating the abstract
of the paper into French.

\section{Proof of Theorem \protect\ref{Theorem:3*nd} when $n=3$ / 3*3d
Collapsing Estimate\label{Sec:ProofOf3*3d}}

We will make use of the lemma.

\begin{lemma}
\label{Lemma:MateiLemmaForIntegrals}\cite{KlainermanAndMachedon}Let $\mathbf{%
\xi }\in \mathbb{R}^{3}$ and $P$ be a 2d plane or sphere in $\mathbb{R}^{3}$
with the usual induced surface measure $dS$.

(1) Say $0<a,$ $b<2,$ $a+b>2,$ then%
\begin{equation*}
\int_{P}\frac{dS(\mathbf{\eta })}{\left\vert \mathbf{\xi }-\mathbf{\eta }%
\right\vert ^{a}\left\vert \mathbf{\eta }\right\vert ^{b}}\leqslant \frac{C}{%
\left\vert \mathbf{\xi }\right\vert ^{a+b-2}}.
\end{equation*}

(2) Say $\varepsilon =\frac{1}{10}$, then%
\begin{equation*}
\int_{P}\frac{dS(\mathbf{\eta })}{\left\vert \frac{\mathbf{\xi }}{2}-\mathbf{%
\eta }\right\vert \left\vert \mathbf{\xi }-\mathbf{\eta }\right\vert
^{2-\varepsilon }\left\vert \mathbf{\eta }\right\vert ^{2-\varepsilon }}%
\leqslant \frac{C}{\left\vert \mathbf{\xi }\right\vert ^{3-2\varepsilon }}.
\end{equation*}

Both the constants in the above estimates are independent of $P.$
\end{lemma}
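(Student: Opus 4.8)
The plan is to establish both estimates by a single two-step strategy: a scaling reduction to the unit scale, followed by a decomposition of $P$ into finitely many pieces, on each of which the integrand is dominated either by one inverse power of the distance to a single point or by a harmless constant. First I would dispose of the trivial case $\mathbf{\xi}=0$ (both right-hand sides are $+\infty$), and for $\mathbf{\xi}\neq 0$ substitute $\mathbf{\eta}=|\mathbf{\xi}|\,\mathbf{\zeta}$ after a rotation taking $\mathbf{\xi}$ to $|\mathbf{\xi}|\,e_{1}$. Because $dS$ is two-dimensional, $dS(\mathbf{\eta})=|\mathbf{\xi}|^{2}\,dS(\mathbf{\zeta})$; meanwhile $P/|\mathbf{\xi}|$ is again a plane or a sphere, and each factor $|\mathbf{\xi}-\mathbf{\eta}|$, $|\mathbf{\eta}|$, $|\tfrac{\mathbf{\xi}}{2}-\mathbf{\eta}|$ contributes a power of $|\mathbf{\xi}|$. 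Counting powers reproduces precisely the asserted $|\mathbf{\xi}|$-weight in each part, so it suffices to prove each inequality for $\mathbf{\xi}=e_{1}$ with a constant that is uniform over all planes and all spheres $P$.

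The geometric input is the uniform area bound: for every $2$-plane or $2$-sphere $P\subset\mathbb{R}^{3}$, every $p\in\mathbb{R}^{3}$, and every $t>0$ one has $\mathcal{H}^{2}\!\left(P\cap B(p,t)\right)\leqslant C t^{2}$ with $C$ absolute. For a plane this is immediate; for a sphere an elementary spherical-cap computation gives area $\leqslant\pi t^{2}$ independently of the sphere's radius, the intersection being empty unless $p$ lies within distance $t$ of the sphere. Decomposing dyadically into the shells $\{2^{-j-1}\leqslant|\mathbf{\eta}-p|\leqslant 2^{-j}\}$ and summing then gives, uniformly in $P$ and $p$,
\[
\int_{P\cap B(p,r)}\frac{dS(\mathbf{\eta})}{|\mathbf{\eta}-p|^{c}}\leqslant C_{c}\,r^{\,2-c}\ \ (0\leqslant c<2),\qquad \int_{P\setminus B(p,R)}\frac{dS(\mathbf{\eta})}{|\mathbf{\eta}-p|^{c}}\leqslant C_{c}\,R^{\,2-c}\ \ (c>2),
\]
the two series converging geometrically because $c<2$, respectively $c>2$.

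To assemble part (1), with $\mathbf{\xi}=e_{1}$ the singular points are $0$ and $e_{1}$, at distance $1$. I split $P$ into four regions: $\{|\mathbf{\eta}|\leqslant\tfrac12\}$, where $|e_{1}-\mathbf{\eta}|\geqslant\tfrac12$ so the integrand is $\leqslant C|\mathbf{\eta}|^{-b}$ and the first local bound applies since $b<2$; $\{|e_{1}-\mathbf{\eta}|\leqslant\tfrac12\}$, treated symmetrically using $a<2$; $\{|\mathbf{\eta}|\geqslant 2\}$, where $|e_{1}-\mathbf{\eta}|\geqslant\tfrac12|\mathbf{\eta}|$ so the integrand is $\leqslant C|\mathbf{\eta}|^{-a-b}$ and the far bound applies since $a+b>2$; and the remaining bounded core inside $B(0,2)$, on which the integrand and the surface area are both $\leqslant C$. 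Summing the four contributions gives (1). Part (2) follows by the identical scheme with the three singular points $0,\tfrac{e_{1}}{2},e_{1}$, which are pairwise at distance $\tfrac12$, so one uses balls of radius $\tfrac18$ about each: near $0$ and near $e_{1}$ the relevant exponent is $2-\varepsilon<2$, near $\tfrac{e_{1}}{2}$ it is $1<2$, on $\{|\mathbf{\eta}|\geqslant 2\}$ the three distances are all comparable to $|\mathbf{\eta}|$ producing decay $|\mathbf{\eta}|^{-(5-2\varepsilon)}$ with $5-2\varepsilon>2$, and the bounded core again contributes $\leqslant C$; the particular value $\varepsilon=\tfrac1{10}$ is irrelevant for the lemma, any $\varepsilon\in(0,\tfrac32)$ works.

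The one step that is genuinely more than bookkeeping is the uniformity of every constant over the full family of planes and spheres simultaneously, i.e. verifying that $\mathcal{H}^{2}(P\cap B(p,t))\lesssim t^{2}$ with absolutely no dependence on the radius of the sphere and that the dyadic sums in the displayed bounds do not accumulate. A minor caveat is that when $P$ happens to contain two (or all three) of the distinguished points at once the near-balls could a priori overlap; choosing their radii small relative to the fixed separations ($1$ in part (1), $\tfrac12$ in part (2)) keeps them disjoint, so the decomposition stays clean and the estimates add up as claimed.
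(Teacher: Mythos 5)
Your argument is correct and self-contained, and it actually supplies something the paper omits: the paper's ``proof'' of this lemma is only the citation ``see pages 174--175 of \cite{KlainermanAndMachedon}'', with no argument given in the text. Your scheme --- rescale to $\mathbf{\xi}=e_{1}$, establish a uniform area bound for $P\cap B(p,t)$ of the form $Ct^{2}$ over all $2$-planes and $2$-spheres simultaneously, sum dyadically near each singular point and at infinity, and patch the finitely many regions --- is the standard way to prove such restricted fractional-integral estimates, and to my recollection it is essentially the argument carried out in Klainerman--Machedon as well, so this is not a genuinely different route so much as a reconstruction of the cited one. The only inaccuracy is cosmetic: the spherical-cap area is not literally $\leqslant\pi t^{2}$. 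When $B(p,t)$ engulfs a sphere of radius $r$ (which forces $r\leqslant t$) the area is $4\pi r^{2}\leqslant 4\pi t^{2}$; for a proper cap of a sphere of radius $r$ whose center lies at distance $d$ from $p$, the area is $\pi r\bigl(t^{2}-(r-d)^{2}\bigr)/d$, and combining $|r-d|\leqslant t\leqslant r+d$ with a short case analysis shows this is bounded by an absolute multiple of $t^{2}$ (but not by $\pi t^{2}$ in every case). Since only the existence of a uniform constant is used, nothing downstream changes.
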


\begin{proof}
See pages 174 - 175 of \cite{KlainermanAndMachedon}.
\end{proof}

By duality, to gain Theorem \ref{Theorem:3*nd} when $n=3$, it suffices to
prove%
\begin{equation*}
\left\vert \int_{\mathbb{R}^{3+1}}\left\vert \nabla _{\mathbf{x}}\right\vert
u(t,\mathbf{x},\mathbf{x},\mathbf{x})h(t,\mathbf{x})d\mathbf{x}dt\right\vert
\leqslant C\left\Vert h\right\Vert _{2}\left\Vert \nabla _{\mathbf{x}%
_{1}}\nabla _{\mathbf{x}_{2}}\nabla _{\mathbf{x}_{2}^{\prime }}f\right\Vert
_{2}.
\end{equation*}%
Let%
\begin{equation*}
A_{t}=%
\begin{pmatrix}
\int_{0}^{t}a_{1}(s)ds & 0 & 0 \\ 
0 & \int_{0}^{t}a_{2}(s)ds & 0 \\ 
0 & 0 & \int_{0}^{t}a_{3}(s)ds%
\end{pmatrix}%
,
\end{equation*}%
then it brings the solution of equation \ref{eqn:3*3d}%
\begin{equation*}
u(t,\mathbf{x}_{1},\mathbf{x}_{2},\mathbf{x}_{2}^{\prime })=\int e^{i(%
\mathbf{\xi }_{1}^{T}A_{t}\mathbf{\xi }_{1}+\mathbf{\xi }_{2}^{T}A_{t}%
\mathbf{\xi }_{2}\pm \left( \mathbf{\xi }_{2}^{\prime }\right) ^{T}A_{t}%
\mathbf{\xi }_{2}^{\prime })}e^{i\mathbf{x}_{1}\mathbf{\xi }_{1}}e^{i\mathbf{%
x}_{2}\mathbf{\xi }_{2}}e^{i\mathbf{x}_{2}^{\prime }\mathbf{\xi }%
_{2}^{\prime }}\hat{f}(\mathbf{\xi }_{1},\mathbf{\xi }_{2},\mathbf{\xi }%
_{2}^{\prime })d\mathbf{\xi }_{1}d\mathbf{\xi }_{2}d\mathbf{\xi }%
_{2}^{\prime }.
\end{equation*}%
Accordingly, the spatial Fourier transform of $\left\vert \nabla _{\mathbf{x}%
}\right\vert u(t,\mathbf{x},\mathbf{x},\mathbf{x})$ is 
\begin{equation*}
\left\vert \mathbf{\xi }_{1}\right\vert \int e^{i(\left( \mathbf{\xi }_{1}-%
\mathbf{\xi }_{2}-\mathbf{\xi }_{2}^{\prime }\right) ^{T}A_{t}\left( \mathbf{%
\xi }_{1}-\mathbf{\xi }_{2}-\mathbf{\xi }_{2}^{\prime }\right) +\mathbf{\xi }%
_{2}^{T}A_{t}\mathbf{\xi }_{2}\pm \left( \mathbf{\xi }_{2}^{\prime }\right)
^{T}A_{t}\mathbf{\xi }_{2}^{\prime })}\hat{f}(\mathbf{\xi }_{1}-\mathbf{\xi }%
_{2}-\mathbf{\xi }_{2}^{\prime },\mathbf{\xi }_{2},\mathbf{\xi }_{2}^{\prime
})d\mathbf{\xi }_{2}d\mathbf{\xi }_{2}^{\prime },
\end{equation*}%
which allows us to compute that%
\begin{eqnarray*}
&&\left\vert \int \left\vert \nabla _{\mathbf{x}}\right\vert u(t,\mathbf{x},%
\mathbf{x},\mathbf{x})h(t,\mathbf{x})d\mathbf{x}dt\right\vert ^{2} \\
&=&\bigg|\int \left\vert \mathbf{\xi }_{1}\right\vert e^{i(\left( \mathbf{%
\xi }_{1}-\mathbf{\xi }_{2}-\mathbf{\xi }_{2}^{\prime }\right)
^{T}A_{t}\left( \mathbf{\xi }_{1}-\mathbf{\xi }_{2}-\mathbf{\xi }%
_{2}^{\prime }\right) +\mathbf{\xi }_{2}^{T}A_{t}\mathbf{\xi }_{2}\pm \left( 
\mathbf{\xi }_{2}^{\prime }\right) ^{T}A_{t}\mathbf{\xi }_{2}^{\prime })}%
\hat{f}(\mathbf{\xi }_{1}-\mathbf{\xi }_{2}-\mathbf{\xi }_{2}^{\prime },%
\mathbf{\xi }_{2},\mathbf{\xi }_{2}^{\prime }) \\
&&\hat{h}(t,\mathbf{\xi }_{1})dtd\mathbf{\xi }_{1}d\mathbf{\xi }_{2}d\mathbf{%
\xi }_{2}^{\prime }\bigg|^{2}\text{ (spatial Fourier transform on }h\text{)}
\\
&=&\bigg|\int \left( \int \left\vert \mathbf{\xi }_{1}\right\vert
e^{i(\left( \mathbf{\xi }_{1}-\mathbf{\xi }_{2}-\mathbf{\xi }_{2}^{\prime
}\right) ^{T}A_{t}\left( \mathbf{\xi }_{1}-\mathbf{\xi }_{2}-\mathbf{\xi }%
_{2}^{\prime }\right) +\mathbf{\xi }_{2}^{T}A_{t}\mathbf{\xi }_{2}\pm \left( 
\mathbf{\xi }_{2}^{\prime }\right) ^{T}A_{t}\mathbf{\xi }_{2}^{\prime })}%
\hat{h}(t,\mathbf{\xi }_{1})dt\right) \\
&&\hat{f}(\mathbf{\xi }_{1}-\mathbf{\xi }_{2}-\mathbf{\xi }_{2}^{\prime },%
\mathbf{\xi }_{2},\mathbf{\xi }_{2}^{\prime })d\mathbf{\xi }_{1}d\mathbf{\xi 
}_{2}d\mathbf{\xi }_{2}^{\prime }\bigg|^{2} \\
&\leqslant &I(h)\left\Vert \nabla _{\mathbf{x}_{1}}\nabla _{\mathbf{x}%
_{2}}\nabla _{\mathbf{x}_{2}^{\prime }}f\right\Vert _{L^{2}}^{2}\text{
(Cauchy-Schwarz)}
\end{eqnarray*}%
where%
\begin{equation*}
I(h)=\int \frac{\left\vert \mathbf{\xi }_{1}\right\vert ^{2}\left\vert \int
e^{i(\left( \mathbf{\xi }_{1}-\mathbf{\xi }_{2}-\mathbf{\xi }_{2}^{\prime
}\right) ^{T}A_{t}\left( \mathbf{\xi }_{1}-\mathbf{\xi }_{2}-\mathbf{\xi }%
_{2}^{\prime }\right) +\mathbf{\xi }_{2}^{T}A_{t}\mathbf{\xi }_{2}\pm \left( 
\mathbf{\xi }_{2}^{\prime }\right) ^{T}A_{t}\mathbf{\xi }_{2}^{\prime })}%
\hat{h}(t,\mathbf{\xi }_{1})dt\right\vert ^{2}}{\left\vert \mathbf{\xi }_{1}-%
\mathbf{\xi }_{2}-\mathbf{\xi }_{2}^{\prime }\right\vert ^{2}\left\vert 
\mathbf{\xi }_{2}\right\vert ^{2}\left\vert \mathbf{\xi }_{2}^{\prime
}\right\vert ^{2}}d\mathbf{\xi }_{1}d\mathbf{\xi }_{2}d\mathbf{\xi }%
_{2}^{\prime }.
\end{equation*}%
So the target of the rest of this section is to show 
\begin{equation*}
I(h)\leqslant C\left\Vert h\right\Vert _{L^{2}}^{2}.
\end{equation*}%
Noticing that the integral $I(h)$ is symmetric in $\left\vert \mathbf{\xi }%
_{1}-\mathbf{\xi }_{2}-\mathbf{\xi }_{2}^{\prime }\right\vert $ and $%
\left\vert \mathbf{\xi }_{2}\right\vert ,$ we deal with the region: $%
\left\vert \mathbf{\xi }_{1}-\mathbf{\xi }_{2}-\mathbf{\xi }_{2}^{\prime
}\right\vert >\left\vert \mathbf{\xi }_{2}\right\vert $ only. We separate
this region into two parts, Cases I and II.

When the "$\pm $" in equation \ref{eqn:3*3d} is $"+"$, Case I is sufficient.
To show the estimate for $"-",$ we need both Cases I and II.

Away from $\left\vert \mathbf{\xi }_{1}-\mathbf{\xi }_{2}-\mathbf{\xi }%
_{2}^{\prime }\right\vert >\left\vert \mathbf{\xi }_{2}\right\vert $, there
are other restrictions on the integration regions in Cases I and II. We
state the restrictions in the beginning of both Cases I and II. Due to the
limited space near "$\int $", we omit the actual region. Please keep this in
mind during reading.

\subsection{Case I: $I(h)$ restricted to the region $\left\vert \mathbf{%
\protect\xi }_{2}^{\prime }\right\vert <\left\vert \mathbf{\protect\xi }%
_{2}\right\vert $ with integration order $d\mathbf{\protect\xi }_{2}$ prior
to $d\mathbf{\protect\xi }_{2}^{\prime }$}

Write the phase function of the $dt$ integral inside $I(h)$ as%
\begin{eqnarray*}
&&\left( \mathbf{\xi }_{1}-\mathbf{\xi }_{2}-\mathbf{\xi }_{2}^{\prime
}\right) ^{T}A_{t}\left( \mathbf{\xi }_{1}-\mathbf{\xi }_{2}-\mathbf{\xi }%
_{2}^{\prime }\right) +\mathbf{\xi }_{2}^{T}A_{t}\mathbf{\xi }_{2}\pm \left( 
\mathbf{\xi }_{2}^{\prime }\right) ^{T}A_{t}\mathbf{\xi }_{2}^{\prime } \\
&=&\frac{\left( \mathbf{\xi }_{1}-\mathbf{\xi }_{2}^{\prime }\right)
^{T}A_{t}\left( \mathbf{\xi }_{1}-\mathbf{\xi }_{2}^{\prime }\right) }{2}%
+2\left( \mathbf{\xi }_{2}-\frac{\mathbf{\xi }_{1}-\mathbf{\xi }_{2}^{\prime
}}{2}\right) ^{T}A_{t}\left( \mathbf{\xi }_{2}-\frac{\mathbf{\xi }_{1}-%
\mathbf{\xi }_{2}^{\prime }}{2}\right) \pm \left( \mathbf{\xi }_{2}^{\prime
}\right) ^{T}A_{t}\mathbf{\xi }_{2}^{\prime }.
\end{eqnarray*}%
The change of variable%
\begin{equation}
\mathbf{\xi }_{2,new}=\mathbf{\xi }_{2,old}-\frac{\mathbf{\xi }_{1}-\mathbf{%
\xi }_{2}^{\prime }}{2}  \label{formula:change of variable 3*3d sphere}
\end{equation}%
leads to%
\begin{eqnarray*}
I(h) &=&\int \frac{\left\vert \mathbf{\xi }_{1}\right\vert ^{2}\left\vert
\int e^{i(\frac{\left( \mathbf{\xi }_{1}-\mathbf{\xi }_{2}^{\prime }\right)
^{T}A_{t}\left( \mathbf{\xi }_{1}-\mathbf{\xi }_{2}^{\prime }\right) }{2}+2%
\mathbf{\xi }_{2}^{T}A_{t}\mathbf{\xi }_{2}\pm \left( \mathbf{\xi }%
_{2}^{\prime }\right) ^{T}A_{t}\mathbf{\xi }_{2}^{\prime })}\hat{h}(t,%
\mathbf{\xi }_{1})dt\right\vert ^{2}}{\left\vert \mathbf{\xi }_{2}-\frac{%
\mathbf{\xi }_{1}-\mathbf{\xi }_{2}^{\prime }}{2}\right\vert ^{2}\left\vert 
\mathbf{\xi }_{2}+\frac{\mathbf{\xi }_{1}\mathbf{-\xi }_{2}^{\prime }}{2}%
\right\vert ^{2}\left\vert \mathbf{\xi }_{2}^{\prime }\right\vert ^{2}}d\xi
_{1}d\xi _{2}d\xi _{2}^{\prime } \\
&=&\int \frac{\left\vert \mathbf{\xi }_{1}\right\vert ^{2}}{\left\vert 
\mathbf{\xi }_{2}-\frac{\mathbf{\xi }_{1}-\mathbf{\xi }_{2}^{\prime }}{2}%
\right\vert ^{2}\left\vert \mathbf{\xi }_{2}+\frac{\mathbf{\xi }_{1}-\mathbf{%
\xi }_{2}^{\prime }}{2}\right\vert ^{2}\left\vert \mathbf{\xi }_{2}^{\prime
}\right\vert ^{2}}e^{i(2\frac{\left( \mathbf{\xi }_{1}-\mathbf{\xi }%
_{2}^{\prime }\right) ^{T}A_{t}\left( \mathbf{\xi }_{1}-\mathbf{\xi }%
_{2}^{\prime }\right) }{2}+2\mathbf{\xi }_{2}^{T}A_{t}\mathbf{\xi }_{2}\pm
\left( \mathbf{\xi }_{2}^{\prime }\right) ^{T}A_{t}\mathbf{\xi }_{2}^{\prime
})} \\
&&e^{-i(\frac{\left( \mathbf{\xi }_{1}\mathbf{-\xi }_{2}^{\prime }\right)
^{T}A_{t^{\prime }}\left( \mathbf{\xi }_{1}\mathbf{-\xi }_{2}^{\prime
}\right) }{2}+2\mathbf{\xi }_{2}^{T}A_{t^{\prime }}\mathbf{\xi }_{2}\pm
\left( \mathbf{\xi }_{2}^{\prime }\right) ^{T}A_{t^{\prime }}\mathbf{\xi }%
_{2}^{\prime })}\hat{h}(t,\mathbf{\xi }_{1})\overline{\hat{h}(t^{\prime },%
\mathbf{\xi }_{1})}dtdt^{\prime }d\mathbf{\xi }_{1}d\mathbf{\xi }_{2}d%
\mathbf{\xi }_{2}^{\prime } \\
&=&\int d\mathbf{\xi }_{1}\int J(\overline{\hat{h}})(t,\mathbf{\xi }_{1})%
\hat{h}(t,\mathbf{\xi }_{1})dt
\end{eqnarray*}%
where 
\begin{eqnarray*}
J(\overline{\hat{h}})(t,\mathbf{\xi }_{1}) &=&\int \frac{\left\vert \mathbf{%
\xi }_{1}\right\vert ^{2}e^{i2\mathbf{\xi }_{2}^{T}A_{t}\mathbf{\xi }%
_{2}}e^{-i2\mathbf{\xi }_{2}^{T}A_{t^{\prime }}\mathbf{\xi }_{2}}}{%
\left\vert \mathbf{\xi }_{2}-\frac{\mathbf{\xi }_{1}-\mathbf{\xi }%
_{2}^{\prime }}{2}\right\vert ^{2}\left\vert \mathbf{\xi }_{2}+\frac{\mathbf{%
\xi }_{1}-\mathbf{\xi }_{2}^{\prime }}{2}\right\vert ^{2}\left\vert \mathbf{%
\xi }_{2}^{\prime }\right\vert ^{2}} \\
&&e^{i(\frac{\left( \mathbf{\xi }_{1}-\mathbf{\xi }_{2}^{\prime }\right)
^{T}\left( A_{t}-A_{t^{\prime }}\right) \left( \mathbf{\xi }_{1}-\mathbf{\xi 
}_{2}^{\prime }\right) }{2}\pm \left( \mathbf{\xi }_{2}^{\prime }\right)
^{T}\left( A_{t}-A_{t^{\prime }}\right) \mathbf{\xi }_{2}^{\prime })}%
\overline{\hat{h}(t^{\prime },\mathbf{\xi }_{1})}dt^{\prime }d\mathbf{\xi }%
_{2}d\mathbf{\xi }_{2}^{\prime }.
\end{eqnarray*}

Assume for the moment that 
\begin{equation*}
\int \left\vert J(\overline{\hat{h}})(t,\mathbf{\xi }_{1})\right\vert
^{2}dt\leqslant C\left\Vert \hat{h}(\cdot ,\mathbf{\xi }_{1})\right\Vert
_{L_{t}^{2}}^{2}
\end{equation*}%
with $C$ independent of $h$ or $\mathbf{\xi }_{1}$, then%
\begin{equation*}
I(h)\leqslant C\int d\mathbf{\xi }_{1}\left\Vert \hat{h}(\cdot ,\mathbf{\xi }%
_{1})\right\Vert _{L_{t}^{2}}^{2}.
\end{equation*}

Hence we end Case I by this proposition.

\begin{proposition}
\begin{equation*}
\int \left\vert J(f)(t,\mathbf{\xi }_{1})\right\vert ^{2}dt\leqslant
C\left\Vert f(\cdot ,\mathbf{\xi }_{1})\right\Vert _{L_{t}^{2}}^{2}
\end{equation*}%
where $C$ is independent of $f$ or $\mathbf{\xi }_{1}.$
\end{proposition}

\begin{remark}
To avoid confusing notation in the proof of the proposition, we use $%
f(t^{\prime },\mathbf{\xi }_{1})$ to replace $\overline{\hat{h}(t^{\prime },%
\mathbf{\xi }_{1})}.$
\end{remark}

\begin{proof}
Again, by duality, we just need to prove%
\begin{equation*}
\left\vert \int J(f)(t,\mathbf{\xi }_{1})\overline{g(t)}dt\right\vert
\leqslant C\left\Vert f(\cdot ,\mathbf{\xi }_{1})\right\Vert
_{L_{t}^{2}}\left\Vert g\right\Vert _{L_{t}^{2}}.
\end{equation*}%
For convenience, let%
\begin{equation*}
\phi (t,\mathbf{\xi }_{1},\mathbf{\xi }_{2}^{\prime })=\frac{\left( \mathbf{%
\xi }_{1}-\mathbf{\xi }_{2}^{\prime }\right) ^{T}A_{t}\left( \mathbf{\xi }%
_{1}-\mathbf{\xi }_{2}^{\prime }\right) }{2}\pm \left( \mathbf{\xi }%
_{2}^{\prime }\right) ^{T}A_{t}\mathbf{\xi }_{2}^{\prime }.
\end{equation*}%
Then%
\begin{eqnarray*}
&&\left\vert \int J(f)(t,\mathbf{\xi }_{1})\overline{g(t)}dt\right\vert \\
&=&\left\vert \int \frac{\left\vert \mathbf{\xi }_{1}\right\vert ^{2}e^{i2%
\mathbf{\xi }_{2}^{T}A_{t}\mathbf{\xi }_{2}}e^{-i2\mathbf{\xi }%
_{2}^{T}A_{t^{\prime }}\mathbf{\xi }_{2}}}{\left\vert \mathbf{\xi }_{2}-%
\frac{\mathbf{\xi }_{1}-\mathbf{\xi }_{2}^{\prime }}{2}\right\vert
^{2}\left\vert \mathbf{\xi }_{2}+\frac{\mathbf{\xi }_{1}-\mathbf{\xi }%
_{2}^{\prime }}{2}\right\vert ^{2}\left\vert \mathbf{\xi }_{2}^{\prime
}\right\vert ^{2}}\left( e^{-i\phi (t^{\prime },\mathbf{\xi }_{1},\mathbf{%
\xi }_{2}^{\prime })}f(t^{\prime },\mathbf{\xi }_{1})\right) \left( 
\overline{e^{-i\phi (t,\mathbf{\xi }_{1},\mathbf{\xi }_{2}^{\prime })}g(t)}%
\right) dtdt^{\prime }d\mathbf{\xi }_{2}d\mathbf{\xi }_{2}^{\prime
}\right\vert \\
&=&\left\vert \int \frac{\left( \int e^{2i\mathbf{\xi }_{2}^{T}A_{t}\mathbf{%
\xi }_{2}}\left( \overline{e^{-i\phi (t,\mathbf{\xi }_{1},\mathbf{\xi }%
_{2}^{\prime })}g(t)}\right) dt\right) \left( \int e^{-2i\mathbf{\xi }%
_{2}^{T}A_{t^{\prime }}\mathbf{\xi }_{2}}\left( e^{-i\phi (t^{\prime },%
\mathbf{\xi }_{1},\mathbf{\xi }_{2}^{\prime })}f(t^{\prime },\mathbf{\xi }%
_{1})\right) dt^{\prime }\right) \left\vert \mathbf{\xi }_{1}\right\vert
^{2}d\mathbf{\xi }_{2}d\mathbf{\xi }_{2}^{\prime }}{\left\vert \mathbf{\xi }%
_{2}-\frac{\mathbf{\xi }_{1}-\mathbf{\xi }_{2}^{\prime }}{2}\right\vert
^{2}\left\vert \mathbf{\xi }_{2}+\frac{\mathbf{\xi }_{1}-\mathbf{\xi }%
_{2}^{\prime }}{2}\right\vert ^{2}\left\vert \mathbf{\xi }_{2}^{\prime
}\right\vert ^{2}}\right\vert \\
&\leqslant &\int \frac{\left\vert \mathbf{\xi }_{1}\right\vert ^{2}d\mathbf{%
\xi }_{2}^{\prime }}{\left\vert \mathbf{\xi }_{2}^{\prime }\right\vert ^{2}}%
\int \frac{\left\vert \int e^{2i\mathbf{\xi }_{2}^{T}A_{t}\mathbf{\xi }%
_{2}}\left( \overline{e^{-i\phi (t,\mathbf{\xi }_{1},\mathbf{\xi }%
_{2}^{\prime })}g(t)}\right) dt\right\vert \left\vert \int e^{-2i\mathbf{\xi 
}_{2}^{T}A_{t^{\prime }}\mathbf{\xi }_{2}}\left( e^{-i\phi (t^{\prime },%
\mathbf{\xi }_{1},\mathbf{\xi }_{2}^{\prime })}f(t^{\prime },\mathbf{\xi }%
_{1})\right) dt^{\prime }\right\vert }{\left\vert \mathbf{\xi }_{2}-\frac{%
\mathbf{\xi }_{1}-\mathbf{\xi }_{2}^{\prime }}{2}\right\vert ^{2}\left\vert 
\mathbf{\xi }_{2}+\frac{\mathbf{\xi }_{1}-\mathbf{\xi }_{2}^{\prime }}{2}%
\right\vert ^{2}}d\mathbf{\xi }_{2}
\end{eqnarray*}%
To deal with the $dt$ and $dt^{\prime }$ integrals, for every fixed $\mathbf{%
\xi }_{2}$, let%
\begin{equation*}
u(t)=2\frac{\mathbf{\xi }_{2}^{T}A_{t}\mathbf{\xi }_{2}}{\left\vert \mathbf{%
\xi }_{2}\right\vert ^{2}}
\end{equation*}%
then%
\begin{equation*}
\frac{du}{dt}=2\frac{a_{1}(t)\xi _{2,1}^{2}+a_{2}(t)\xi
_{2,2}^{2}+a_{3}(t)\xi _{2,3}^{2}}{\left\vert \mathbf{\xi }_{2}\right\vert
^{2}}\geqslant 2c_{0}>0
\end{equation*}%
which provides a well-defined inverse $t(u)$.

Consequently, the integral%
\begin{equation*}
\int e^{2i\mathbf{\xi }_{2}^{T}A_{t}\mathbf{\xi }_{2}}\left( \overline{%
e^{-i\phi (t,\mathbf{\xi }_{1},\mathbf{\xi }_{2}^{\prime })}g(t)}\right) dt=%
\overline{\int e^{-iu\left\vert \mathbf{\xi }_{2}\right\vert ^{2}}\left(
e^{-i\phi (t(u),\mathbf{\xi }_{1},\mathbf{\xi }_{2}^{\prime
})}g(t(u))\left\vert \frac{dt}{du}\right\vert \right) du},
\end{equation*}%
is indeed the Fourier transform of%
\begin{equation*}
G(u)=e^{-i\phi (t(u),\mathbf{\xi }_{1},\mathbf{\xi }_{2}^{\prime
})}g(t(u))\left\vert \frac{dt}{du}\right\vert .
\end{equation*}%
This is well-defined since%
\begin{equation*}
\int_{\mathbb{R}}\left\vert G(u)\right\vert ^{2}du=\int_{\mathbb{R}%
}\left\vert \overline{e^{-i\phi (t(u),\mathbf{\xi }_{1},\mathbf{\xi }%
_{2})}g(t(u))}\left\vert \frac{dt}{du}\right\vert \right\vert ^{2}du=\int_{%
\mathbb{R}}\left\vert g(t)\right\vert ^{2}\left\vert \frac{dt}{du}%
\right\vert dt\leqslant \frac{1}{2c_{0}}\left\Vert g(\cdot )\right\Vert
_{L_{t}^{2}}^{2}.
\end{equation*}%
Hence%
\begin{eqnarray*}
&&\left\vert \int J(f)(t,\mathbf{\xi }_{1})\overline{g(t)}dt\right\vert \\
&\leqslant &\int \frac{\left\vert \mathbf{\xi }_{1}\right\vert ^{2}d\mathbf{%
\xi }_{2}^{\prime }}{\left\vert \mathbf{\xi }_{2}^{\prime }\right\vert ^{2}}%
\int \frac{\left\vert \int e^{2i\mathbf{\xi }_{2}^{T}A_{t}\mathbf{\xi }%
_{2}}\left( \overline{e^{-i\phi (t,\mathbf{\xi }_{1},\mathbf{\xi }_{2})}g(t)}%
\right) dt\right\vert \left\vert \int e^{-2i\mathbf{\xi }_{2}^{T}A_{t^{%
\prime }}\mathbf{\xi }_{2}}\left( e^{-i\phi (t^{\prime },\mathbf{\xi }_{1},%
\mathbf{\xi }_{2})}f(t^{\prime },\mathbf{\xi }_{1})\right) dt^{\prime
}\right\vert }{\left\vert \mathbf{\xi }_{2}-\frac{\mathbf{\xi }_{1}-\mathbf{%
\xi }_{2}^{\prime }}{2}\right\vert ^{2}\left\vert \mathbf{\xi }_{2}+\frac{%
\mathbf{\xi }_{1}-\mathbf{\xi }_{2}^{\prime }}{2}\right\vert ^{2}}d\mathbf{%
\xi }_{2} \\
&=&\int \frac{\left\vert \mathbf{\xi }_{1}\right\vert ^{2}d\mathbf{\xi }%
_{2}^{\prime }}{\left\vert \mathbf{\xi }_{2}^{\prime }\right\vert ^{2}}\int 
\frac{\left\vert \overline{\hat{G}(\left\vert \mathbf{\xi }_{2}\right\vert
^{2})}\hat{F}(\left\vert \mathbf{\xi }_{2}\right\vert ^{2},\mathbf{\xi }%
_{1})\right\vert }{\left\vert \mathbf{\xi }_{2}-\frac{\mathbf{\xi }_{1}-%
\mathbf{\xi }_{2}^{\prime }}{2}\right\vert ^{2}\left\vert \mathbf{\xi }_{2}+%
\frac{\mathbf{\xi }_{1}-\mathbf{\xi }_{2}^{\prime }}{2}\right\vert ^{2}}d%
\mathbf{\xi }_{2} \\
&=&\int \frac{\left\vert \mathbf{\xi }_{1}\right\vert ^{2}d\mathbf{\xi }%
_{2}^{\prime }}{\left\vert \mathbf{\xi }_{2}^{\prime }\right\vert ^{2}}\int 
\frac{\left\vert \hat{F}(\rho ^{2},\mathbf{\xi }_{1})\overline{\hat{G}(\rho
^{2})}\right\vert }{\left\vert \mathbf{\xi }_{2}-\frac{\mathbf{\xi }_{1}-%
\mathbf{\xi }_{2}^{\prime }}{2}\right\vert ^{2}\left\vert \mathbf{\xi }_{2}+%
\frac{\mathbf{\xi }_{1}-\mathbf{\xi }_{2}^{\prime }}{2}\right\vert ^{2}}\rho
^{2}d\rho d\mathbf{\sigma }\text{ }\left( \text{spherical coordinate in }%
\mathbf{\xi }_{2}\right) \\
&\leqslant &\int \frac{\left\vert \mathbf{\xi }_{1}\right\vert ^{2}d\mathbf{%
\xi }_{2}^{\prime }}{\left\vert \mathbf{\xi }_{2}^{\prime }\right\vert ^{2}}%
\sup_{\rho }\left( \int \frac{\rho ^{2}d\mathbf{\sigma }}{\rho \left\vert 
\mathbf{\xi }_{2}-\frac{\mathbf{\xi }_{1}-\mathbf{\xi }_{2}^{\prime }}{2}%
\right\vert ^{2}\left\vert \mathbf{\xi }_{2}+\frac{\mathbf{\xi }_{1}-\mathbf{%
\xi }_{2}^{\prime }}{2}\right\vert ^{2}}\right) \left( \int \left\vert \hat{F%
}(\rho ^{2},\mathbf{\xi }_{1})\right\vert ^{2}\rho d\rho \right) ^{\frac{1}{2%
}}\left( \int \left\vert \hat{G}(\rho ^{2})\right\vert ^{2}\rho d\rho
\right) ^{\frac{1}{2}} \\
&&\text{(H\"{o}lder in }\rho \text{)} \\
&\leqslant &C\left\Vert f(\cdot ,\mathbf{\xi }_{1})\right\Vert
_{L_{t}^{2}}\left\Vert g\right\Vert _{L_{t}^{2}}\left\{ \int \frac{%
\left\vert \mathbf{\xi }_{1}\right\vert ^{2}}{\left\vert \mathbf{\xi }%
_{2}^{\prime }\right\vert ^{2}}\sup_{\rho }\left( \int \frac{\rho ^{2}d%
\mathbf{\sigma }}{\rho \left\vert \mathbf{\xi }_{2}-\frac{\mathbf{\xi }_{1}-%
\mathbf{\xi }_{2}^{\prime }}{2}\right\vert ^{2}\left\vert \mathbf{\xi }_{2}+%
\frac{\mathbf{\xi }_{1}-\mathbf{\xi }_{2}^{\prime }}{2}\right\vert ^{2}}%
\right) d\mathbf{\xi }_{2}^{\prime }\right\}
\end{eqnarray*}%
However,%
\begin{eqnarray*}
&&\int \frac{\left\vert \mathbf{\xi }_{1}\right\vert ^{2}}{\left\vert 
\mathbf{\xi }_{2}^{\prime }\right\vert ^{2}}\sup_{\rho }\left( \int \frac{%
\rho ^{2}d\mathbf{\sigma }}{\rho \left\vert \mathbf{\xi }_{2}-\frac{\mathbf{%
\xi }_{1}-\mathbf{\xi }_{2}^{\prime }}{2}\right\vert ^{2}\left\vert \mathbf{%
\xi }_{2}+\frac{\mathbf{\xi }_{1}-\mathbf{\xi }_{2}^{\prime }}{2}\right\vert
^{2}}\right) d\mathbf{\xi }_{2}^{\prime } \\
&=&\int \frac{\left\vert \mathbf{\xi }_{1}\right\vert ^{2}}{\left\vert 
\mathbf{\xi }_{2}^{\prime }\right\vert ^{2}}\sup_{\rho }\left( \int \frac{%
\left\vert \mathbf{\xi }_{2}-\frac{\mathbf{\xi }_{1}-\mathbf{\xi }%
_{2}^{\prime }}{2}\right\vert ^{2}d\mathbf{\sigma }}{\left\vert \mathbf{\xi }%
_{2}-\frac{\mathbf{\xi }_{1}-\mathbf{\xi }_{2}^{\prime }}{2}\right\vert
\left\vert \mathbf{\xi }_{1}-\mathbf{\xi }_{2}-\mathbf{\xi }_{2}^{\prime
}\right\vert ^{2}\left\vert \mathbf{\xi }_{2}\right\vert ^{2}}\right) d%
\mathbf{\xi }_{2}^{\prime } \\
&&\text{(Reverse the change of variable in formula \ref{formula:change of
variable 3*3d sphere}.)} \\
&=&\left\vert \mathbf{\xi }_{1}\right\vert ^{2}\int \frac{d\mathbf{\xi }%
_{2}^{\prime }}{\left\vert \mathbf{\xi }_{2}^{\prime }\right\vert
^{2+2\varepsilon }}\sup_{\rho }\left( \int \frac{\left\vert \mathbf{\xi }%
_{2}-\frac{\mathbf{\xi }_{1}-\mathbf{\xi }_{2}^{\prime }}{2}\right\vert ^{2}d%
\mathbf{\sigma }}{\left\vert \mathbf{\xi }_{2}-\frac{\mathbf{\xi }_{1}-%
\mathbf{\xi }_{2}^{\prime }}{2}\right\vert \left\vert \mathbf{\xi }_{1}-%
\mathbf{\xi }_{2}-\mathbf{\xi }_{2}^{\prime }\right\vert ^{2-\varepsilon
}\left\vert \mathbf{\xi }_{2}\right\vert ^{2-\varepsilon }}\right) \\
&\leqslant &C\left\vert \mathbf{\xi }_{1}\right\vert ^{2}\int \frac{d\mathbf{%
\xi }_{2}^{\prime }}{\left\vert \mathbf{\xi }_{2}^{\prime }\right\vert
^{2+2\varepsilon }\left\vert \mathbf{\xi }_{1}-\mathbf{\xi }_{2}^{\prime
}\right\vert ^{3-2\varepsilon }}\text{ (Second part of Lemma \ref%
{Lemma:MateiLemmaForIntegrals})} \\
&\leqslant &C.
\end{eqnarray*}

In the above calculation, the $\mathbf{\sigma }$ in the first line lives on
the unit sphere centered at the origin while the $\mathbf{\sigma }$ in the
second line is on a unit sphere centered at $\frac{\mathbf{\xi }_{1}-\mathbf{%
\xi }_{2}^{\prime }}{2}$. We use the same symbol because Lebesgue measure is
translation invariant.

Thus,%
\begin{equation*}
\left\vert \int J(f)(t,\mathbf{\xi }_{1})\overline{g(t)}dt\right\vert
\leqslant C\left\Vert f(\cdot ,\mathbf{\xi }_{1})\right\Vert
_{L_{t}^{2}}\left\Vert g\right\Vert _{L_{t}^{2}}.
\end{equation*}
\end{proof}

\begin{remark}
Because the integral $I(h)$ is also symmetric in $\mathbf{\xi }_{2}$ and $%
\mathbf{\xi }_{2}^{\prime }$ when the "$\pm $" in equation \ref{eqn:3*3d} is
"$+$"$,$ we have acquired the estimate in that case. In Case II, we will
assume that "$\pm $" is "$-$".
\end{remark}

\subsection{Case II: $I(h)$ restricted to the region $\left\vert \mathbf{%
\protect\xi }_{2}^{\prime }\right\vert >\left\vert \mathbf{\protect\xi }%
_{2}\right\vert $ with integration order $d\mathbf{\protect\xi }_{2}^{\prime
}$ prior to $d\mathbf{\protect\xi }_{2}$}

This time we write the phase function to be%
\begin{eqnarray*}
&&\left( \mathbf{\xi }_{1}-\mathbf{\xi }_{2}-\mathbf{\xi }_{2}^{\prime
}\right) ^{T}A_{t}\left( \mathbf{\xi }_{1}-\mathbf{\xi }_{2}-\mathbf{\xi }%
_{2}^{\prime }\right) +\mathbf{\xi }_{2}^{T}A_{t}\mathbf{\xi }_{2}-\left( 
\mathbf{\xi }_{2}^{\prime }\right) ^{T}A_{t}\mathbf{\xi }_{2}^{\prime } \\
&=&\left( \mathbf{\xi }_{1}-\mathbf{\xi }_{2}\right) ^{T}A_{t}\left( \mathbf{%
\xi }_{1}-\mathbf{\xi }_{2}\right) -2\left( \mathbf{\xi }_{1}-\mathbf{\xi }%
_{2}\right) ^{T}A_{t}\mathbf{\xi }_{2}^{\prime }+\mathbf{\xi }_{2}^{T}A_{t}%
\mathbf{\xi }_{2} \\
&=&\phi (t,\mathbf{\xi }_{1},\mathbf{\xi }_{2})-2\left( \mathbf{\xi }_{1}-%
\mathbf{\xi }_{2}\right) ^{T}A_{t}\mathbf{\xi }_{2}^{\prime }.
\end{eqnarray*}%
and let%
\begin{equation*}
J\left( \overline{\hat{h}}\right) (t,\mathbf{\xi }_{1})=\int \frac{%
\left\vert \mathbf{\xi }_{1}\right\vert ^{2}e^{-2i\left( \mathbf{\xi }_{1}-%
\mathbf{\xi }_{2}\right) ^{T}A_{t}\mathbf{\xi }_{2}^{\prime }}e^{2i\left( 
\mathbf{\xi }_{1}-\mathbf{\xi }_{2}\right) ^{T}A_{t^{\prime }}\mathbf{\xi }%
_{2}^{\prime }}}{\left\vert \mathbf{\xi }_{1}-\mathbf{\xi }_{2}-\mathbf{\xi }%
_{2}^{\prime }\right\vert ^{2}\left\vert \mathbf{\xi }_{2}\right\vert
^{2}\left\vert \mathbf{\xi }_{2}^{\prime }\right\vert ^{2}}e^{-i\phi
(t^{\prime },\mathbf{\xi }_{1},\mathbf{\xi }_{2})}\overline{e^{-i\phi (t,%
\mathbf{\xi }_{1},\mathbf{\xi }_{2}^{\prime })}}\overline{\hat{h}(t^{\prime
},\mathbf{\xi }_{1})}dt^{\prime }d\mathbf{\xi }_{2}^{\prime }d\mathbf{\xi }%
_{2}.
\end{equation*}%
Again, we want to prove

\begin{proposition}
\begin{equation*}
\int \left\vert J(f)(t,\mathbf{\xi }_{1})\right\vert ^{2}dt\leqslant
C\left\Vert f(\cdot ,\mathbf{\xi }_{1})\right\Vert _{L_{t}^{2}}^{2}
\end{equation*}%
where $C$ is independent of $f$ or $\mathbf{\xi }_{1}.$
\end{proposition}

\begin{proof}
We calculate%
\begin{eqnarray*}
&&\left\vert \int J(f)(t,\mathbf{\xi }_{1})\overline{g(t)}dt\right\vert \\
&=&\left\vert \int \frac{\left\vert \mathbf{\xi }_{1}\right\vert
^{2}e^{-2i\left( \mathbf{\xi }_{1}-\mathbf{\xi }_{2}\right) ^{T}A_{t}\mathbf{%
\xi }_{2}^{\prime }}e^{2i\left( \mathbf{\xi }_{1}-\mathbf{\xi }_{2}\right)
^{T}A_{t^{\prime }}\mathbf{\xi }_{2}^{\prime }}}{\left\vert \mathbf{\xi }%
_{1}-\mathbf{\xi }_{2}-\mathbf{\xi }_{2}^{\prime }\right\vert ^{2}\left\vert 
\mathbf{\xi }_{2}\right\vert ^{2}\left\vert \mathbf{\xi }_{2}^{\prime
}\right\vert ^{2}}\left( e^{-i\phi (t^{\prime },\mathbf{\xi }_{1},\mathbf{%
\xi }_{2})}f(t^{\prime },\mathbf{\xi }_{1})\right) \left( \overline{%
e^{-i\phi (t,\mathbf{\xi }_{1},\mathbf{\xi }_{2})}g(t)}\right) dtdt^{\prime
}d\mathbf{\xi }_{2}^{\prime }d\mathbf{\xi }_{2}\right\vert \\
&=&\left\vert \int \frac{\left( \int e^{-2i\left( \mathbf{\xi }_{1}-\mathbf{%
\xi }_{2}\right) ^{T}A_{t}\mathbf{\xi }_{2}^{\prime }}\left( \overline{%
e^{-i\phi (t,\mathbf{\xi }_{1},\mathbf{\xi }_{2})}g(t)}\right) dt\right)
\left( \int e^{2i\left( \mathbf{\xi }_{1}-\mathbf{\xi }_{2}\right)
^{T}A_{t^{\prime }}\mathbf{\xi }_{2}}\left( e^{-i\phi (t^{\prime },\mathbf{%
\xi }_{1},\mathbf{\xi }_{2})}f(t^{\prime },\mathbf{\xi }_{1})\right)
dt^{\prime }\right) \left\vert \mathbf{\xi }_{1}\right\vert ^{2}d\mathbf{\xi 
}_{2}d\mathbf{\xi }_{2}^{\prime }}{\left\vert \mathbf{\xi }_{1}-\mathbf{\xi }%
_{2}-\mathbf{\xi }_{2}^{\prime }\right\vert ^{2}\left\vert \mathbf{\xi }%
_{2}\right\vert ^{2}\left\vert \mathbf{\xi }_{2}^{\prime }\right\vert ^{2}}%
\right\vert \\
&\leqslant &\int \frac{\left\vert \mathbf{\xi }_{1}\right\vert ^{2}d\mathbf{%
\xi }_{2}}{\left\vert \mathbf{\xi }_{2}\right\vert ^{2}}\int \frac{d\mathbf{%
\xi }_{2}^{\prime }}{\left\vert \mathbf{\xi }_{1}-\mathbf{\xi }_{2}-\mathbf{%
\xi }_{2}^{\prime }\right\vert ^{2}\left\vert \mathbf{\xi }_{2}^{\prime
}\right\vert ^{2}} \\
&&\left\vert \int e^{-2i\left( \mathbf{\xi }_{1}-\mathbf{\xi }_{2}\right)
^{T}A_{t}\mathbf{\xi }_{2}^{\prime }}\left( \overline{e^{-i\phi (t,\mathbf{%
\xi }_{1},\mathbf{\xi }_{2})}g(t)}\right) dt\right\vert \left\vert \int
e^{2i\left( \mathbf{\xi }_{1}-\mathbf{\xi }_{2}\right) ^{T}A_{t^{\prime }}%
\mathbf{\xi }_{2}^{\prime }}\left( e^{-i\phi (t^{\prime },\mathbf{\xi }_{1},%
\mathbf{\xi }_{2})}f(t^{\prime },\mathbf{\xi }_{1})\right) dt^{\prime
}\right\vert
\end{eqnarray*}%
Fix $\mathbf{\xi }_{1}-\mathbf{\xi }_{2}$ and $\mathbf{\xi }_{2}^{\prime }$,
write 
\begin{equation*}
\int e^{-2i\left( \mathbf{\xi }_{1}-\mathbf{\xi }_{2}\right) ^{T}A_{t}%
\mathbf{\xi }_{2}^{\prime }}\left( \overline{e^{-i\phi (t,\mathbf{\xi }_{1},%
\mathbf{\xi }_{2}^{\prime })}g(t)}\right) dt=\int e^{-2i\left\vert \mathbf{%
\xi }_{1}-\mathbf{\xi }_{2}\right\vert \omega ^{T}A_{t}\mathbf{\xi }%
_{2}^{\prime }}\left( \overline{e^{-i\phi (t,\mathbf{\xi }_{1},\mathbf{\xi }%
_{2}^{\prime })}g(t)}\right) dt
\end{equation*}%
where $\mathbf{\omega }=(\omega _{1},\omega _{2},\omega _{3})$ is a unit
vector in $\mathbb{R}^{3}$. Without loss of generality, we assume%
\begin{equation*}
\max \left\{ \left\vert \omega _{1}\right\vert ,\left\vert \omega
_{2}\right\vert ,\left\vert \omega _{3}\right\vert \right\} =\left\vert
\omega _{1}\right\vert
\end{equation*}%
which implies 
\begin{equation*}
\frac{1}{\sqrt{3}}\leqslant \left\vert \omega _{1}\right\vert \leqslant 1.
\end{equation*}%
Let us further assume that $\omega _{1}>0$ (the proof works exactly the same
for the $\omega _{1}<0$ case), then we can write 
\begin{eqnarray*}
\mathbf{\xi }_{2}^{\prime } &=&(x,0,0)+(0,y_{1},y_{2}) \\
u(t) &=&2\omega _{1}\int_{0}^{t}a_{1}(s)ds.
\end{eqnarray*}%
Again $u$ is invertible with 
\begin{equation*}
\frac{du}{dt}\geqslant \frac{2c_{0}}{\sqrt{3}}>0.
\end{equation*}%
So we have%
\begin{eqnarray*}
&&\int e^{-2i\left( \mathbf{\xi }_{1}-\mathbf{\xi }_{2}\right) ^{T}A_{t}%
\mathbf{\xi }_{2}^{\prime }}\left( \overline{e^{-i\phi (t,\mathbf{\xi }_{1},%
\mathbf{\xi }_{2}^{\prime })}g(t)}\right) dt \\
&=&\int e^{-2i\left\vert \mathbf{\xi }_{1}-\mathbf{\xi }_{2}\right\vert 
\mathbf{\omega }^{T}A_{t^{\prime }}\mathbf{\xi }_{2}^{\prime }}\left( 
\overline{e^{-i\phi (t,\mathbf{\xi }_{1},\mathbf{\xi }_{2}^{\prime })}g(t)}%
\right) dt \\
&=&\int e^{-iu\left( \omega _{1}\left\vert \mathbf{\xi }_{1}-\mathbf{\xi }%
_{2}\right\vert x\right) }\left( e^{-2i\left\vert \mathbf{\xi }_{1}-\mathbf{%
\xi }_{2}\right\vert (0,\omega _{2},\omega _{3})^{T}A_{t(u)}(0,y_{1},y_{2})}%
\overline{e^{-i\phi (t(u),\mathbf{\xi }_{1},\mathbf{\xi }_{2}^{\prime
})}g(t(u))}\left\vert \frac{dt}{du}\right\vert \right) du \\
&=&\overline{\hat{G}(-\omega _{1}\left\vert \mathbf{\xi }_{1}-\mathbf{\xi }%
_{2}\right\vert x)}
\end{eqnarray*}%
where%
\begin{equation*}
G(u)=\overline{e^{-2i\left\vert \mathbf{\xi }_{1}-\mathbf{\xi }%
_{2}\right\vert (0,\omega _{2},\omega _{3})^{T}A_{t(u)}(0,y_{1},y_{2})}}%
e^{-i\phi (t(u),\mathbf{\xi }_{1},\mathbf{\xi }_{2}^{\prime
})}g(t(u))\left\vert \frac{dt}{du}\right\vert
\end{equation*}%
which still has the property that 
\begin{equation*}
\int \left\vert G(u)\right\vert ^{2}du\leqslant \frac{\sqrt{3}}{2c_{0}}\int
\left\vert g(t)\right\vert ^{2}dt.
\end{equation*}%
Just as in case 1, this procedure hands us%
\begin{eqnarray*}
&&\left\vert \int J(f)(t,\mathbf{\xi }_{1})\overline{g(t)}dt\right\vert \\
&\leqslant &\int \frac{\left\vert \mathbf{\xi }_{1}\right\vert ^{2}d\mathbf{%
\xi }_{2}}{\left\vert \mathbf{\xi }_{2}\right\vert ^{2}}\int \frac{d\mathbf{%
\xi }_{2}^{\prime }}{\left\vert \mathbf{\xi }_{1}-\mathbf{\xi }_{2}-\mathbf{%
\xi }_{2}^{\prime }\right\vert ^{2}\left\vert \mathbf{\xi }_{2}^{\prime
}\right\vert ^{2}} \\
&&\left\vert \int e^{-2i\left( \mathbf{\xi }_{1}-\mathbf{\xi }_{2}\right)
^{T}A_{t}\mathbf{\xi }_{2}^{\prime }}\left( \overline{e^{-i\phi (t,\mathbf{%
\xi }_{1},\mathbf{\xi }_{2})}g(t)}\right) dt\right\vert \left\vert \int
e^{2i\left( \mathbf{\xi }_{1}-\mathbf{\xi }_{2}\right) ^{T}A_{t^{\prime }}%
\mathbf{\xi }_{2}^{\prime }}\left( e^{-i\phi (t^{\prime },\mathbf{\xi }_{1},%
\mathbf{\xi }_{2})}f(t^{\prime },\mathbf{\xi }_{1})\right) dt^{\prime
}\right\vert \\
&=&\int \left( \int \frac{dxdy_{1}dy_{2}}{\left\vert \mathbf{\xi }_{1}-%
\mathbf{\xi }_{2}-\mathbf{\xi }_{2}^{\prime }\right\vert ^{2}\left\vert 
\mathbf{\xi }_{2}^{\prime }\right\vert ^{2}}\left\vert \overline{\hat{G}%
(-\omega _{1}\left\vert \mathbf{\xi }_{1}-\mathbf{\xi }_{2}\right\vert x)}%
\hat{F}(-\omega _{1}\left\vert \mathbf{\xi }_{1}-\mathbf{\xi }%
_{2}\right\vert x,\mathbf{\xi }_{1})\right\vert \right) \frac{\left\vert 
\mathbf{\xi }_{1}\right\vert ^{2}}{\left\vert \mathbf{\xi }_{2}\right\vert
^{2}}d\mathbf{\xi }_{2} \\
&=&\int \left( \int \frac{dxdy_{1}dy_{2}}{\left\vert \mathbf{\xi }_{1}-%
\mathbf{\xi }_{2}-\mathbf{\xi }_{2}^{\prime }\right\vert ^{2}\left\vert 
\mathbf{\xi }_{2}^{\prime }\right\vert ^{2}}\left\vert \overline{\hat{G}(x)}%
\hat{F}(x,\mathbf{\xi }_{1})\right\vert \right) \frac{\left\vert \mathbf{\xi 
}_{1}\right\vert ^{2}}{\left\vert \omega _{1}\right\vert \left\vert \mathbf{%
\xi }_{1}-\mathbf{\xi }_{2}\right\vert \left\vert \mathbf{\xi }%
_{2}\right\vert ^{2}}d\mathbf{\xi }_{2} \\
&\leqslant &C\int \frac{\left\vert \mathbf{\xi }_{1}\right\vert ^{2}}{%
\left\vert \mathbf{\xi }_{1}-\mathbf{\xi }_{2}\right\vert \left\vert \mathbf{%
\xi }_{2}\right\vert ^{2}}\left( \sup_{x}\int \frac{dy_{1}dy_{2}}{\left\vert 
\mathbf{\xi }_{1}-\mathbf{\xi }_{2}-\mathbf{\xi }_{2}^{\prime }\right\vert
^{2}\left\vert \mathbf{\xi }_{2}^{\prime }\right\vert ^{2}}\right) \\
&&\left( \int \left\vert \hat{F}(x,\mathbf{\xi }_{1})\right\vert
^{2}dx\right) ^{\frac{1}{2}}\left( \int \left\vert \hat{G}(x)\right\vert
^{2}dx\right) ^{\frac{1}{2}}d\mathbf{\xi }_{2}\text{ (H\"{o}lder in }x\text{)%
} \\
&\leqslant &C\left\Vert f(\cdot ,\mathbf{\xi }_{1})\right\Vert
_{L_{t}^{2}}\left\Vert g\right\Vert _{L_{t}^{2}}\int \frac{\left\vert 
\mathbf{\xi }_{1}\right\vert ^{2}}{2\left\vert \mathbf{\xi }_{1}-\mathbf{\xi 
}_{2}\right\vert \left\vert \mathbf{\xi }_{2}\right\vert ^{2}}\left(
\sup_{x}\int \frac{dy_{1}dy_{2}}{\left\vert \mathbf{\xi }_{1}-\mathbf{\xi }%
_{2}-\mathbf{\xi }_{2}^{\prime }\right\vert ^{2}\left\vert \mathbf{\xi }%
_{2}^{\prime }\right\vert ^{2}}\right) d\mathbf{\xi }_{2}
\end{eqnarray*}%
The first part of Lemma \ref{Lemma:MateiLemmaForIntegrals} and the
restrictions that $\left\vert \mathbf{\xi }_{1}-\mathbf{\xi }_{2}-\mathbf{%
\xi }_{2}^{\prime }\right\vert >\left\vert \mathbf{\xi }_{2}\right\vert $
and $\left\vert \mathbf{\xi }_{2}^{\prime }\right\vert <\left\vert \mathbf{%
\xi }_{2}\right\vert $ show 
\begin{eqnarray*}
&&\int \frac{\left\vert \mathbf{\xi }_{1}\right\vert ^{2}}{2\left\vert 
\mathbf{\xi }_{1}-\mathbf{\xi }_{2}\right\vert \left\vert \mathbf{\xi }%
_{2}\right\vert ^{2}}\left( \sup_{x}\int \frac{dy_{1}dy_{2}}{\left\vert 
\mathbf{\xi }_{1}-\mathbf{\xi }_{2}-\mathbf{\xi }_{2}^{\prime }\right\vert
^{2}\left\vert \mathbf{\xi }_{2}^{\prime }\right\vert ^{2}}\right) d\mathbf{%
\xi }_{2} \\
&\leqslant &\int \frac{\left\vert \mathbf{\xi }_{1}\right\vert ^{2}}{%
2\left\vert \mathbf{\xi }_{1}-\mathbf{\xi }_{2}\right\vert \left\vert 
\mathbf{\xi }_{2}\right\vert ^{2+2\varepsilon }}\left( \sup_{x}\int \frac{%
dy_{1}dy_{2}}{\left\vert \mathbf{\xi }_{1}-\mathbf{\xi }_{2}-\mathbf{\xi }%
_{2}^{\prime }\right\vert ^{2-\varepsilon }\left\vert \mathbf{\xi }%
_{2}^{\prime }\right\vert ^{2-\varepsilon }}\right) d\mathbf{\xi }_{2} \\
&\leqslant &C\int \frac{\left\vert \mathbf{\xi }_{1}\right\vert ^{2}d\mathbf{%
\xi }_{2}}{2\left\vert \mathbf{\xi }_{1}-\mathbf{\xi }_{2}\right\vert
^{3-2\varepsilon }\left\vert \mathbf{\xi }_{2}\right\vert ^{2+2\varepsilon }}
\\
&\leqslant &C,
\end{eqnarray*}%
which finishes the proposition.
\end{proof}

\section{Proof of Theorem \protect\ref{Theorem:3*nd} when $n=2$ / 3*2d
Collapsing Estimate \label{Sec:ProofOf3*2d}}

By the proof of the $n=3$ case in Section \ref{Sec:ProofOf3*3d}, we only
need to show these two estimates:

\begin{itemize}
\item[Case I] Under the restrictions $\left\vert \mathbf{\xi }_{1}-\mathbf{%
\xi }_{2,old}-\mathbf{\xi }_{2}^{\prime }\right\vert >\left\vert \mathbf{\xi 
}_{2,old}\right\vert $ and $\left\vert \mathbf{\xi }_{2}^{\prime
}\right\vert <\left\vert \mathbf{\xi }_{2,old}\right\vert $, we have%
\begin{equation*}
\int \frac{\left\vert \mathbf{\xi }_{1}\right\vert }{\left\vert \mathbf{\xi }%
_{2}^{\prime }\right\vert }\sup_{\rho }\left( \int \frac{d\mathbf{\sigma (%
\mathbf{\xi }}_{2,new}\mathbf{)}}{\left\vert \mathbf{\xi }_{2,new}-\frac{%
\mathbf{\xi }_{1}-\mathbf{\xi }_{2}^{\prime }}{2}\right\vert \left\vert 
\mathbf{\xi }_{2,new}+\frac{\mathbf{\xi }_{1}-\mathbf{\xi }_{2}^{\prime }}{2}%
\right\vert }\right) d\mathbf{\xi }_{2}^{\prime }\leqslant C
\end{equation*}%
where $\mathbf{\mathbf{\xi }}_{2,new}$ and $\mathbf{\xi }_{2,old}$ are
related by formula \ref{formula:change of variable 3*3d sphere} and we write 
\begin{equation*}
\mathbf{\mathbf{\xi }}_{2,new}\mathbf{=}\rho \mathbf{\sigma }\text{ with }%
\mathbf{\sigma }\in \mathbb{S}^{1}.
\end{equation*}

\item[Case II] Under the restrictions $\left\vert \mathbf{\xi }_{1}-\mathbf{%
\xi }_{2}-\mathbf{\xi }_{2}^{\prime }\right\vert >\left\vert \mathbf{\xi }%
_{2}\right\vert $ and $\left\vert \mathbf{\xi }_{2}^{\prime }\right\vert
>\left\vert \mathbf{\xi }_{2}\right\vert $, we have%
\begin{equation*}
\int \frac{\left\vert \mathbf{\xi }_{1}\right\vert }{\left\vert \mathbf{\xi }%
_{1}-\mathbf{\xi }_{2}\right\vert \left\vert \mathbf{\xi }_{2}\right\vert }%
\left( \sup_{x}\int \frac{dy}{\left\vert \mathbf{\xi }_{1}-\mathbf{\xi }_{2}-%
\mathbf{\xi }_{2}^{\prime }\right\vert \left\vert \mathbf{\xi }_{2}^{\prime
}\right\vert }\right) d\mathbf{\xi }_{2}\leqslant C.
\end{equation*}%
where $\mathbf{\xi }_{2}^{\prime }=(x,y).$
\end{itemize}

Lemma \ref{Lemma:MateiLemmaForIntegrals} plays an important role in giving
the corresponding estimates in Section \ref{Sec:ProofOf3*3d}. In the 2d
case, the subsequent lemma provides its replacement.

\begin{lemma}
\label{Lemma:KeyLemmaFor3*2d}Let $\mathbf{\xi }\in \mathbb{R}^{2}$ and $L$
be a 1d line or circle in $\mathbb{R}^{2}$ with the usual induced line
element $dS$.

(1) Say $0<a,$ $b<1,$ $a+b>1,$ then there exists a $C$ independent of $L$
s.t.%
\begin{equation*}
\int_{L}\frac{dS(\mathbf{\eta })}{\left\vert \mathbf{\xi }-\mathbf{\eta }%
\right\vert ^{a}\left\vert \mathbf{\eta }\right\vert ^{b}}\leqslant \frac{C}{%
\left\vert \mathbf{\xi }\right\vert ^{a+b-1}}.
\end{equation*}

(2) Let $\varepsilon =\frac{1}{80}$, then 
\begin{equation*}
\sup_{\left\vert \mathbf{\eta }\right\vert }\left( \int_{\mathbb{S}^{1}}%
\frac{d\mathbf{\sigma (\eta )}}{\left\vert \mathbf{\xi }-\mathbf{\eta }%
\right\vert ^{1-\varepsilon }\left\vert \mathbf{\xi }+\mathbf{\eta }%
\right\vert ^{1-\varepsilon }}\right) \leqslant \frac{C}{\left\vert \mathbf{%
\xi }\right\vert ^{2-2\varepsilon }}.
\end{equation*}
\end{lemma}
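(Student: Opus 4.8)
**Plan for proving Lemma~\ref{Lemma:KeyLemmaFor3*2d}.**

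The plan is to treat the two parts by the same basic device as in Lemma~\ref{Lemma:MateiLemmaForIntegrals}: reduce everything to the ``worst'' configuration of $L$ relative to the point(s) where the kernel is singular, estimate the contribution of a neighborhood of each singularity by a direct dyadic/polar computation, and control the tail by the fact that the exponents are subcritical ($a,b<1$) or, in part (2), that $2-2\varepsilon$ is strictly below the dimension obstruction after one angular integration. First I would reduce part (1) to the case of a line (resp.\ circle) in a normalized position. By scaling in $\boldsymbol\xi$ we may take $|\boldsymbol\xi|=1$; by translating and rotating we may assume the singularities of the integrand on $L$ are at $\boldsymbol\eta=0$ and $\boldsymbol\eta=\boldsymbol\xi$, and that $L$ is either a straight line through a fixed reference point or a circle of arbitrary radius. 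Parametrize $L$ by arclength $s$, so near $\boldsymbol\eta=0$ one has $|\boldsymbol\eta|\gtrsim |s-s_0|$ for the closest point $s_0$, and similarly near $\boldsymbol\eta=\boldsymbol\xi$; the integral $\int \frac{dS}{|\boldsymbol\xi-\boldsymbol\eta|^{a}|\boldsymbol\eta|^{b}}$ splits into (i) a neighborhood of $\boldsymbol\eta=0$, where it is $\lesssim \int_{|s|\le 1} |s|^{-b}\,ds<\infty$ since $b<1$; (ii) a neighborhood of $\boldsymbol\eta=\boldsymbol\xi$, handled identically using $a<1$; (iii) the region where both $|\boldsymbol\eta|$ and $|\boldsymbol\xi-\boldsymbol\eta|$ are $\gtrsim 1$, i.e.\ $\gtrsim |\boldsymbol\xi|$, which contributes $\lesssim |\boldsymbol\xi|^{-a-b}\cdot(\text{length}) $—but on a line this last step needs care, so instead I would cut the region $|\boldsymbol\eta|\ge 2|\boldsymbol\xi|$ dyadically: on the annulus $|\boldsymbol\eta|\sim 2^{k}$ the line meets it in length $\lesssim 2^{k}$ and $|\boldsymbol\xi-\boldsymbol\eta|\sim 2^{k}$, giving $\sum_k 2^{k}2^{-k(a+b)}$, which converges precisely because $a+b>1$. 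Undoing the scaling produces the claimed $|\boldsymbol\xi|^{-(a+b-1)}$.

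For part (2) the kernel has two singularities, at $\boldsymbol\eta=\boldsymbol\xi$ and $\boldsymbol\eta=-\boldsymbol\xi$, but here the integration is over the \emph{fixed} unit circle $\mathbb S^{1}$ and there is an outer supremum over the radius $|\boldsymbol\eta|$ — i.e.\ one has already passed to polar coordinates $\boldsymbol\eta=\rho\boldsymbol\sigma$ and is estimating the angular integral uniformly in $\rho$. The plan is: if $|\boldsymbol\xi|\le 10$ (say), then $|\boldsymbol\xi\pm\boldsymbol\eta|$ can only vanish to first order as $\boldsymbol\sigma$ moves along $\mathbb S^{1}$, so each singularity contributes $\lesssim \int_{|\theta|\le 1}|\theta|^{-(1-\varepsilon)}\,d\theta<\infty$ since $1-\varepsilon<1$, and the bound $|\boldsymbol\xi|^{-(2-2\varepsilon)}$ is trivial in this regime by adjusting constants. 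If $|\boldsymbol\xi|\ge 10$, then the two singular directions $\pm\boldsymbol\xi/|\boldsymbol\xi|$ are far apart on $\mathbb S^{1}$, so near $\boldsymbol\sigma\approx\boldsymbol\xi/|\boldsymbol\xi|$ one has $|\boldsymbol\xi+\boldsymbol\eta|\sim|\boldsymbol\xi|$ while $|\boldsymbol\xi-\rho\boldsymbol\sigma|$ can be as small as $\big||\boldsymbol\xi|-\rho\big|$; writing $\boldsymbol\sigma$ in terms of an angle $\theta$ from that direction, $|\boldsymbol\xi-\rho\boldsymbol\sigma|^{2}\sim (|\boldsymbol\xi|-\rho)^{2}+|\boldsymbol\xi|\rho\,\theta^{2}$, and $\int_{\mathbb S^1}\frac{d\theta}{((|\boldsymbol\xi|-\rho)^2+|\boldsymbol\xi|\rho\theta^2)^{(1-\varepsilon)/2}}$ is, uniformly in $\rho$, $\lesssim (|\boldsymbol\xi|\rho)^{-1/2}\cdot(\text{something bounded})+ |\boldsymbol\xi|^{-(1-\varepsilon)}$; combining with the $|\boldsymbol\xi+\boldsymbol\eta|^{-(1-\varepsilon)}\sim|\boldsymbol\xi|^{-(1-\varepsilon)}$ factor and doing the symmetric estimate near $-\boldsymbol\xi/|\boldsymbol\xi|$ yields the total bound $C|\boldsymbol\xi|^{-(2-2\varepsilon)}$ after checking the worst case $\rho\sim|\boldsymbol\xi|$.

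The main obstacle I anticipate is the uniformity over $L$ (in part (1)) and over $\rho=|\boldsymbol\eta|$ (in part (2)): one must make sure the dyadic/angular estimates do not degrade as the circle's radius (resp.\ as $\rho$) is taken to an extreme, in particular when the circle in part (1) nearly degenerates to a line through both $0$ and $\boldsymbol\xi$, and when $\rho\to|\boldsymbol\xi|$ in part (2) so that the singularity at $\boldsymbol\eta=\boldsymbol\xi$ is grazed tangentially by the sphere of integration. Handling that grazing case is exactly where the strict inequality $1-\varepsilon<1$ (equivalently, the specific choice $\varepsilon=\tfrac1{80}$, which leaves generous room) is used: the angular integral of $|\theta|^{-(1-\varepsilon)}$ near a tangential degeneracy is still finite and scales correctly, whereas the endpoint exponent $1$ would produce a logarithmic loss. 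Once those two uniformity checks are in place, both inequalities follow by assembling the near-singularity pieces with the convergent dyadic tail, exactly parallelling the proof of Lemma~\ref{Lemma:MateiLemmaForIntegrals} in \cite{KlainermanAndMachedon}.
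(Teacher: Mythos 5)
Your treatment of part (2) has a genuine gap: the case split $|\boldsymbol\xi|\le 10$ versus $|\boldsymbol\xi|\ge 10$ is a false dichotomy, and the assertion that for $|\boldsymbol\xi|\le 10$ the bound $C|\boldsymbol\xi|^{-(2-2\varepsilon)}$ ``is trivial by adjusting constants'' is wrong. The estimate is homogeneous of degree $-(2-2\varepsilon)$: replacing $\boldsymbol\xi$ by $\lambda\boldsymbol\xi$ and $\rho$ by $\lambda\rho$ (which does not change the $\sup_\rho$) scales both sides by $\lambda^{-(2-2\varepsilon)}$. So the left-hand side is \emph{not} bounded by a constant for small $|\boldsymbol\xi|$ — taking $\rho=|\boldsymbol\xi|$ and parametrizing $\boldsymbol\sigma$ by the angle $\theta$ from $\boldsymbol\xi/|\boldsymbol\xi|$, one has $|\boldsymbol\xi-\rho\boldsymbol\sigma|\sim|\boldsymbol\xi||\theta|$ and $|\boldsymbol\xi+\rho\boldsymbol\sigma|\sim|\boldsymbol\xi|$, giving an angular integral of size exactly $\sim|\boldsymbol\xi|^{-(2-2\varepsilon)}$ with no constant to absorb. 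Your ``small $|\boldsymbol\xi|$'' case therefore requires the same computation as your ``large $|\boldsymbol\xi|$'' case, and the dichotomy buys nothing. Separately, in the $|\boldsymbol\xi|\ge 10$ case your exponent bookkeeping is off: the claimed bound $\lesssim(|\boldsymbol\xi|\rho)^{-1/2}\cdot(\text{bounded})+|\boldsymbol\xi|^{-(1-\varepsilon)}$ for $\int d\theta\,\bigl((|\boldsymbol\xi|-\rho)^2+|\boldsymbol\xi|\rho\theta^2\bigr)^{-(1-\varepsilon)/2}$ does not have the right homogeneity; carrying the substitution through gives $(|\boldsymbol\xi|\rho)^{-(1-\varepsilon)/2}$ in the grazing regime $\rho\sim|\boldsymbol\xi|$, which you then still need to assemble into the stated bound.

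The paper's proof is considerably simpler and avoids these issues. It first uses the triangle inequality $|\boldsymbol\xi|\le|\boldsymbol\xi-\boldsymbol\eta|+|\boldsymbol\xi+\boldsymbol\eta|$ to split into two symmetric regions; on the region where $|\boldsymbol\xi+\boldsymbol\eta|\ge|\boldsymbol\xi|/2$ it pulls the factor $|\boldsymbol\xi+\boldsymbol\eta|^{-(1-\varepsilon)}\lesssim|\boldsymbol\xi|^{-(1-\varepsilon)}$ out, reducing the problem to bounding the single-singularity integral $\sup_\rho\int_{\mathbb{S}^1}|\boldsymbol\xi-\rho\boldsymbol\sigma|^{-(1-\varepsilon)}\,d\boldsymbol\sigma$, which it then controls uniformly in $\rho$ by the elementary lower bounds $|\rho e^{i\theta}-(|\boldsymbol\xi|,0)|\ge|\boldsymbol\xi||\sin\theta|$ for $\theta\in[0,\pi/2]\cup[3\pi/2,2\pi]$ and $\ge|\boldsymbol\xi|$ for $\theta\in[\pi/2,3\pi/2]$, using only $\int_0^{\pi/2}|\sin\theta|^{-(1-\varepsilon)}\,d\theta<\infty$. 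You would do better to adopt that device: the triangle-inequality split cleanly isolates each singular factor, making the angular computation uniform in $\rho$ automatic and making any case split on $|\boldsymbol\xi|$ unnecessary.
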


\begin{proof}
We will show the second part in the end of this section. The first part
shares the exact same proof with Lemma 2.2 in \cite{KlainermanAndMachedon}.
\end{proof}

\subsection{Proof of Case I}

The change of variable \ref{formula:change of variable 3*3d sphere} turns
the restrictions into 
\begin{eqnarray*}
\left\vert \mathbf{\xi }_{2,new}-\frac{\mathbf{\xi }_{1}-\mathbf{\xi }%
_{2}^{\prime }}{2}\right\vert &=&\left\vert \mathbf{\xi }_{1}-\mathbf{\xi }%
_{2,old}-\mathbf{\xi }_{2}^{\prime }\right\vert >\left\vert \mathbf{\xi }%
_{2,old}\right\vert >\left\vert \mathbf{\xi }_{2}^{\prime }\right\vert , \\
\left\vert \mathbf{\xi }_{2,new}+\frac{\mathbf{\xi }_{1}-\mathbf{\xi }%
_{2}^{\prime }}{2}\right\vert &=&\left\vert \mathbf{\xi }_{2,old}\right\vert
>\left\vert \mathbf{\xi }_{2}^{\prime }\right\vert .
\end{eqnarray*}%
Notice that $\mathbf{\mathbf{\xi }}_{2,new}\mathbf{=}\rho \mathbf{\sigma ,}$
we in fact have%
\begin{eqnarray*}
&&\int \frac{\left\vert \mathbf{\xi }_{1}\right\vert }{\left\vert \mathbf{%
\xi }_{2}^{\prime }\right\vert }\sup_{\rho }\left( \int \frac{d\mathbf{%
\sigma }\left( \mathbf{\xi }_{2,new}\right) }{\left\vert \mathbf{\xi }%
_{2,new}-\frac{\mathbf{\xi }_{1}-\mathbf{\xi }_{2}^{\prime }}{2}\right\vert
\left\vert \mathbf{\xi }_{2,new}+\frac{\mathbf{\xi }_{1}-\mathbf{\xi }%
_{2}^{\prime }}{2}\right\vert }\right) d\mathbf{\xi }_{2}^{\prime } \\
&\leqslant &\int \frac{\left\vert \mathbf{\xi }_{1}\right\vert }{\left\vert 
\mathbf{\xi }_{2}^{\prime }\right\vert ^{1+2\varepsilon }}\sup_{\rho }\left(
\int_{\mathbb{S}^{1}}\frac{d\mathbf{\sigma }\left( \mathbf{\xi }%
_{2,new}\right) }{\left\vert \mathbf{\xi }_{2,new}-\frac{\mathbf{\xi }_{1}-%
\mathbf{\xi }_{2}^{\prime }}{2}\right\vert ^{1-\varepsilon }\left\vert 
\mathbf{\xi }_{2,new}+\frac{\mathbf{\xi }_{1}-\mathbf{\xi }_{2}^{\prime }}{2}%
\right\vert ^{1-\varepsilon }}\right) d\mathbf{\xi }_{2}^{\prime } \\
&\leqslant &\text{ }C\left\vert \mathbf{\xi }_{1}\right\vert \int \frac{1}{%
\left\vert \mathbf{\xi }_{2}^{\prime }\right\vert ^{1+2\varepsilon }}\frac{1%
}{\left\vert \mathbf{\xi }_{1}-\mathbf{\xi }_{2}^{\prime }\right\vert
^{2-2\varepsilon }}d\mathbf{\xi }_{2}^{\prime }\text{ (Second part of Lemma %
\ref{Lemma:KeyLemmaFor3*2d})} \\
&\leqslant &C.
\end{eqnarray*}

\subsection{Proof of Case II}

Recall that $\mathbf{\xi }_{2}^{\prime }=(x,y)$, we estimate

\begin{eqnarray*}
&&\int \frac{\left\vert \mathbf{\xi }_{1}\right\vert }{\left\vert \mathbf{%
\xi }_{1}-\mathbf{\xi }_{2}\right\vert \left\vert \mathbf{\xi }%
_{2}\right\vert }\left( \sup_{x}\int \frac{dy}{\left\vert \mathbf{\xi }_{1}-%
\mathbf{\xi }_{2}-\mathbf{\xi }_{2}^{\prime }\right\vert \left\vert \mathbf{%
\xi }_{2}^{\prime }\right\vert }\right) d\mathbf{\xi }_{2} \\
&\leqslant &\int \frac{\left\vert \mathbf{\xi }_{1}\right\vert }{\left\vert 
\mathbf{\xi }_{1}-\mathbf{\xi }_{2}\right\vert \left\vert \mathbf{\xi }%
_{2}\right\vert ^{1+2\varepsilon }}\left( \sup_{x}\int \frac{dy}{\left\vert 
\mathbf{\xi }_{1}-\mathbf{\xi }_{2}-\mathbf{\xi }_{2}^{\prime }\right\vert
^{1-\varepsilon }\left\vert \mathbf{\xi }_{2}^{\prime }\right\vert
^{1-\varepsilon }}\right) d\mathbf{\xi }_{2} \\
&\leqslant &C\left\vert \mathbf{\xi }_{1}\right\vert \int \frac{1}{%
\left\vert \mathbf{\xi }_{1}-\mathbf{\xi }_{2}\right\vert ^{2-2\varepsilon
}\left\vert \mathbf{\xi }_{2}\right\vert ^{1+2\varepsilon }}d\mathbf{\xi }%
_{2}\text{ (First part of Lemma \ref{Lemma:KeyLemmaFor3*2d})} \\
&\leqslant &C.
\end{eqnarray*}

\subsection{Proof of the Second Part of Lemma \protect\ref%
{Lemma:KeyLemmaFor3*2d}}

Due to%
\begin{equation*}
\left\vert \mathbf{\xi }\right\vert \leqslant \left\vert \mathbf{\xi }-%
\mathbf{\eta }\right\vert +\left\vert \mathbf{\xi }+\mathbf{\eta }%
\right\vert ,
\end{equation*}%
we can separate the integral as

\begin{eqnarray*}
&&\sup_{\left\vert \mathbf{\eta }\right\vert }\left( \int_{\mathbb{S}^{1}}%
\frac{d\mathbf{\sigma (\eta )}}{\left\vert \mathbf{\xi }-\mathbf{\eta }%
\right\vert ^{1-\varepsilon }\left\vert \mathbf{\xi }+\mathbf{\eta }%
\right\vert ^{1-\varepsilon }}\right) \\
&\leqslant &\sup_{\left\vert \mathbf{\eta }\right\vert }\left( \int_{\mathbb{%
S}^{1}\text{ and }\left\vert \mathbf{\xi }-\mathbf{\eta }\right\vert
\geqslant \frac{\left\vert \mathbf{\xi }\right\vert }{2}}\right)
+\sup_{\left\vert \mathbf{\eta }\right\vert }\left( \int_{\mathbb{S}^{1}%
\text{ and }\left\vert \mathbf{\xi }+\mathbf{\eta }\right\vert \geqslant 
\frac{\left\vert \mathbf{\xi }\right\vert }{2}}\right) .
\end{eqnarray*}%
We will only show%
\begin{equation*}
\sup_{\left\vert \mathbf{\eta }\right\vert }\left( \int_{\mathbb{S}^{1}\text{
and }\left\vert \mathbf{\xi }+\mathbf{\eta }\right\vert \geqslant \frac{%
\left\vert \mathbf{\xi }\right\vert }{2}}\frac{d\mathbf{\sigma (\eta )}}{%
\left\vert \mathbf{\xi }-\mathbf{\eta }\right\vert ^{1-\varepsilon
}\left\vert \mathbf{\xi }+\mathbf{\eta }\right\vert ^{1-\varepsilon }}%
\right) \leqslant \frac{C}{\left\vert \mathbf{\xi }\right\vert
^{2-2\varepsilon }}
\end{equation*}%
since the other part is similar. It is clear that

\begin{equation}
\sup_{\left\vert \mathbf{\eta }\right\vert }\left( \int_{\mathbb{S}^{1}\text{
and }\left\vert \mathbf{\xi }+\mathbf{\eta }\right\vert \geqslant \frac{%
\left\vert \mathbf{\xi }\right\vert }{2}}\frac{d\mathbf{\sigma (\eta )}}{%
\left\vert \mathbf{\xi }-\mathbf{\eta }\right\vert ^{1-\varepsilon
}\left\vert \mathbf{\xi }+\mathbf{\eta }\right\vert ^{1-\varepsilon }}%
\right) \leqslant \frac{C}{\left\vert \mathbf{\xi }\right\vert
^{1-\varepsilon }}\sup_{\left\vert \mathbf{\eta }\right\vert }\left( \int_{%
\mathbb{S}^{1}}\frac{d\mathbf{\sigma (\eta )}}{\left\vert \mathbf{\xi }-%
\mathbf{\eta }\right\vert ^{1-\varepsilon }}\right) .
\label{estimate:3*2d middle 1}
\end{equation}%
Rotate $\mathbb{S}^{1}$ such that $\mathbf{\xi }$ is on the positive $x$
axis, then write $\mathbf{\eta }=\rho e^{i\theta }$ for $(\rho \cos \theta
,\rho \sin \theta )$ and observe:

\begin{itemize}
\item When $\theta \in \lbrack 0,\frac{\pi }{2}]\cup \lbrack \frac{3\pi }{2}%
,2\pi ],$%
\begin{equation*}
\left\vert \rho e^{i\theta }-(\left\vert \mathbf{\xi }\right\vert
,0)\right\vert \geqslant \left\vert \mathbf{\xi }\right\vert \left\vert \sin
\theta \right\vert
\end{equation*}%
because $\left\vert \mathbf{\xi }\right\vert \left\vert \sin \theta
\right\vert $ is the distance between the point $(\left\vert \mathbf{\xi }%
\right\vert ,0)$ and the line ($angle=\theta $).

\item When $\theta \in \lbrack \frac{\pi }{2},\frac{3\pi }{2}],$%
\begin{equation*}
\left\vert \rho e^{i\theta }-(\left\vert \mathbf{\xi }\right\vert
,0)\right\vert \geqslant \left\vert \mathbf{\xi }\right\vert
\end{equation*}%
because $\rho e^{i\theta }-(\left\vert \mathbf{\xi }\right\vert ,0)$ is the
longest edge in the obtuse triangle which consists of $\rho e^{i\theta }%
\mathbf{,}$ $(\left\vert \mathbf{\xi }\right\vert ,0)$ and $\rho e^{i\theta
}-(\left\vert \mathbf{\xi }\right\vert ,0).$
\end{itemize}

Insert these two elementary observations into estimate \ref{estimate:3*2d
middle 1}, we have 
\begin{eqnarray*}
&&\sup_{\left\vert \mathbf{\eta }\right\vert }\left( \int_{\mathbb{S}^{1}%
\text{ and }\left\vert \mathbf{\xi }+\mathbf{\eta }\right\vert \geqslant 
\frac{\left\vert \mathbf{\xi }\right\vert }{2}}\frac{d\mathbf{\sigma (\eta )}%
}{\left\vert \mathbf{\xi }-\mathbf{\eta }\right\vert ^{1-\varepsilon
}\left\vert \mathbf{\xi }+\mathbf{\eta }\right\vert ^{1-\varepsilon }}\right)
\\
&\leqslant &\frac{C}{\left\vert \mathbf{\xi }\right\vert ^{1-\varepsilon }}%
\sup_{\left\vert \mathbf{\eta }\right\vert }\left( \int_{\mathbb{S}^{1}}%
\frac{d\mathbf{\sigma (\eta )}}{\left\vert \mathbf{\xi }-\mathbf{\eta }%
\right\vert ^{1-\varepsilon }}\right) \\
&\leqslant &\frac{C}{\left\vert \mathbf{\xi }\right\vert ^{1-\varepsilon }}%
\left[ \sup_{\rho }\left( \int_{\frac{\pi }{2}}^{\frac{3\pi }{2}}\frac{%
d\theta }{\left\vert \rho e^{i\theta }-(\left\vert \mathbf{\xi }\right\vert
,0)\right\vert ^{1-\varepsilon }}\right) +2\sup_{\rho }\left( \int_{0}^{%
\frac{\pi }{2}}\frac{d\theta }{\left\vert \rho e^{i\theta }-(\left\vert 
\mathbf{\xi }\right\vert ,0)\right\vert ^{1-\varepsilon }}\right) \right] \\
&\leqslant &\frac{C}{\left\vert \mathbf{\xi }\right\vert ^{1-\varepsilon }}%
\left[ \left( \int_{\frac{\pi }{2}}^{\frac{3\pi }{2}}\frac{d\theta }{%
\left\vert \mathbf{\xi }\right\vert ^{1-\varepsilon }}\right) +2\left(
\int_{0}^{\frac{\pi }{2}}\frac{d\theta }{\left\vert \left\vert \mathbf{\xi }%
\right\vert \sin \theta \right\vert ^{1-\varepsilon }}\right) \right] \\
&\leqslant &\frac{C}{\left\vert \mathbf{\xi }\right\vert ^{2-2\varepsilon }}.
\end{eqnarray*}

To show the other part, namely 
\begin{equation*}
\sup_{\left\vert \mathbf{\eta }\right\vert }\left( \int_{\mathbb{S}^{1}\text{
and }\left\vert \mathbf{\xi }-\mathbf{\eta }\right\vert \geqslant \frac{%
\left\vert \mathbf{\xi }\right\vert }{2}}\frac{d\mathbf{\sigma (\eta )}}{%
\left\vert \mathbf{\xi }-\mathbf{\eta }\right\vert ^{1-\varepsilon
}\left\vert \mathbf{\xi }+\mathbf{\eta }\right\vert ^{1-\varepsilon }}%
\right) \leqslant \frac{C}{\left\vert \mathbf{\xi }\right\vert
^{2-2\varepsilon }},
\end{equation*}%
one just needs to notice%
\begin{equation*}
\left\vert \mathbf{\xi }+\mathbf{\eta }\right\vert =\left\vert (\left\vert 
\mathbf{\xi }\right\vert ,0)-\rho e^{i\left( \theta +\pi \right)
}\right\vert ,
\end{equation*}%
then one can proceed as above. Therefore we conclude the proof of the second
part of Lemma \ref{Lemma:KeyLemmaFor3*2d}.

\section{The Lens Transform / Preparation for Theorem \protect\ref%
{Theorem:Collapsing for GP}\label{Sec:TheLensTransform}}

From now on, we enter the proof of Theorems \ref{Theorem:Collapsing for GP}
and \ref{Theorem:Uniqueness of GP}. We set $n=3$ until Section \ref%
{Sec:ProofOfBEC}. In this section, we set up the tools involved in the proof
of Theorem \ref{Theorem:Collapsing for GP}. We build the lens transform we
need and state the related properties. For simplicity of notations, we write 
$U^{\left( k+1\right) }(\tau ;s)$ to be the solution operator of equation %
\ref{eqn:homogeneous hierarchy with anisotropic traps} and $U_{\mathbf{y}%
}(\tau ;s)$ to be the solution operator of%
\begin{eqnarray*}
\left( i\partial _{\tau }-\frac{1}{2}H_{\mathbf{y}}(\tau )\right) u &=&0 \\
u(s,\mathbf{y}) &=&u_{s}(\mathbf{y}).
\end{eqnarray*}%
i.e. $U^{\left( k+1\right) }(\tau ;s)\gamma _{0}^{(k+1)}$ solves equation %
\ref{eqn:homogeneous hierarchy with anisotropic traps}. By definition, 
\begin{equation*}
U^{\left( k\right) }(\tau ;s)=\dprod\limits_{j=1}^{k}\left( U_{\mathbf{y}%
_{j}}(\tau ;s)U_{\mathbf{y}_{j}^{\prime }}(-\tau ;-s)\right) .
\end{equation*}

To be specific, we need this version of the generalized lens transform:

\begin{proposition}
\label{Proposition:TheLensTransformNeeded}There is an operator $L_{\mathbf{x}%
}(t)$ which satisfies the hypothesis in Theorem \ref{Theorem:3*nd} such that%
\begin{eqnarray*}
&&U^{(k+1)}(\tau ;0)\gamma _{0}^{(k+1)} \\
&=&\tprod_{j=1}^{k+1}\left( \dprod\limits_{l=1}^{3}\frac{e^{i\frac{\dot{\beta%
}_{l}(\tau )}{\beta _{l}(\tau )}\frac{\left( \left\vert y_{j,l}\right\vert
^{2}-\left\vert y_{j,l}^{\prime }\right\vert ^{2}\right) }{2}}}{\beta
_{l}(\tau )}\right) \\
&&u^{(k+1)}(\frac{\alpha _{1}(\tau )}{\beta _{1}(\tau )},\frac{y_{1,1}}{%
\beta _{1}(\tau )},\frac{y_{1,2}}{\beta _{2}(\tau )},\frac{y_{1,3}}{\beta
_{3}(\tau )},...,\frac{y_{k+1,1}}{\beta _{1}(\tau )},\frac{y_{k+1,2}}{\beta
_{2}(\tau )},\frac{y_{k+1,3}}{\beta _{3}(\tau )}; \\
&&\frac{y_{1,1}^{\prime }}{\beta _{1}(\tau )},\frac{y_{1,2}^{\prime }}{\beta
_{2}(\tau )},\frac{y_{1,3}^{\prime }}{\beta _{3}(\tau )},...,\frac{%
y_{k+1,1}^{\prime }}{\beta _{1}(\tau )},\frac{y_{k+1,2}^{\prime }}{\beta
_{2}(\tau )},\frac{y_{k+1,3}^{\prime }}{\beta _{3}(\tau )})
\end{eqnarray*}%
in $[-T_{0},T_{0}],$ where $\alpha _{l}$ and $\beta _{l}$ are defined as in
Claim \ref{Claim:Properties of Alpha and Beta}, and $u^{(k+1)}(t,%
\overrightarrow{\mathbf{x}_{k+1}};\overrightarrow{\mathbf{x}_{k+1}^{\prime }}%
)$ is the solution of%
\begin{eqnarray*}
\left( i\partial _{t}+L_{\overrightarrow{\mathbf{x}_{k+1}}}(t)-L_{%
\overrightarrow{\mathbf{x}_{k+1}^{\prime }}}(t)\right) u^{(k+1)} &=&0\text{
in }\mathbb{R}^{\left( 6k+6\right) +1} \\
u^{(k+1)}(0,\overrightarrow{\mathbf{x}_{k+1}};\overrightarrow{\mathbf{x}%
_{k+1}^{\prime }}) &=&\gamma _{0}^{(k+1)}.
\end{eqnarray*}
\end{proposition}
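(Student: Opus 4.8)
The plan is to establish the anisotropic form of Carles' lens transform \cite{Carles} by a direct substitution, and then to merge the three mismatched internal time-scales into a single time variable at the price of direction-dependent coefficients. First I would record, from Claim \ref{Claim:Properties of Alpha and Beta} together with Conditions \ref{Condition:EvenExtension}--\ref{Condition:FastSwitch} and the even extension $\eta_l(-\tau)=\eta_l(\tau)$, that the solution $\beta_l$ of \ref{eqn:beta} is $C^{3}$, even, and nonvanishing on $[-T_0,T_0]$, that $\alpha_l$ is the odd solution of the same ODE with $\alpha_l(0)=0$, $\dot\alpha_l(0)=1$, and that the Wronskian relation $\dot\alpha_l\beta_l-\alpha_l\dot\beta_l\equiv 1$ makes $t_l(\tau):=\alpha_l(\tau)/\beta_l(\tau)=\int_0^{\tau}\beta_l(\sigma)^{-2}\,d\sigma$ a $C^{1}$, strictly increasing reparametrization of $[-T_0,T_0]$ with $dt_l/d\tau=\beta_l^{-2}$.

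Since $H_{\overrightarrow{\mathbf{y}_{k+1}}}(\tau)=\sum_{j,l}\bigl(-\partial_{y_{j,l}}^{2}+\eta_l(\tau)y_{j,l}^{2}\bigr)$ is a sum of commuting one-dimensional operators on distinct variables, $U^{(k+1)}(\tau;0)$ factors into a product over $j=1,\dots,k+1$ and $l=1,2,3$ of one-dimensional Hermite propagators in the $y_{j,l}$ and --- by the even extension --- time-reversed ones in the $y_{j,l}^{\prime}$, so the whole statement reduces to a scalar identity. For $v$ solving $i\partial_\tau v=\frac12\bigl(-\partial_y^{2}+\eta_l(\tau)y^{2}\bigr)v$ I would check, by substitution and the chain rule, that
\[
v(\tau,y)=\frac{1}{\sqrt{\beta_l(\tau)}}\,e^{\,i\frac{\dot\beta_l(\tau)}{2\beta_l(\tau)}y^{2}}\,w\!\Bigl(t_l(\tau),\tfrac{y}{\beta_l(\tau)}\Bigr),
\]
where $w$ solves the constant-coefficient equation $i\partial_s w+\frac12\partial_x^{2}w=0$: differentiating the quadratic phase by means of \ref{eqn:beta} (its Riccati form) exactly absorbs the trapping term $\frac12\eta_l y^{2}v$, the transport terms in $z=y/\beta_l$ cancel, and one is left with $i\beta_l^{-2}\partial_s w+\frac12\beta_l^{-2}\partial_x^{2}w=0$. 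The primed variables use the time-reversed (equivalently conjugated) identity, which by $\beta_l$ even and $\dot\beta_l,\alpha_l$ odd produces the phase $e^{-i\frac{\dot\beta_l}{2\beta_l}(y^{\prime})^{2}}$, the time $-t_l(\tau)$, and the sign-reversed generator $i\partial_s w^{\prime}-\frac12\partial_x^{2}w^{\prime}=0$. Multiplying the $2\cdot 3\cdot(k+1)$ scalar identities reproduces exactly the prefactor $\prod_{j,l}\beta_l(\tau)^{-1}e^{\,i\frac{\dot\beta_l}{2\beta_l}(|y_{j,l}|^{2}-|y_{j,l}^{\prime}|^{2})}$ and leaves a function that evolves freely in each direction $l$ with respect to its own clock $t_l(\tau)$.

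Finally I would pass to one time: restrict that function to the curve $\tau\mapsto(t_1(\tau),t_2(\tau),t_3(\tau))$, parametrize it by $t:=t_1(\tau)=\alpha_1(\tau)/\beta_1(\tau)$ with smooth inverse $\tau(t)$, and use $dt_l/dt=\beta_1(\tau(t))^{2}/\beta_l(\tau(t))^{2}$ in the chain rule to obtain $\bigl(i\partial_t+L_{\overrightarrow{\mathbf{x}_{k+1}}}(t)-L_{\overrightarrow{\mathbf{x}_{k+1}^{\prime}}}(t)\bigr)u^{(k+1)}=0$ with
\[
L_{\mathbf{x}}(t)=\sum_{l=1}^{3}a_l(t)\,\partial_{x_l}^{2},\qquad a_l(t)=\frac12\,\frac{\beta_1(\tau(t))^{2}}{\beta_l(\tau(t))^{2}}\ \text{on the image of }[-T_0,T_0],
\]
extended by the constant $\frac12$ elsewhere so that $u^{(k+1)}$ is defined on all of $\mathbb{R}^{(6k+6)+1}$. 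By Claim \ref{Claim:Properties of Alpha and Beta} the map $\tau(\cdot)$ is continuous and each $\beta_l$ is continuous and nonvanishing on the compact interval $[-T_0,T_0]$, so every $a_l$ is continuous, hence $L^{1}_{loc}$, and bounded below by some $c_0>0$; thus $L_{\mathbf{x}}(t)$ meets the hypotheses of Theorem \ref{Theorem:3*nd}. At $\tau=0$ one has $t=0$, $\beta_l(0)=1$, $\dot\beta_l(0)=0$, so every prefactor equals $1$ and the identity reduces to $u^{(k+1)}(0,\cdot;\cdot)=\gamma_0^{(k+1)}$; both sides of the asserted identity then solve the linear homogeneous hierarchy \ref{eqn:homogeneous hierarchy with anisotropic traps} with the same data, and uniqueness of that evolution finishes the proof.

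The elementary scalar computation of the second paragraph is the heart of the algebra and the one place where \ref{eqn:beta} must be used exactly as written; the genuinely new point beyond \cite{Carles} is the synchronization in the third paragraph, where the true anisotropy --- $\beta_1,\beta_2,\beta_3$ distinct --- forces the collapsed evolution to carry the time- and direction-dependent coefficients $a_l(t)$, which is precisely why Theorem \ref{Theorem:3*nd}, rather than the classical constant-coefficient collapsing estimate, is the tool needed in what follows.
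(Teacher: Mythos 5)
Your proof is correct and follows essentially the same route as the paper: reduce the $(k+1)$-body propagator to a tensor product of one-dimensional scalar propagators, apply the scalar generalized lens transform in each direction, multiply the prefactors, and then synchronize the three direction-dependent clocks $t_l=\alpha_l/\beta_l$ to the single time $t=\alpha_1/\beta_1$, which is exactly what produces the variable coefficients $a_l(t)=\tfrac12\beta_1^2/\beta_l^2$ and the need for the flat extension outside $[-\upsilon_1^{-1}(T_0),\upsilon_1^{-1}(T_0)]$. The one small deviation is that you verify the scalar identity by direct substitution into the Riccati form of equation \ref{eqn:beta}, whereas the paper instead cites Lemma \ref{Lemma:GLensTransform}, which it proves in the appendix via the metaplectic representation; both are standard and equivalent.
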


The proposition will be a corollary of a sequence of claims.

\begin{claim}
\label{Claim:Properties of Alpha and Beta}Assuming Conditions \ref%
{Condition:EvenExtension} and \ref{Condition:FastSwitch}, for $l=1,2,3$, the
system%
\begin{eqnarray}
\ddot{\alpha}_{l}(\tau )+\eta _{l}(\tau )\alpha _{l}(\tau ) &=&0,\alpha
_{l}(0)=0,\dot{\alpha}_{l}(0)=1,  \label{eqn: Alpha and Beta} \\
\ddot{\beta}_{l}(\tau )+\eta _{l}(\tau )\beta _{l}(\tau ) &=&0,\beta
_{l}(0)=1,\dot{\beta}_{l}(0)=0.  \notag
\end{eqnarray}%
defines an odd $\alpha _{l}$ and an even $\beta _{l}\in C^{2}(\mathbb{R})$
with the following properties

(1) $\beta _{l}$ is nonzero in $[-T_{0},T_{0}];$

(2) The Wronskian of $\alpha _{l}$ and $\beta _{l}$ is constant $1$ i.e.%
\begin{equation*}
\dot{\alpha}_{l}(\tau )\beta _{l}(\tau )-\alpha _{l}(\tau )\dot{\beta}%
_{l}(\tau )=1;
\end{equation*}

(3) The odd function%
\begin{equation*}
\upsilon _{l}(\tau )=\frac{\alpha _{l}(\tau )}{\beta _{l}(\tau )}
\end{equation*}%
is invertible in $[-T_{0},T_{0}]$ because 
\begin{equation*}
\dot{\upsilon}_{l}(\tau )=\frac{1}{\left( \beta _{l}(\tau )\right) ^{2}}>0%
\text{ in }[-T_{0},T_{0}].
\end{equation*}
\end{claim}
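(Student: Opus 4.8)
The plan is to dispatch the four assertions for a fixed $l\in\{1,2,3\}$ in order of increasing difficulty. By Condition \ref{Condition:EvenExtension} the even extension $\eta_l(\tau):=\eta_l(-\tau)$ for $\tau<0$ lies in $C^1(\mathbb{R})$, so the two linear second-order initial value problems in \ref{eqn: Alpha and Beta} have unique solutions on all of $\mathbb{R}$ by the standard existence--uniqueness--continuation theory, and these solutions are automatically $C^2$ (indeed one degree smoother than $\eta_l$) since $\ddot\alpha_l=-\eta_l\alpha_l$ and $\ddot\beta_l=-\eta_l\beta_l$. To get parity, note that because $\eta_l$ is even, $\tau\mapsto-\alpha_l(-\tau)$ solves exactly the initial value problem defining $\alpha_l$ and $\tau\mapsto\beta_l(-\tau)$ solves exactly the one defining $\beta_l$; uniqueness then forces $\alpha_l(-\tau)=-\alpha_l(\tau)$ and $\beta_l(-\tau)=\beta_l(\tau)$.

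For (2), set $W_l(\tau):=\dot\alpha_l(\tau)\beta_l(\tau)-\alpha_l(\tau)\dot\beta_l(\tau)$. Then $\dot W_l=\ddot\alpha_l\beta_l-\alpha_l\ddot\beta_l=-\eta_l\alpha_l\beta_l+\eta_l\alpha_l\beta_l=0$ (the $\dot\alpha_l\dot\beta_l$ cross terms cancel), so $W_l$ is constant, and $W_l(0)=1\cdot 1-0\cdot 0=1$, hence $W_l\equiv 1$; in particular $\alpha_l,\beta_l$ form a fundamental system.

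Assertion (1) is the step with real content and the only one using Condition \ref{Condition:FastSwitch}. Put $M_l:=\sup_\tau\eta_l(\tau)=\sup_\tau|\eta_l(\tau)|$, which is finite because $\dot\eta_l$ is supported in $[0,T_0]$ so $\eta_l$ is eventually constant; the case $M_l=0$ is trivial since then $\beta_l\equiv 1$. I would run a modified Pr\"{u}fer transformation: since $(\beta_l,\dot\beta_l)$ never vanishes simultaneously (uniqueness), write $\beta_l=R_l\cos\theta_l$, $\dot\beta_l=-\sqrt{M_l}\,R_l\sin\theta_l$ with $R_l=(\beta_l^2+\dot\beta_l^2/M_l)^{1/2}>0$ and $\theta_l$ a continuous lift of the angle. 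Differentiating and substituting $\ddot\beta_l=-\eta_l\beta_l$ yields $\dot\theta_l=\sqrt{M_l}\sin^2\theta_l+\tfrac{\eta_l}{\sqrt{M_l}}\cos^2\theta_l$, whence $0\le\dot\theta_l\le\sqrt{M_l}$ from $0\le\eta_l\le M_l$. The data $\beta_l(0)=1$, $\dot\beta_l(0)=0$ force $\theta_l(0)=0$, so $\theta_l(\tau)\le\sqrt{M_l}\,\tau<\pi/2$ for $\tau\in[0,T_0]$ by Condition \ref{Condition:FastSwitch}; therefore $\cos\theta_l>0$ and $\beta_l(\tau)=R_l(\tau)\cos\theta_l(\tau)>0$ on $[0,T_0]$, and by evenness $\beta_l$ is nonzero on $[-T_0,T_0]$. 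Conceptually this is just a Sturm comparison of $\ddot\beta_l+\eta_l\beta_l=0$ with $\ddot z+M_l z=0$, whose relevant solution $\cos(\sqrt{M_l}\,\tau)$ does not complete a quarter period on $[0,T_0]$.

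Finally, (3) is a corollary: by (1) the quotient $\upsilon_l=\alpha_l/\beta_l$ is a well-defined $C^2$ function on $[-T_0,T_0]$; it is odd because $\alpha_l$ is odd and $\beta_l$ even; and the quotient rule together with (2) gives $\dot\upsilon_l=(\dot\alpha_l\beta_l-\alpha_l\dot\beta_l)/\beta_l^2=W_l/\beta_l^2=1/\beta_l^2>0$, so $\upsilon_l$ is strictly increasing and hence a bijection of $[-T_0,T_0]$ onto its image, i.e. invertible there. The only genuine obstacle is (1): everything else is bookkeeping via uniqueness and Abel's identity, whereas the non-vanishing of $\beta_l$ requires the oscillation estimate made possible precisely by the fast-switching hypothesis.
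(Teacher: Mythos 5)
Your proof is correct and complete, and differs from the paper's principally in the treatment of part (1). The paper proves only (1) and declares the rest ``fairly trivial''; you carefully fill in that bookkeeping via uniqueness applied to $-\alpha_l(-\tau)$ and $\beta_l(-\tau)$, Abel's identity for the Wronskian, and the quotient rule, all correctly. For (1), the paper argues by contradiction using the Sturm--Picone comparison theorem: a zero $\tau_0$ of $\beta_l$ in $[-T_0,T_0]$ would, by evenness, give zeros at $\pm\tau_0$, which would force the comparison solution $\cos(\sqrt{M_l}\,\tau)$ (with $M_l=\sup_\tau|\eta_l|$) to vanish somewhere in $[-\tau_0,\tau_0]$, contradicting Condition \ref{Condition:FastSwitch}. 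You instead run a modified Pr\"ufer transformation $\beta_l=R_l\cos\theta_l$, $\dot\beta_l=-\sqrt{M_l}\,R_l\sin\theta_l$, derive $\dot\theta_l=\sqrt{M_l}\sin^2\theta_l+(\eta_l/\sqrt{M_l})\cos^2\theta_l\in[0,\sqrt{M_l}]$, and conclude directly that $0\leq\theta_l(\tau)\leq\sqrt{M_l}\,\tau<\pi/2$ on $[0,T_0]$, hence $\beta_l=R_l\cos\theta_l>0$ there. Both arguments encode the same oscillation comparison, but yours is self-contained, requires no named theorem, and quietly sidesteps the paper's remark that $\cos(\sqrt{M_l}\,\tau)$ ``is not a multiple of $\beta_l$''---a remark that fails in the degenerate case $\eta_l\equiv M_l$ (harmless, since then $\beta_l=\cos(\sqrt{M_l}\,\tau)$ directly has no zero on $[-T_0,T_0]$, but worth a separate line). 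The phase anchoring $\theta_l(0)=0$ from $\beta_l(0)=1$, $\dot\beta_l(0)=0$ and the monotonicity $\dot\theta_l\geq 0$ are exactly what make your bound bite; the computation checks out.
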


\begin{proof}
We show (1) only since all other statements are fairly trivial.

Suppose $\beta _{l}(\tau _{0})=0$ for some $\tau _{0}$ in $[-T_{0},T_{0}]$
then $\beta _{l}(-\tau _{0})=0$ via $\beta _{l}$ is even. Of course $\tau
_{0}\neq 0$ because $\beta _{l}(0)=1.$ Notice that $\cos \left( \tau \sqrt{%
\sup_{\tau }\left\vert \eta _{l}(\tau )\right\vert }\right) $ is a
nontrivial solution of%
\begin{equation*}
\ddot{v}(\tau )+\sup_{\tau }\left\vert \eta _{l}(\tau )\right\vert v(\tau
)=0.
\end{equation*}%
Since $\cos \left( \tau \sqrt{\sup_{\tau }\left\vert \eta _{l}(\tau
)\right\vert }\right) $ is not a multiple of $\beta _{l},$ $\cos \left( \tau 
\sqrt{\sup_{\tau }\left\vert \eta _{l}(\tau )\right\vert }\right) $ must
have at least one zero in $[-\tau _{0},\tau _{0}]$ due to the Sturm--Picone
comparison theorem. But this creates a contradiction.
\end{proof}

Though Claim \ref{Claim:Properties of Alpha and Beta} is elementary, its
consequences lying below make our procedure well-defined.

\begin{definition}
(A reminder of the norm) Let $\beta _{l}$ be defined via equation \ref{eqn:
Alpha and Beta}. We define 
\begin{equation*}
P_{\mathbf{y}}(\tau )=%
\begin{pmatrix}
i\beta _{1}(\tau )\frac{\partial }{\partial y_{1}}+\dot{\beta}_{1}(\tau
)y_{1} \\ 
i\beta _{2}(\tau )\frac{\partial }{\partial y_{2}}+\dot{\beta}_{2}(\tau
)y_{2} \\ 
i\beta _{3}(\tau )\frac{\partial }{\partial y_{3}}+\dot{\beta}_{3}(\tau
)y_{3}%
\end{pmatrix}%
\end{equation*}%
and%
\begin{equation*}
R_{\tau }^{k}=\tprod\nolimits_{j=1}^{k}P_{\mathbf{y}_{j}}(\tau )P_{\mathbf{y}%
_{j}^{\prime }}(-\tau ).
\end{equation*}
\end{definition}

\begin{lemma}
\label{Lemma:Monentum}$P_{\mathbf{y}}(\tau )$ commutes with the linear
operator%
\begin{equation*}
i\partial _{\tau }-\frac{1}{2}\left( -\triangle _{\mathbf{y}_{k}}+\eta (\tau
)\left\vert \mathbf{y}_{k}\right\vert ^{2}\right) .
\end{equation*}%
Moreover,%
\begin{equation*}
P_{\mathbf{y}}(\tau )U_{\mathbf{y}}(\tau ;s)f=U_{\mathbf{y}}(\tau ;s)P_{%
\mathbf{y}}(s)f.
\end{equation*}
\end{lemma}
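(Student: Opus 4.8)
The plan is to reduce the whole statement to a single one‑variable commutator identity that is \emph{forced} by the defining ODE $\ddot{\beta}_{l}+\eta _{l}\beta _{l}=0$, and then to upgrade that identity from the generator to the propagator by a uniqueness argument.

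First I would note that both $P_{\mathbf{y}}(\tau )$ and $i\partial _{\tau }-\frac{1}{2}H_{\mathbf{y}}(\tau )=i\partial _{\tau }+\frac{1}{2}\sum _{l=1}^{3}\left( \frac{\partial ^{2}}{\partial y_{l}^{2}}-\eta _{l}(\tau )y_{l}^{2}\right) $ split as sums over the coordinate directions $l=1,2,3$, and the $l$‑th component $P_{l}=i\beta _{l}(\tau )\frac{\partial }{\partial y_{l}}+\dot{\beta}_{l}(\tau )y_{l}$ of $P_{\mathbf{y}}$ involves only the variable $y_{l}$. Hence it suffices to prove, on Schwartz functions of $(\tau ,y)$, the scalar identity $\left[\, i\partial _{\tau }-\frac{1}{2}H_{\mathbf{y}}(\tau ),\, P_{l}\,\right]=0$. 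This reduction is legitimate because, via the Mehler‑type kernel underlying the lens transform used later in the paper, $U_{\mathbf{y}}(\tau ;s)$ maps the Schwartz class into itself, so all operators below act on a common dense invariant domain and the commutator manipulations are rigorous there.

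The computation itself is short. Writing $P=i\beta \partial _{y}+\dot{\beta}y$ and $\mathcal{L}=i\partial _{\tau }+\frac{1}{2}\partial _{y}^{2}-\frac{1}{2}\eta (\tau )y^{2}$ (dropping the subscript $l$) and splitting $\mathcal{L}$ into its three summands, one finds $[i\partial _{\tau },P]=-\dot{\beta}\partial _{y}+i\ddot{\beta}y$ (the mixed $\partial _{\tau }\partial _{y}$ and $y\partial _{\tau }$ terms cancel, while the $\tau$‑derivatives falling on $\beta$ and $\dot{\beta}$ produce the two surviving terms), $[\frac{1}{2}\partial _{y}^{2},P]=\dot{\beta}\partial _{y}$ (only the multiplication‑by‑$y$ part of $P$ contributes, giving $2\dot{\beta}\partial _{y}$ before the factor $\frac{1}{2}$), and $[-\frac{1}{2}\eta y^{2},P]=i\eta \beta y$ (only the $\partial _{y}$ part of $P$ contributes). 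Adding the three, the $\partial _{y}$ terms cancel and one is left with
\[
\left[\, i\partial _{\tau }-\tfrac{1}{2}H_{\mathbf{y}}(\tau ),\ P_{l}\,\right]=i\left( \ddot{\beta}_{l}(\tau )+\eta _{l}(\tau )\beta _{l}(\tau )\right) y_{l}=0
\]
by equation \ref{eqn:beta}. I would emphasize at this point that this is precisely the reason $\beta _{l}$ was chosen to solve that ODE: it is the unique normalization making $P_{\mathbf{y}}(\tau )$ a symmetry of the Hermite‑type Schr\"{o}dinger flow (an ``evolved momentum operator''), which is also what makes the norm \ref{formula:def of norm} natural.

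For the second assertion I would argue by uniqueness. Fix $f$ in the Schwartz class and set $v(\tau )=U_{\mathbf{y}}(\tau ;s)f$, so $\mathcal{L}v=0$ and $v(s)=f$. By the commutation just established, $w(\tau ):=P_{\mathbf{y}}(\tau )v(\tau )$ also satisfies $\mathcal{L}w=0$, with $w(s)=P_{\mathbf{y}}(s)f$. On the other hand $\widetilde{w}(\tau ):=U_{\mathbf{y}}(\tau ;s)\left( P_{\mathbf{y}}(s)f\right)$ solves the same linear equation with the same data at $\tau =s$. Since $U_{\mathbf{y}}(\tau ;s)$ is the (unitary) propagator of $i\partial _{\tau }-\frac{1}{2}H_{\mathbf{y}}(\tau )$, uniqueness for this time‑dependent Schr\"{o}dinger equation with subquadratic potential — which also follows from the explicit lens/Mehler representation — gives $w=\widetilde{w}$, i.e. $P_{\mathbf{y}}(\tau )U_{\mathbf{y}}(\tau ;s)f=U_{\mathbf{y}}(\tau ;s)P_{\mathbf{y}}(s)f$; a density argument then extends the identity to all $f$ for which both sides are defined. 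There is no genuine obstacle here: the only point requiring care is the functional‑analytic bookkeeping — $P_{\mathbf{y}}(\tau )$ is an unbounded first‑order operator, so the commutator identity must be run on Schwartz functions and then propagated by density — and no quantitative estimates beyond this are needed.
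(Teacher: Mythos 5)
Your proof is correct, but it takes a genuinely different route from the paper's. You verify the commutation by a direct, coordinate-by-coordinate computation of $\bigl[i\partial _{\tau }-\tfrac{1}{2}H_{\mathbf{y}}(\tau ),\,P_{l}\bigr]$, observe that the surviving term is $i(\ddot{\beta}_{l}+\eta _{l}\beta _{l})y_{l}=0$, and then obtain $P_{\mathbf{y}}(\tau )U_{\mathbf{y}}(\tau ;s)=U_{\mathbf{y}}(\tau ;s)P_{\mathbf{y}}(s)$ by a uniqueness argument for the time-dependent Schr\"odinger propagator. The paper instead works in the appendix through the metaplectic representation: it writes $U_{\mathbf{y}}(\tau ;s)=\mu (B(\tau ))\mu (B(s))^{-1}$ with $B(\tau )$ built from $\alpha _{l},\beta _{l}$, and invokes the conjugation rule (Theorem 2.15 in Folland) to identify
\begin{equation*}
P(\tau )=\mu (B(\tau ))\circ (-i\partial _{y})\circ \mu (B(\tau ))^{-1}=-i\beta (\tau )\partial _{y}-\dot{\beta}(\tau )y,
\end{equation*}
after which the propagator relation is a one-line cancellation of $\mu (B(\tau ))^{-1}\mu (B(\tau ))$. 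The representation-theoretic route has the advantage of \emph{explaining} the operator $P_{\mathbf{y}}(\tau )$ as literally the momentum operator conjugated by the propagator, which is what makes the norm in the main text natural and simultaneously handles position $Y(\tau )$; your direct computation is more elementary and self-contained (no Folland machinery, no care about where $\mu $ is single-valued), at the cost of the result looking like a happy coincidence rather than a structural fact. Your calculation checks out: $[i\partial _{\tau },P_{l}]=-\dot{\beta}_{l}\partial _{y_{l}}+i\ddot{\beta}_{l}y_{l}$, $[\tfrac{1}{2}\partial _{y_{l}}^{2},P_{l}]=\dot{\beta}_{l}\partial _{y_{l}}$, $[-\tfrac{1}{2}\eta _{l}y_{l}^{2},P_{l}]=i\eta _{l}\beta _{l}y_{l}$, and the sum vanishes by equation \ref{eqn:beta}; the uniqueness step for the second assertion is also sound given that $U_{\mathbf{y}}$ preserves the Schwartz class via the explicit lens/Mehler kernel of Lemma \ref{Lemma:GLensTransform}.
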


\begin{lemma}
\label{Lemma:GLensTransform}Say $K_{1}(t,x_{0},y_{0})$ is the Green's
function of the 1d free Schr\"{o}dinger equation%
\begin{equation*}
\left( i\partial _{t}+\frac{1}{2}\frac{\partial ^{2}}{\partial x^{2}}\right)
v=0,
\end{equation*}%
then%
\begin{equation}
U_{\mathbf{y}}(\tau ;0)u_{0}=\left( \dprod\limits_{l=1}^{3}\frac{e^{i\frac{%
\dot{\beta}_{l}(\tau )}{\beta _{l}(\tau )}\frac{y_{l}^{2}}{2}}}{\left( \beta
_{l}(\tau )\right) ^{\frac{1}{2}}}\right) \int \left(
\dprod\limits_{l=1}^{3}K_{1}(\frac{\alpha _{l}(\tau )}{\beta _{l}(\tau )},%
\frac{y_{l}}{\beta _{l}(\tau )},y_{0l})\right)
u_{0}(y_{01},y_{02},y_{03})dy_{01}dy_{02}dy_{03},
\label{formula:3d generalized lens transform}
\end{equation}%
valid in the interval $[-T,T]$ in which $\eta _{l}$ are Lipschitzian and $%
\beta _{l}(\tau )\neq 0.$
\end{lemma}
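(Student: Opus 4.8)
The plan is to reduce to one space dimension and then to verify Carles' generalized lens transform by a direct substitution, letting the elementary identities of Claim~\ref{Claim:Properties of Alpha and Beta} absorb all of the bookkeeping. Since $H_{\mathbf y}(\tau)=\sum_{l=1}^{3}\bigl(-\partial_{y_l}^{2}+\eta_l(\tau)y_l^{2}\bigr)$ is a sum of three operators acting in disjoint groups of variables, the propagator $U_{\mathbf y}(\tau;0)$ factors as the composition of the three one-dimensional propagators, and the operator on the right-hand side of formula~\ref{formula:3d generalized lens transform} is manifestly the composition of the three corresponding one-dimensional transforms. Hence it suffices to prove, for a single index (whose subscript I suppress), that
\[
u(\tau,y):=\frac{e^{\,i\frac{\dot\beta(\tau)}{\beta(\tau)}\frac{y^{2}}{2}}}{\beta(\tau)^{1/2}}\int K_1\!\Bigl(\tfrac{\alpha(\tau)}{\beta(\tau)},\tfrac{y}{\beta(\tau)},y_0\Bigr)u_0(y_0)\,dy_0
\]
solves $i\partial_\tau u+\tfrac12\partial_y^{2}u-\tfrac12\eta(\tau)y^{2}u=0$ with $u(0,\cdot)=u_0$, on any interval on which $\eta$ is Lipschitz (so that $\alpha,\beta\in C^{2}$ and the ODE identities hold pointwise) and $\beta$ is nonvanishing; by Claim~\ref{Claim:Properties of Alpha and Beta}(1) this includes $[-T_0,T_0]$.

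For the core step, set $v(t,x)=\int K_1(t,x,y_0)u_0(y_0)\,dy_0$, so that $v$ solves the free equation $i\partial_t v+\tfrac12\partial_x^{2}v=0$ and $u(\tau,y)=\beta^{-1/2}e^{\,i\dot\beta y^{2}/(2\beta)}\,v\bigl(\alpha/\beta,\ y/\beta\bigr)$. Expand $i\partial_\tau u+\tfrac12\partial_y^{2}u-\tfrac12\eta y^{2}u$ by the chain and product rules, collect the coefficients of $v$, $\partial_x v$, $\partial_t v$, $\partial_x^{2}v$ (all evaluated at $(\alpha/\beta,\,y/\beta)$), and then eliminate $\partial_t v$ through $\partial_t v=\tfrac{i}{2}\partial_x^{2}v$. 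Each coefficient vanishes: the coefficient of $\partial_x^{2}v$ equals $\tfrac12\beta^{-2}-\tfrac12\frac{d}{d\tau}(\alpha/\beta)$, which is $0$ by the Wronskian relation $\dot\alpha\beta-\alpha\dot\beta=1$ of Claim~\ref{Claim:Properties of Alpha and Beta}(2); the coefficient of $\partial_x v$ is proportional to $\frac{d}{d\tau}\beta^{-1}+\dot\beta\beta^{-2}$, again $0$; the part of the coefficient of $v$ that is independent of $y$ cancels because $\frac{d}{d\tau}\bigl(\log\beta^{-1/2}\bigr)=-\dot\beta/(2\beta)$; and the part of the coefficient of $v$ quadratic in $y$ equals $-\frac{d}{d\tau}\!\bigl(\tfrac{\dot\beta}{2\beta}\bigr)-\tfrac{\dot\beta^{2}}{2\beta^{2}}-\tfrac12\eta$, which reduces to $0$ after substituting $\ddot\beta=-\eta\beta$ from equation~\ref{eqn: Alpha and Beta}. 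This last cancellation is precisely the mechanism by which the quadratic potential term $-\tfrac12\eta y^{2}u$ is produced by the gauge phase $e^{\,i\dot\beta y^{2}/(2\beta)}$.

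It then remains to match the initial data and to invoke uniqueness. At $\tau=0$, equation~\ref{eqn: Alpha and Beta} gives $\beta(0)=1$, $\dot\beta(0)=0$ and $\alpha(0)=0$, so the prefactor is $1$ and the argument of $K_1$ is $(0,y,y_0)$; since $K_1(t,\cdot,\cdot)\to\delta$ as $t\to0$ we get $u(0,y)=u_0(y)$. Finally, $\tfrac12 H_{\mathbf y}(\tau)$ generates a well-defined unitary propagator on $L^{2}$ (it is a time-dependent quadratic Hamiltonian; this is also the content of the metaplectic discussion in the appendix), so the function just shown to solve the Cauchy problem must coincide with $U_{\mathbf y}(\tau;0)u_0$ --- first for Schwartz $u_0$, then for all $u_0\in L^{2}$ by density.

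The step I expect to be the main obstacle is purely computational: organizing the chain-rule expansion of $i\partial_\tau u+\tfrac12\partial_y^{2}u$ so that the four cancellations above become transparent, and in particular checking that no term linear in $y$ survives and that the phase generates exactly $-\tfrac12\eta y^{2}u$. A minor additional point is justifying the $t\to0$ limit and the interchange of differentiation with the $dy_0$ integration, which is routine for the Schr\"{o}dinger kernel acting on Schwartz data.
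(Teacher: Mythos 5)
Your proposal is correct, but it follows a genuinely different route than the paper. You prove Lemma~\ref{Lemma:GLensTransform} by direct substitution: reduce to one spatial dimension (which is justified since $H_{\mathbf{y}}$ is a direct sum of commuting 1d operators), write $u(\tau,y)=\beta^{-1/2}e^{i\dot\beta y^2/(2\beta)}v(\alpha/\beta,\,y/\beta)$ with $v$ a free solution, expand via chain and product rule, and verify the four cancellations using the Wronskian $\dot\alpha\beta-\alpha\dot\beta=1$ and the ODE $\ddot\beta=-\eta\beta$. I checked the key identities: the coefficient of $\partial_x v$ cancels since $x\dot\beta/\beta = y\dot\beta/\beta^2$; the coefficient of $\partial_x^2 v$ vanishes because $\tfrac{d}{d\tau}(\alpha/\beta)=\beta^{-2}$; and the quadratic-in-$y$ piece of the coefficient of $v$ vanishes because $\tfrac{d}{d\tau}(\dot\beta/\beta)=-\eta-\dot\beta^2/\beta^2$. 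So the verification goes through, and together with the initial-data check and uniqueness of the $L^2$-propagator, the argument is complete.

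The paper instead derives the formula from the metaplectic representation $\mu:Sp(2,\mathbb{R})\to U(L^2(\mathbb{R}))$: it shows that $\mu(B(\tau))f$ solves the trapped Schr\"odinger equation by computing $d\mu(B'B^{-1})$, then obtains the explicit lens-transform kernel by performing an LDU decomposition of the matrix $B(\tau)$ and applying Folland's explicit formulas for $\mu$ on lower-triangular, diagonal, and upper-triangular generators. Your direct-substitution proof is more elementary and self-contained (no need for any representation-theoretic background), and the algebra is short; its drawback is that it is purely a verification and does not explain the algebraic origin of the phase, the dilation, and the time reparametrization. The paper's approach is heavier machinery but buys more: the same metaplectic framework simultaneously yields Lemma~\ref{Lemma:Monentum} (the evolution of the momentum operator $P_{\mathbf{y}}(\tau)$) from $\mu(B)\circ(-i\partial_y)\circ\mu(B)^{-1}=(B^T)_{11}(-i\partial_y)+(B^T)_{12}y$, which your approach would still need to establish separately. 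Both proofs are valid on the interval where $\beta_l\neq 0$; with your method, the Lipschitz hypothesis on $\eta_l$ is used only to ensure $\alpha,\beta\in C^2$ so the pointwise ODE identities make sense, as you note.
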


\begin{proof}
Carles computed the isotropic case of formula \ref{formula:3d generalized
lens transform} in \cite{Carles}. We include a proof of Lemmas \ref%
{Lemma:Monentum}\ and \ref{Lemma:GLensTransform}\ using the metaplectic
representation in the appendix.
\end{proof}

We can now prove Proposition \ref{Proposition:TheLensTransformNeeded}. On
the one hand, via Claim \ref{Claim:Properties of Alpha and Beta}, we can
invert 
\begin{equation*}
t(\tau )=\upsilon _{1}(\tau )=\frac{\alpha _{1}(\tau )}{\beta _{1}(\tau )}%
\text{ in }[-T_{0},T_{0}].
\end{equation*}%
Therefore, the integral part of formula \ref{formula:3d generalized lens
transform} 
\begin{equation*}
\phi (t,\mathbf{x})=\int \left( K_{1}(t,x_{1},y_{01})K_{1}(\upsilon
_{2}(\upsilon _{1}^{-1}(t)),x_{2},y_{02})K_{1}(\upsilon _{3}(\upsilon
_{1}^{-1}(t)),x_{3},y_{03})\right)
u_{0}(y_{01},y_{02},y_{03})dy_{01}dy_{02}dy_{03}
\end{equation*}%
in fact solves%
\begin{eqnarray*}
\left( i\partial _{t}+\widetilde{L_{\mathbf{x}}}(t)\right) \phi &=&0\text{
in }\mathbb{R}^{3}\times \lbrack -\upsilon _{1}^{-1}(T_{0}),\upsilon
_{1}^{-1}(T_{0})] \\
\phi (0,\mathbf{x}) &=&u_{0},
\end{eqnarray*}%
where%
\begin{equation*}
\widetilde{L_{\mathbf{x}}}(t)=\frac{1}{2}\frac{\partial ^{2}}{\partial
x_{1}^{2}}+\frac{1}{2}\frac{\beta _{1}^{2}(\upsilon _{1}^{-1}(t))}{\beta
_{2}^{2}(\upsilon _{1}^{-1}(t))}\frac{\partial ^{2}}{\partial x_{2}^{2}}+%
\frac{1}{2}\frac{\beta _{1}^{2}(\upsilon _{1}^{-1}(t))}{\beta
_{3}^{2}(\upsilon _{1}^{-1}(t))}\frac{\partial ^{2}}{\partial x_{3}^{2}}.
\end{equation*}%
On the other hand, plugging $-\tau $ into formula \ref{formula:3d
generalized lens transform} yields 
\begin{equation*}
U_{\mathbf{y}}(-\tau ;0)u_{0}=\left( \dprod\limits_{l=1}^{3}\frac{e^{-i\frac{%
\dot{\beta}_{l}(\tau )}{\beta _{l}(\tau )}\frac{y_{l}^{2}}{2}}}{\left( \beta
_{l}(\tau )\right) ^{\frac{1}{2}}}\right) \int \left(
\dprod\limits_{l=1}^{3}K_{1}(-\frac{\alpha _{l}(\tau )}{\beta _{l}(\tau )},%
\frac{y_{l}}{\beta _{l}(\tau )},y_{0l})\right)
u_{0}(y_{01},y_{02},y_{03})dy_{01}dy_{02}dy_{03}
\end{equation*}%
because $\alpha _{l}$ and $\dot{\beta}_{l}$ are odd while $\beta _{l}$ are
even.

Whence in $[-T_{0},T_{0}]$%
\begin{eqnarray*}
U^{(k+1)}(\tau ;0)\gamma _{0}^{(k+1)} &=&\dprod\limits_{j=1}^{k+1}\left( U_{%
\mathbf{y}_{j}}(\tau ;0)U_{\mathbf{y}_{j}^{\prime }}(-\tau ;0)\right) \gamma
_{0}^{(k+1)} \\
&=&\tprod_{j=1}^{k+1}\left( \dprod\limits_{l=1}^{3}\frac{e^{i\frac{\dot{\beta%
}_{l}(\tau )}{\beta _{l}(\tau )}\frac{\left( \left\vert y_{j,l}\right\vert
^{2}-\left\vert y_{j,l}^{\prime }\right\vert ^{2}\right) }{2}}}{\beta
_{l}(\tau )}\right) \\
&&u^{(k+1)}(\frac{\alpha _{1}(\tau )}{\beta _{1}(\tau )},\frac{y_{1,1}}{%
\beta _{1}(\tau )},\frac{y_{1,2}}{\beta _{2}(\tau )},\frac{y_{1,3}}{\beta
_{3}(\tau )},...,\frac{y_{k+1,1}}{\beta _{1}(\tau )},\frac{y_{k+1,2}}{\beta
_{2}(\tau )},\frac{y_{k+1,3}}{\beta _{3}(\tau )}; \\
&&\frac{y_{1,1}^{\prime }}{\beta _{1}(\tau )},\frac{y_{1,2}^{\prime }}{\beta
_{2}(\tau )},\frac{y_{1,3}^{\prime }}{\beta _{3}(\tau )},...,\frac{%
y_{k+1,1}^{\prime }}{\beta _{1}(\tau )},\frac{y_{k+1,2}^{\prime }}{\beta
_{2}(\tau )},\frac{y_{k+1,3}^{\prime }}{\beta _{3}(\tau )})
\end{eqnarray*}%
if $u^{(k+1)}(t,\overrightarrow{\mathbf{x}_{k+1}};\overrightarrow{\mathbf{x}%
_{k+1}^{\prime }})$ solves%
\begin{eqnarray*}
\left( i\partial _{t}+\widetilde{L_{\overrightarrow{\mathbf{x}_{k+1}}}}(t)-%
\widetilde{L_{\overrightarrow{\mathbf{x}_{k+1}^{\prime }}}}(t)\right)
u^{(k+1)} &=&0\text{ in }\mathbb{R}^{6k+6}\times \lbrack -\upsilon
_{1}^{-1}(T_{0}),\upsilon _{1}^{-1}(T_{0})] \\
u^{(k+1)}(0,\overrightarrow{\mathbf{x}_{k+1}};\overrightarrow{\mathbf{x}%
_{k+1}^{\prime }}) &=&\gamma _{0}^{(k+1)}.
\end{eqnarray*}%
At long last, define%
\begin{equation*}
L_{\mathbf{x}}(t)={\Huge \{}%
\begin{array}{c}
\widetilde{L_{\mathbf{x}}}(t),\text{ when }t\in \lbrack -\upsilon
_{1}^{-1}(T_{0}),\upsilon _{1}^{-1}(T_{0})] \\ 
\widetilde{L_{\mathbf{x}}}(\upsilon _{1}^{-1}(T_{0})),\text{ when }%
t\geqslant \upsilon _{1}^{-1}(T_{0})\text{ or }t\leqslant -\upsilon
_{1}^{-1}(T_{0})%
\end{array}%
\end{equation*}%
then we obtain the desired variant of the generalized lens transform i.e.
Proposition \ref{Proposition:TheLensTransformNeeded}.

\section{Proof of Theorem \protect\ref{Theorem:Collapsing for GP}\label%
{Sec:ProofOfGPCollapsing}}

Without loss of generality, we show Theorem \ref{Theorem:Collapsing for GP}
for $B_{j,k+1}^{1}$ in $B_{j,k+1}$ when $j$ is taken to be $1.$ This
corresponds to the estimate:%
\begin{eqnarray}
&&\int_{s}^{T}d\tau \int_{\mathbb{R}^{3k}\times \mathbb{R}^{3k}}\left\vert
R_{\tau }^{(k)}\gamma ^{(k+1)}(\tau ,\overrightarrow{\mathbf{y}_{k}},\mathbf{%
y}_{1};\overrightarrow{\mathbf{y}_{k}^{\prime }},\mathbf{y}_{1})\right\vert
^{2}d\overrightarrow{\mathbf{y}_{k}}d\overrightarrow{\mathbf{y}_{k}^{\prime }%
}  \label{estimate:Collapse} \\
&\leqslant &C\left( \inf_{\tau \in \lbrack
0,T_{0}]}\dprod\limits_{l=2}^{3}\beta _{l}^{2}(\tau )\right) ^{-1}\int_{%
\mathbb{R}^{3(k+1)}\times \mathbb{R}^{3(k+1)}}\left\vert R_{\tau
}^{(k+1)}\gamma ^{(k+1)}(\tau ,\overrightarrow{\mathbf{y}_{k+1}};%
\overrightarrow{\mathbf{y}_{k+1}^{\prime }})\right\vert ^{2}d\overrightarrow{%
\mathbf{y}_{k+1}}d\overrightarrow{\mathbf{y}_{k+1}^{\prime }},  \notag
\end{eqnarray}%
$\forall \tau \in \lbrack s,T],$ if $\gamma ^{(k+1)}$ satisfies equation \ref%
{eqn:homogeneous hierarchy with anisotropic traps}.

By Proposition \ref{Proposition:TheLensTransformNeeded}, we compute%
\begin{eqnarray}
&&R_{\tau }^{(k)}\gamma ^{(k+1)}(\tau ,\overrightarrow{\mathbf{y}_{k}},%
\mathbf{y}_{1};\overrightarrow{\mathbf{y}_{k}^{\prime }},\mathbf{y}_{1})
\label{equation:naturality} \\
&=&\left( \dprod\limits_{l=1}^{3}\frac{1}{\beta _{l}(\tau )}\right)
\tprod_{j=1}^{k}\left( \dprod\limits_{l=1}^{3}\frac{e^{i\frac{\dot{\beta}%
_{l}(\tau )}{\beta _{l}(\tau )}\frac{\left( \left\vert y_{j,l}\right\vert
^{2}-\left\vert y_{j,l}^{\prime }\right\vert ^{2}\right) }{2}}}{\beta
_{l}(\tau )}\right) \left( \left( \prod_{j=1}^{k}\left( \nabla _{\mathbf{x}%
_{j}}\nabla _{\mathbf{x}_{j}^{\prime }}\right) \right) u^{(k+1)}(\frac{%
\alpha _{1}(\tau )}{\beta _{1}(\tau )},\overrightarrow{\mathbf{x}_{k}},%
\mathbf{x}_{1};\overrightarrow{\mathbf{x}_{k}^{\prime }},\mathbf{x}%
_{1})\right) ,  \notag
\end{eqnarray}%
if we let%
\begin{equation*}
x_{j,l}=\frac{y_{j,l}}{\beta _{l}(\tau )}\text{ and }x_{j,l}^{\prime }=\frac{%
y_{j,l}^{\prime }}{\beta _{l}(\tau )},\text{ }
\end{equation*}%
because of the relations%
\begin{equation*}
i\beta _{l}(\tau )\frac{\partial }{\partial y_{j,l}}\left( e^{i\frac{\dot{%
\beta}_{l}(\tau )}{\beta _{l}(\tau )}\frac{\left\vert y_{j,l}\right\vert ^{2}%
}{2}}\right) +\dot{\beta}_{l}(\tau )y_{j,l}\left( e^{i\frac{\dot{\beta}%
_{l}(\tau )}{\beta _{l}(\tau )}\frac{\left\vert y_{j,l}\right\vert ^{2}}{2}%
}\right) =0,
\end{equation*}%
\begin{equation*}
\beta _{l}(\tau )\frac{\partial }{\partial y_{j,l}}=\frac{\partial }{%
\partial x_{j,l}}.
\end{equation*}%
Consequently,%
\begin{eqnarray*}
&&\int_{s}^{T}d\tau \int_{\mathbb{R}^{3k}\times \mathbb{R}^{3k}}\left\vert
R_{\tau }^{(k)}\gamma ^{(k+1)}(\tau ,\overrightarrow{\mathbf{y}_{k}},\mathbf{%
y}_{1};\overrightarrow{\mathbf{y}_{k}^{\prime }},\mathbf{y}_{1})\right\vert
^{2}d\overrightarrow{\mathbf{y}_{k}}d\overrightarrow{\mathbf{y}_{k}^{\prime }%
} \\
&=&\int_{s}^{T}d\tau \int_{\mathbb{R}^{6k}}\left\vert \left(
\dprod\limits_{l=1}^{3}\frac{1}{\beta _{l}(\tau )}\right) ^{k+1}\left(
\prod_{j=1}^{k}\left( \nabla _{\mathbf{x}_{j}}\nabla _{\mathbf{x}%
_{j}^{\prime }}\right) \right) u^{(k+1)}(\frac{\alpha _{1}(\tau )}{\beta
_{1}(\tau )},\overrightarrow{\mathbf{x}_{k}},\mathbf{x}_{1};\overrightarrow{%
\mathbf{x}_{k}^{\prime }},\mathbf{x}_{1})\right\vert ^{2}d\overrightarrow{%
\mathbf{y}_{k}}d\overrightarrow{\mathbf{y}_{k}^{\prime }} \\
&=&\int_{s}^{T}\frac{d\tau }{\left( \beta _{1}(\tau )\right) ^{2}}\int_{%
\mathbb{R}^{6k}}\left( \dprod\limits_{l=2}^{3}\frac{1}{\beta _{l}(\tau )}%
\right) ^{2}\left\vert \left( \prod_{j=1}^{k}\left( \nabla _{\mathbf{x}%
_{j}}\nabla _{\mathbf{x}_{j}^{\prime }}\right) \right) u^{(k+1)}(\frac{%
\alpha _{1}(\tau )}{\beta _{1}(\tau )},\overrightarrow{\mathbf{x}_{k}},%
\mathbf{x}_{1};\overrightarrow{\mathbf{x}_{k}^{\prime }},\mathbf{x}%
_{1})\right\vert ^{2}d\overrightarrow{\mathbf{x}_{k}}d\overrightarrow{%
\mathbf{x}_{k}^{\prime }} \\
&\leqslant &\left( \inf_{\tau \in \lbrack
0,T_{0}]}\dprod\limits_{l=2}^{3}\beta _{l}^{2}(\tau )\right)
^{-1}\int_{s}^{T}\frac{d\tau }{\left( \beta _{1}(\tau )\right) ^{2}}\int_{%
\mathbb{R}^{6k}}\left\vert \left( \prod_{j=1}^{k}\left( \nabla _{\mathbf{x}%
_{j}}\nabla _{\mathbf{x}_{j}^{\prime }}\right) \right) u^{(k+1)}(\frac{%
\alpha _{1}(\tau )}{\beta _{1}(\tau )},\overrightarrow{\mathbf{x}_{k}},%
\mathbf{x}_{1};\overrightarrow{\mathbf{x}_{k}^{\prime }},\mathbf{x}%
_{1})\right\vert ^{2}d\overrightarrow{\mathbf{x}_{k}}d\overrightarrow{%
\mathbf{x}_{k}^{\prime }} \\
&\leqslant &\left( \inf_{\tau \in \lbrack
0,T_{0}]}\dprod\limits_{l=2}^{3}\beta _{l}^{2}(\tau )\right)
^{-1}\int_{-\infty }^{\infty }dt\int_{\mathbb{R}^{3k}\times \mathbb{R}%
^{3k}}\left\vert \left( \prod_{j=1}^{k}\left( \nabla _{\mathbf{x}_{j}}\nabla
_{\mathbf{x}_{j}^{\prime }}\right) \right) u^{(k+1)}(t,\overrightarrow{%
\mathbf{x}_{k}},\mathbf{x}_{1};\overrightarrow{\mathbf{x}_{k}^{\prime }},%
\mathbf{x}_{1})\right\vert ^{2}d\overrightarrow{\mathbf{x}_{k}}d%
\overrightarrow{\mathbf{x}_{k}^{\prime }}
\end{eqnarray*}%
where we used the fact that the Wronskian of $\alpha _{l}$ and $\beta _{l}$
is constant $1$, i.e. 
\begin{equation*}
\frac{dt}{d\tau }=\frac{\dot{\alpha}_{1}(\tau )\beta _{1}(\tau )-\alpha
_{1}(\tau )\dot{\beta}_{1}(\tau )}{\left( \beta _{1}(\tau )\right) ^{2}}=%
\frac{1}{\left( \beta _{1}(\tau )\right) ^{2}}
\end{equation*}%
as shown in Claim \ref{Claim:Properties of Alpha and Beta}.

A corollary of Theorem \ref{Theorem:3*nd} tells us that

\begin{corollary}
Let $L_{\mathbf{x}}(t)$ be the same as in Theorem \ref{Theorem:3*nd} and $%
u^{(k+1)}$ verify%
\begin{equation*}
\left( i\partial _{t}+L_{\overrightarrow{\mathbf{x}_{k+1}}}(t)-L_{%
\overrightarrow{\mathbf{x}_{k+1}^{\prime }}}(t)\right) u^{(k+1)}=0.
\end{equation*}%
Then there is a $C>0,$ independent of $j,$ $k,$ and $u^{(k+1)}$ s.t. 
\begin{eqnarray*}
&&\left\Vert \left( \prod_{j=1}^{k}\left( \nabla _{\mathbf{x}_{j}}\nabla _{%
\mathbf{x}_{j}^{\prime }}\right) \right) \left(
B_{j,k+1}^{1}u^{(k+1)}\right) (t,\overrightarrow{\mathbf{x}_{k}};%
\overrightarrow{\mathbf{x}_{k}^{\prime }})\right\Vert _{L^{2}(\mathbb{R}%
\times \mathbb{R}^{3k}\times \mathbb{R}^{3k})} \\
&=&\left\Vert \left( \prod_{j=1}^{k}\left( \nabla _{\mathbf{x}_{j}}\nabla _{%
\mathbf{x}_{j}^{\prime }}\right) \right) u^{(k+1)}(t,\overrightarrow{\mathbf{%
x}_{k}},\mathbf{x}_{1};\overrightarrow{\mathbf{x}_{k}^{\prime }},\mathbf{x}%
_{1})\right\Vert _{L^{2}(\mathbb{R}\times \mathbb{R}^{3k}\times \mathbb{R}%
^{3k})} \\
&\leqslant &C\left\Vert \left( \prod_{j=1}^{k+1}\left( \nabla _{\mathbf{x}%
_{j}}\nabla _{\mathbf{x}_{j}^{\prime }}\right) \right) u^{(k+1)}(0,%
\overrightarrow{\mathbf{x}_{k+1}};\overrightarrow{\mathbf{x}_{k+1}^{\prime }}%
)\right\Vert _{L^{2}(\mathbb{R}^{3(k+1)}\times \mathbb{R}^{3(k+1)})},
\end{eqnarray*}
\end{corollary}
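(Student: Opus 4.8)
The plan is as follows. The claimed equality is just the definition of $B_{1,k+1}^{1}$: it identifies $\mathbf{x}_{k+1}=\mathbf{x}_{k+1}^{\prime }=\mathbf{x}_{1}$, so that $B_{1,k+1}^{1}u^{(k+1)}(t,\overrightarrow{\mathbf{x}_{k}};\overrightarrow{\mathbf{x}_{k}^{\prime }})=u^{(k+1)}(t,\overrightarrow{\mathbf{x}_{k}},\mathbf{x}_{1};\overrightarrow{\mathbf{x}_{k}^{\prime }},\mathbf{x}_{1})$ and the first two norms are literally the same object. Since the equation for $u^{(k+1)}$ is invariant under permuting the pairs $(\mathbf{x}_{j},\mathbf{x}_{j}^{\prime })$, a relabelling reduces everything to $j=1$; I would also take $u^{(k+1)}$ Schwartz and recover the general case by density.

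For the inequality I would reduce to Theorem \ref{Theorem:3*nd} with $n=3$, so that the exponent $\frac{n-1}{2}$ equals $1$. Single out the three \emph{active} slots — the unprimed slot $1$, the unprimed slot $k+1$, and the primed slot $k+1$ — and Fourier transform $u^{(k+1)}$ in the remaining $2k-1$ \emph{spectator} variables $\mathbf{x}_{2},\dots ,\mathbf{x}_{k},\mathbf{x}_{1}^{\prime },\dots ,\mathbf{x}_{k}^{\prime }$. Because $L_{\overrightarrow{\mathbf{x}_{k+1}}}(t)-L_{\overrightarrow{\mathbf{x}_{k+1}^{\prime }}}(t)$ is a sum of one-slot second-order operators with no cross terms, each spectator $L_{\mathbf{x}_{j}}(t)$ becomes multiplication by the purely time-dependent symbol $-\sum_{l}a_{l}(t)\xi_{j,l}^{2}$; I would absorb that factor into a unit-modulus gauge $e^{i\Phi(t,\mathbf{\xi})}$ depending only on $t$ and the spectator frequencies, which commutes with $\nabla_{\mathbf{x}_{1}}$, $\nabla_{\mathbf{x}_{k+1}}$, $\nabla_{\mathbf{x}_{k+1}^{\prime }}$ and preserves every $L^{2}$ norm. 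After this, for each fixed spectator frequency the resulting function of $(t,\mathbf{x}_{1},\mathbf{x}_{k+1},\mathbf{x}_{k+1}^{\prime })$ solves exactly equation \ref{eqn:3*3d} with $\pm$ chosen to be $-$ and with the same coefficients $a_{l}$, and the restriction $\mathbf{x}_{1}=\mathbf{x}_{k+1}=\mathbf{x}_{k+1}^{\prime }$ is precisely the collapse produced by $B_{1,k+1}^{1}$.

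Next I would apply Theorem \ref{Theorem:3*nd} slice by slice in the spectator frequency: it bounds the space-time $L^{2}$ norm of $\nabla_{\mathbf{x}_{1}}$ applied to the triply restricted function — which is exactly the $j=1$ factor of $\prod_{j=1}^{k}(\nabla_{\mathbf{x}_{j}}\nabla_{\mathbf{x}_{j}^{\prime }})$ acting on $B_{1,k+1}^{1}u^{(k+1)}$ — by the $L^{2}$ norm of $\nabla_{\mathbf{x}_{1}}\nabla_{\mathbf{x}_{k+1}}\nabla_{\mathbf{x}_{k+1}^{\prime }}$ applied to the data, with a constant independent of the spectator frequency, of $j$, $k$, and of $u^{(k+1)}$. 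The remaining gradients $\nabla_{\mathbf{x}_{1}^{\prime }}$ and $\nabla_{\mathbf{x}_{j}}\nabla_{\mathbf{x}_{j}^{\prime }}$ ($2\leqslant j\leqslant k$) are the spectator frequency weights $|\mathbf{\xi}_{1}^{\prime }|\prod_{j=2}^{k}|\mathbf{\xi}_{j}||\mathbf{\xi}_{j}^{\prime }|$, which are constants on each slice and pass through the estimate unchanged; squaring, integrating in the spectator frequency, and using Plancherel in the $2k-1$ spectator variables turns the left-hand side back into $\Vert \prod_{j=1}^{k}(\nabla_{\mathbf{x}_{j}}\nabla_{\mathbf{x}_{j}^{\prime }})B_{1,k+1}^{1}u^{(k+1)}\Vert _{L^{2}}^{2}$ and the right-hand side into $\Vert \prod_{j=1}^{k+1}(\nabla_{\mathbf{x}_{j}}\nabla_{\mathbf{x}_{j}^{\prime }})u^{(k+1)}(0,\cdot )\Vert _{L^{2}}^{2}$, which is the corollary.

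I expect the only delicate point to be the reduction in the second step rather than Theorem \ref{Theorem:3*nd} itself: I must check that the spectator Fourier transform together with the unit-modulus gauge genuinely land $u^{(k+1)}$ inside the hypotheses of Theorem \ref{Theorem:3*nd} — in particular that the three active coefficients still satisfy $a_{l}\geqslant c_{0}>0$, that the gauge commutes with all the gradients in play, and that the time variable may be integrated over all of $\mathbb{R}$ (which is automatic here, since $u^{(k+1)}$ is global in $t$). Granting that, the rest is just Plancherel and Fubini.
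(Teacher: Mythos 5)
Your proposal is correct, and it is the standard Plancherel/frequency-slicing argument: the paper states the corollary without a separate proof, and this frequency-slicing reduction to Theorem~\ref{Theorem:3*nd} (Fourier transform in the $2k-1$ spectator variables, peel off the purely $(t,\boldsymbol{\xi})$-dependent unit-modulus gauge coming from the spectator part of $L_{\overrightarrow{\mathbf{x}_{k+1}}}-L_{\overrightarrow{\mathbf{x}_{k+1}'}}$, apply the three-variable estimate slice-by-slice with the spectator frequency weights carried as constants, then reassemble by Plancherel) is precisely what is implicit there and in the analogous Klainerman--Machedon argument. Your observations that the first line is definitional, that permutation symmetry reduces to $j=1$, and that the gauge commutes with the active gradients and with the restriction $\mathbf{x}_{k+1}=\mathbf{x}_{k+1}'=\mathbf{x}_{1}$ are all the points that need checking, and they all hold.
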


Whence inequality \ref{estimate:Collapse} follows.

\section{The Uniqueness of Hierarchy \protect\ref{equation:Gross-Pitaevskii
hiearchy with anisotropic traps}\label{Sec:ProofOfUniqueness}}

To get Theorem \ref{Theorem:Uniqueness of GP}, we of course use the
Klainerman-Machedon board game argument to group the terms. For convenience,
we assume $b_{0}=1$ here.

\begin{lemma}
\label{lemma:MateiLemma}One can express $\gamma ^{(1)}(\tau _{1},\mathbf{%
\cdot };\mathbf{\cdot })$ in the Gross-Pitaevskii hierarchy \ref%
{equation:Gross-Pitaevskii hiearchy with anisotropic traps} as a sum of at
most $4^{n}$ terms of the form 
\begin{equation*}
\int_{D}J(\underline{\tau }_{n+1},\mu _{m})d\underline{\tau }_{n+1},
\end{equation*}%
or in other words, 
\begin{equation}
\gamma ^{(1)}(\tau _{1},\mathbf{\cdot };\mathbf{\cdot })=\sum_{m}\int_{D}J(%
\underline{\tau }_{n+1},\mu _{m})d\underline{\tau }_{n+1}.
\label{formula:MateiKlainermanFormula}
\end{equation}%
Here $\underline{\tau }_{n+1}=(\tau _{2},\tau _{3},...,\tau _{n+1})$, $%
D\subset \lbrack s,\tau _{1}]^{n}$, $\mu _{m}$ are a set of maps from $%
\{2,...,n+1\}$ to $\{1,...,n\}$ satisfying $\mu _{m}(2)=1$ and $\mu
_{m}(j)<j $ for all $j,$ and%
\begin{eqnarray*}
J(\underline{\tau }_{n+1},\mu _{m}) &=&U^{(1)}(\tau _{1};\tau
_{2})B_{1,2}U^{(2)}(\tau _{2};\tau _{3})B_{\mu _{m}(3),2}... \\
&&U^{(n)}(\tau _{n};\tau _{n+1})B_{\mu _{m}(n+1),n+1}(\gamma ^{(n+1)}(\tau
_{n+1},\mathbf{\cdot };\mathbf{\cdot })).
\end{eqnarray*}
\end{lemma}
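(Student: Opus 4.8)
The plan is to iterate Duhamel's formula $n$ times and then run the Klainerman--Machedon ``board game'' to collapse the resulting $n!$ time-ordered integrals into at most $4^{n}$ terms. First I would use that $\left\{ \gamma ^{(k)}\right\} $ solves hierarchy \ref{equation:Gross-Pitaevskii hiearchy with anisotropic traps} with $b_{0}=1$ and zero initial data at time $s$, so that Duhamel's principle at level $k$ gives
\begin{equation*}
\gamma ^{(k)}(\tau _{k},\mathbf{\cdot };\mathbf{\cdot })=\sum_{j=1}^{k}\int_{s}^{\tau _{k}}U^{(k)}(\tau _{k};\tau _{k+1})B_{j,k+1}\left( \gamma ^{(k+1)}(\tau _{k+1},\mathbf{\cdot };\mathbf{\cdot })\right) d\tau _{k+1}.
\end{equation*}
Applying this successively for $k=1,2,\dots ,n$ writes $\gamma ^{(1)}(\tau _{1},\mathbf{\cdot };\mathbf{\cdot })$ as $\sum _{\mu }\int_{s\leqslant \tau _{n+1}\leqslant \cdots \leqslant \tau _{2}\leqslant \tau _{1}}J(\underline{\tau }_{n+1},\mu )\,d\underline{\tau }_{n+1}$, where $\mu $ ranges over all maps $\mu :\{2,\dots ,n+1\}\rightarrow \{1,\dots ,n\}$ with $\mu (2)=1$ and $\mu (j)\leqslant j-1$ (there are $n!$ of them) and $J(\underline{\tau }_{n+1},\mu )$ is the time-ordered product of propagators and $B$'s capped by the datum $\gamma ^{(n+1)}(\tau _{n+1})$, exactly as in the statement.

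The substance of the lemma is the combinatorial reorganization of this sum. Following Klainerman and Machedon \cite{KlainermanAndMachedon}, I would put on the set of pairs (admissible $\mu $, attached time-simplex) the equivalence relation generated by the ``acceptable moves'': each move relabels two consecutive integration times $\tau _{j}\leftrightarrow \tau _{j+1}$ and simultaneously adjusts the particle indices of the neighbouring $B$-operators by a transposition. Such a move leaves the integral unchanged because (i) $U^{(k)}(\tau ;s)=\prod_{j=1}^{k}U_{\mathbf{y}_{j}}(\tau ;s)U_{\mathbf{y}_{j}^{\prime }}(-\tau ;-s)$ is built from identical one-particle propagators and hence commutes with permutations of the labels $\{1,\dots ,k\}$ --- the quadratic trap causes no difficulty here since $\eta _{l}(\tau )$ is common to all particles; (ii) the operators $B_{j,k+1}$ of formula \ref{formula:B} transform covariantly under such permutations; and (iii) each $\gamma ^{(k)}$ is symmetric in its particle labels, as we are dealing with Bosons. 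The board game then delivers three facts: every equivalence class contains a canonical ``upper echelon'' representative $\mu _{m}$, which still obeys $\mu _{m}(2)=1$ and $\mu _{m}(j)<j$; summing the time-ordered integrals over all members of a single class equals $\int_{D}J(\underline{\tau }_{n+1},\mu _{m})\,d\underline{\tau }_{n+1}$ for a region $D\subset \lbrack s,\tau _{1}]^{n}$ obtained by gluing together the rearranged simplices belonging to that class; and the number of classes is at most $4^{n}$.

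Collecting one representative per class together with its domain $D$ produces formula \ref{formula:MateiKlainermanFormula} with at most $4^{n}$ summands, which is the assertion. The hard part is entirely the bookkeeping just described: checking that the acceptable moves generate an equivalence relation under which the integrand is preserved up to a relabeling of the $\tau _{j}$'s, that the class-sum of time-ordered simplices fills a single set $D\subset \lbrack s,\tau _{1}]^{n}$, and that the class count is $4^{n}$. All of this is carried out in \cite{KlainermanAndMachedon}; the only point that needs verification in the present setting is that nothing there relied on the free evolution beyond its invariance under particle permutations, and this holds verbatim because $H_{\overrightarrow{\mathbf{y}_{k}}}(\tau )$ is a sum of identical one-particle operators, so $U^{(k)}(\tau ;s)$ inherits the required symmetry.
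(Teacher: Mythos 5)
Your proposal is correct and follows essentially the same route as the paper: iterate Duhamel, then invoke the Klainerman--Machedon board game with the free propagators replaced by the time-dependent $U_{\mathbf{y}}$ and $U^{(k)}$. The paper states this tersely; your added observation that the board game only needs permutation symmetry of the propagator (which $H_{\overrightarrow{\mathbf{y}_k}}(\tau)$ inherits from being a sum of identical one-particle operators) together with the symmetry of $\gamma^{(k)}$ is precisely the verification that justifies the citation.
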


\begin{proof}
The RHS of formula \ref{formula:MateiKlainermanFormula} is in fact a Duhamel
principle. This lemma follows from the proof of Theorem 3.4 in \cite%
{KlainermanAndMachedon} which uses a board game inspired by the Feynman
graph argument in \cite{E-S-Y2}. One just needs to replace $%
e^{i(t_{1}-t_{2})\triangle _{y}}$ by $U_{\mathbf{y}}(t_{1};t_{2})$, and $%
e^{i(t_{1}-t_{2})\triangle ^{(k)}}$ by $U^{(k)}(t_{1};t_{2}).$
\end{proof}

Let $D_{\tau _{2}}=\left\{ \left( \tau _{3},...,\tau _{n+1}\right) |\left(
\tau _{2},\tau _{3},...,\tau _{n+1}\right) \in D\right\} $ where $D$ is as
in Lemma \ref{lemma:MateiLemma}. Assuming that we have already verified 
\begin{equation*}
\left\Vert R_{s}^{(1)}\gamma ^{(1)}(s,\cdot )\right\Vert _{L^{2}(\mathbb{R}%
^{3}\times \mathbb{R}^{3})}=0,
\end{equation*}%
applying Lemma \ref{lemma:MateiLemma} to $[s,\tau _{1}]\subset \lbrack
0,T_{0}]$, we have%
\begin{eqnarray*}
&&\left\Vert R_{\tau _{1}}^{(1)}\gamma ^{(1)}(\tau _{1},\cdot )\right\Vert
_{L^{2}(\mathbb{R}^{3}\times \mathbb{R}^{3})} \\
&=&\left\Vert R_{\tau _{1}}^{(1)}\int_{D}U^{(1)}(\tau _{1};\tau
_{2})B_{1,2}U^{(2)}(\tau _{2};\tau _{3})B_{\mu _{m}(3),2}...d\tau
_{2}...d\tau _{n+1}\right\Vert _{L^{2}(\mathbb{R}^{3}\times \mathbb{R}^{3})}
\\
&=&\left\Vert \int_{s}^{\tau _{1}}U^{(1)}(\tau _{1};\tau _{2})\left(
\int_{D_{\tau _{2}}}R_{\tau _{2}}^{(1)}B_{1,2}U^{(2)}(\tau _{2};\tau
_{3})B_{\mu _{m}(3),2}...d\tau _{3}...d\tau _{n+1}\right) d\tau
_{2}\right\Vert _{L^{2}(\mathbb{R}^{3}\times \mathbb{R}^{3})}\text{ } \\
&&\text{(Lemma \ref{Lemma:Monentum})} \\
&\leqslant &\int_{s}^{\tau _{1}}\left\Vert \int_{D_{\tau _{2}}}R_{\tau
_{2}}^{(1)}B_{1,2}U^{(2)}(\tau _{2};\tau _{3})B_{\mu _{m}(3),2}...d\tau
_{3}...d\tau _{n+1}\right\Vert _{L^{2}(\mathbb{R}^{3}\times \mathbb{R}%
^{3})}d\tau _{2} \\
&\leqslant &\int_{[s,\tau _{1}]^{n}}\left\Vert R_{\tau
_{2}}^{(1)}B_{1,2}U^{(2)}(\tau _{2};\tau _{3})B_{\mu
_{m}(3),2}...\right\Vert _{L^{2}(\mathbb{R}^{3}\times \mathbb{R}^{3})}d\tau
_{2}d\tau _{3}...d\tau _{n+1} \\
&\leqslant &\left( \tau _{1}-s\right) ^{\frac{1}{2}}\int_{[s,\tau
_{1}]^{n-1}}\left\Vert R_{\tau _{2}}^{(1)}B_{1,2}U^{(2)}(\tau _{2};\tau
_{3})B_{\mu _{m}(3),2}...\right\Vert _{L^{2}(\tau _{2}\in \lbrack s,\tau
_{1}]\times \mathbb{R}^{3}\times \mathbb{R}^{3})}d\tau _{3}...d\tau _{n+1} \\
&\leqslant &C\left( \tau _{1}-s\right) ^{\frac{1}{2}}\int_{[s,\tau
_{1}]^{n-1}}\left\Vert R_{\tau _{2}}^{(2)}U^{(2)}(\tau _{2};\tau _{3})B_{\mu
_{m}(3),2}...\right\Vert _{L^{2}(\mathbb{R}^{6}\times \mathbb{R}^{6})}d\tau
_{3}...d\tau _{n+1}\text{ }\left( \text{Theorem \ref{Theorem:Collapsing for
GP}}\right) \\
&&(\text{Same procedure }n-2\text{ times}) \\
&\leqslant &C\left( C\left( \tau _{1}-s\right) \right) ^{\frac{n-1}{2}%
}\int_{s}^{\tau _{1}}\left\Vert R_{\tau _{n+1}}^{(n)}B_{\mu
_{m}(n+1),n+1}\gamma ^{(n+1)}(\tau _{n+1},\cdot )\right\Vert _{L^{2}(\mathbb{%
R}^{3n}\times \mathbb{R}^{3n})}d\tau _{n+1} \\
&\leqslant &C\left( C\left( \tau _{1}-s\right) \right) ^{\frac{n-1}{2}}.
\end{eqnarray*}

Let $\left( \tau _{1}-s\right) $ be sufficiently small, and $n\rightarrow
\infty $, we infer that 
\begin{equation*}
\left\Vert R_{\tau _{1}}^{(1)}\gamma ^{(1)}(\tau _{1},\cdot )\right\Vert
_{L^{2}(\mathbb{R}^{3}\times \mathbb{R}^{3})}=0\text{ in }[s,\tau _{1}].
\end{equation*}%
Similar arguments show that $\left\Vert R_{\tau }^{(k)}\gamma ^{(k)}(\tau
,\cdot )\right\Vert _{L^{2}(\mathbb{R}^{3}\times \mathbb{R}^{3})}=0$, $%
\forall k,\tau \in \lbrack 0,T_{0}].$ Hence we have attained Theorem \ref%
{Theorem:Uniqueness of GP}.

\section{Derivation of the 2d Cubic NLS with Anisotropic Switchable
Quadratic Traps / Proof of Theorem \protect\ref{Theorem:BECin2D}\label%
{Sec:ProofOfBEC}}

For a more comprehensible presentation, let us suppose%
\begin{equation*}
H_{\mathbf{y}}(\tau )=\sum_{l=1}^{n}\left( -\frac{\partial ^{2}}{\partial
y_{j,l}^{2}}+\eta _{l}(\tau )y_{j,l}^{2}\right)
\end{equation*}%
is the ordinary Hermite operator%
\begin{equation*}
H_{\mathbf{y}}=-\triangle _{\mathbf{y}}+\left\vert \mathbf{y}\right\vert ^{2}
\end{equation*}
in this section to make formulas shorter and more explicit. We will add two
remarks in the proof to address the small modifications needed for the
general case.

We start by reviewing the standard Elgart-Erd\"{o}s-Schlein-Yau program in
this setting.

\begin{itemize}
\item[Step A.] Observe that, by definition, $\left\{ \gamma
_{N}^{(k)}\right\} $ solves the quadratic trap
Bogoliubov--Born--Green--Kirkwood--Yvon (BBGKY) hierarchy 
\begin{eqnarray}
&&\left( i\partial _{\tau }-\frac{1}{2}\left( -\triangle _{\overrightarrow{%
\mathbf{y}_{k}}}+\left\vert \overrightarrow{\mathbf{y}_{k}}\right\vert
^{2}\right) +\frac{1}{2}\left( -\triangle _{\overrightarrow{\mathbf{y}%
_{k}^{\prime }}}+\left\vert \overrightarrow{\mathbf{y}_{k}^{\prime }}%
\right\vert ^{2}\right) \right) \gamma _{N}^{(k)}  \label{hierarchy:BBGKY} \\
&=&\frac{1}{N}\sum_{1\leqslant i<j\leqslant k}\left( V_{N}(\mathbf{y}_{i}-%
\mathbf{y}_{j})-V_{N}(\mathbf{y}_{i}^{\prime }-\mathbf{y}_{j}^{\prime
})\right) \gamma _{N}^{(k)}  \notag \\
&&+\frac{N-k}{N}\sum_{j=1}^{k}\int dy_{k+1}[\left( V_{N}(\mathbf{y}_{i}-%
\mathbf{y}_{k+1})-V_{N}(\mathbf{y}_{i}^{\prime }-\mathbf{y}_{k+1})\right) 
\notag \\
&&\gamma _{N}^{(k+1)}(\tau ,\overrightarrow{\mathbf{y}_{k}},\mathbf{y}_{k+1};%
\overrightarrow{\mathbf{y}_{k}^{\prime }},\mathbf{y}_{k+1})]  \notag
\end{eqnarray}%
where $V_{N}(\mathbf{x})=N^{n\beta }V\left( N^{\beta }\mathbf{x}\right) $.
It converges (at least formally) to the quadratic trap Gross-Pitaevskii
infinite hierarchy%
\begin{eqnarray}
&&\left( i\partial _{\tau }-\frac{1}{2}\left( -\triangle _{\overrightarrow{%
\mathbf{y}_{k}}}+\left\vert \overrightarrow{\mathbf{y}_{k}}\right\vert
^{2}\right) +\frac{1}{2}\left( -\triangle _{\overrightarrow{\mathbf{y}%
_{k}^{\prime }}}+\left\vert \overrightarrow{\mathbf{y}_{k}^{\prime }}%
\right\vert ^{2}\right) \right) \gamma ^{(k)}  \label{hierarchy:Q-GP} \\
&=&b_{0}\sum_{j=1}^{k}B_{j,k+1}\left( \gamma ^{(k+1)}\right) .  \notag
\end{eqnarray}%
Prove rigorously that the sequence $\left\{ \gamma _{N}^{(k)}\right\} $ is
compact with respect to the weak* topology on the trace class operators and
every limit point $\left\{ \gamma ^{(k)}\right\} $ satisfies hierarchy \ref%
{hierarchy:Q-GP}.

\item[Step B.] Utilize a suitable uniqueness theorem of hierarchy \ref%
{hierarchy:Q-GP} to conclude that 
\begin{equation*}
\gamma ^{(k)}(\tau ,\overrightarrow{\mathbf{y}_{k}};\overrightarrow{\mathbf{y%
}_{k}^{\prime }})=\dprod\limits_{j=1}^{k}\phi (\tau ,\mathbf{y}_{j})%
\overline{\phi (\tau ,\mathbf{y}_{j}^{\prime })},
\end{equation*}%
where $\phi $ solves the 2d quadratic trap cubic NLS%
\begin{equation*}
i\partial _{\tau }\phi =\frac{1}{2}\left( -\triangle +\left\vert \mathbf{y}%
\right\vert ^{2}\right) \phi +b_{0}\phi \left\vert \phi \right\vert ^{2}.
\end{equation*}%
So the compact sequence $\left\{ \gamma _{N}^{(k)}\right\} $ has only one
limit point, i.e. 
\begin{equation*}
\gamma _{N}^{(k)}\rightarrow \dprod\limits_{j=1}^{k}\phi (\tau ,\mathbf{y}%
_{j})\overline{\phi (\tau ,\mathbf{y}_{j}^{\prime })}
\end{equation*}%
in the weak* topology. Since $\gamma ^{(k)}$ is an orthogonal projection,
the convergence in the weak* topology is equivalent to the convergence in
the trace norm topology.
\end{itemize}

We modify this procedure to show Theorem \ref{Theorem:BECin2D}. We remark
that the main additional tool is the lens transform. When $H_{\mathbf{y}%
}(\tau )$ is the Hermite operator, $\alpha _{l}=\sin \tau $, $\beta
_{l}=\cos \tau $ and $T_{0}<\frac{\pi }{2}$ i.e. the lens transform and its
inverse reads as follow.

\begin{definition}
We define the lens transform $T_{l}:$ $L^{2}(d\overrightarrow{\mathbf{x}_{k}}%
d\overrightarrow{\mathbf{x}_{k}^{\prime }})\rightarrow $ $L^{2}(d%
\overrightarrow{\mathbf{y}_{k}}d\overrightarrow{\mathbf{y}_{k}^{\prime }})$
and its inverse by%
\begin{eqnarray*}
\left( T_{l}u^{(k)}\right) (\tau ,\overrightarrow{\mathbf{y}_{k}};%
\overrightarrow{\mathbf{y}_{k}^{\prime }}) &=&\frac{1}{(\cos \tau )^{nk}}%
u^{(k)}(\tan \tau ,\frac{\overrightarrow{\mathbf{y}_{k}}}{\cos \tau };\frac{%
\overrightarrow{\mathbf{y}_{k}^{\prime }}}{\cos \tau })e^{-i\frac{\tan \tau 
}{2}(\left\vert \overrightarrow{\mathbf{y}_{k}}\right\vert ^{2}-\left\vert 
\overrightarrow{\mathbf{y}_{k}^{\prime }}\right\vert ^{2})} \\
\left( T_{l}^{-1}\gamma ^{(k)}\right) (t,\overrightarrow{\mathbf{x}_{k}};%
\overrightarrow{\mathbf{x}_{k}^{\prime }}) &=&\frac{1}{\left( 1+t^{2}\right)
^{\frac{nk}{2}}}\gamma ^{(k)}(\arctan t,\frac{\overrightarrow{\mathbf{x}_{k}}%
}{\sqrt{1+t^{2}}};\frac{\overrightarrow{\mathbf{x}_{k}^{\prime }}}{\sqrt{%
1+t^{2}}})e^{\frac{it}{2(1+t^{2})}\left( \left\vert \overrightarrow{\mathbf{x%
}_{k}}\right\vert ^{2}-\left\vert \overrightarrow{\mathbf{x}_{k}^{\prime }}%
\right\vert ^{2}\right) }.
\end{eqnarray*}%
$T_{l}$ is unitary by definition and the variables are related by%
\begin{equation*}
\tau =\arctan t,\text{ }\mathbf{y}_{k}=\frac{\mathbf{x}_{k}}{\sqrt{1+t^{2}}}%
\text{and }\mathbf{y}_{k}^{\prime }=\frac{\mathbf{x}_{k}^{\prime }}{\sqrt{%
1+t^{2}}}.
\end{equation*}
\end{definition}

\begin{remark}
For the general anisotropic case, we still need the 2d version of
Proposition \ref{Proposition:TheLensTransformNeeded}.
\end{remark}

Let us write%
\begin{equation*}
\left( T_{l}^{-1}\gamma ^{(k)}\right) (t,\overrightarrow{\mathbf{x}_{k}};%
\overrightarrow{\mathbf{x}_{k}^{\prime }})=\gamma ^{(k)}(\tau ,%
\overrightarrow{\mathbf{y}_{k}};\overrightarrow{\mathbf{y}_{k}^{\prime }})%
\frac{e^{\frac{it}{2(1+t^{2})}\left( \left\vert \mathbf{x}_{k}\right\vert
^{2}-\left\vert \mathbf{x}_{k}^{\prime }\right\vert ^{2}\right) }}{\left(
1+t^{2}\right) ^{\frac{nk}{2}}}:=\gamma ^{(k)}(\tau ,\overrightarrow{\mathbf{%
y}_{k}};\overrightarrow{\mathbf{y}_{k}^{\prime }})h_{n}^{(k)}(t,%
\overrightarrow{\mathbf{x}_{k}};\overrightarrow{\mathbf{x}_{k}^{\prime }}),
\end{equation*}%
then we have a more explicit version of Proposition \ref%
{Proposition:TheLensTransformNeeded}.

\begin{proposition}
\label{Proposition:LensTransformRelation}%
\begin{eqnarray*}
&&\left( i\partial _{t}+\frac{1}{2}\triangle _{\overrightarrow{\mathbf{x}_{k}%
}}-\frac{1}{2}\triangle _{\overrightarrow{\mathbf{x}_{k}^{\prime }}}\right)
\left( T_{l}^{-1}\gamma ^{(k)}\right) (t,\overrightarrow{\mathbf{x}_{k}};%
\overrightarrow{\mathbf{x}_{k}^{\prime }}) \\
&=&\frac{h_{n}^{(k)}}{1+t^{2}}\left[ \left( i\partial _{\tau }-\frac{1}{2}%
\left( -\triangle _{\overrightarrow{\mathbf{y}_{k}}}+\left\vert 
\overrightarrow{\mathbf{y}_{k}}\right\vert ^{2}\right) +\frac{1}{2}\left(
-\triangle _{\overrightarrow{\mathbf{y}_{k}^{\prime }}}+\left\vert 
\overrightarrow{\mathbf{y}_{k}^{\prime }}\right\vert ^{2}\right) \right)
\gamma ^{(k)}(\tau ,\overrightarrow{\mathbf{y}_{k}};\overrightarrow{\mathbf{y%
}_{k}^{\prime }})\right]
\end{eqnarray*}
\end{proposition}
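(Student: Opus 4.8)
The plan is to verify the identity by a direct differentiation; conceptually it is just the statement that the change of variables and conjugation underlying the lens transform intertwine the free Schr\"{o}dinger operator with the harmonic-oscillator one, and it could equally be read off from the metaplectic computation of the appendix, but the self-contained check is short. Write $T_{l}^{-1}\gamma^{(k)}=h\,g$, where
\begin{equation*}
h=h_{n}^{(k)}(t,\overrightarrow{\mathbf{x}_{k}};\overrightarrow{\mathbf{x}_{k}^{\prime }})=\frac{e^{i\omega(t)(\left\vert \overrightarrow{\mathbf{x}_{k}}\right\vert ^{2}-\left\vert \overrightarrow{\mathbf{x}_{k}^{\prime }}\right\vert ^{2})}}{(1+t^{2})^{nk/2}},\qquad \omega(t)=\frac{t}{2(1+t^{2})},
\end{equation*}
and $g(t,\overrightarrow{\mathbf{x}_{k}};\overrightarrow{\mathbf{x}_{k}^{\prime }})=\gamma^{(k)}(\arctan t,\overrightarrow{\mathbf{x}_{k}}/\sqrt{1+t^{2}};\overrightarrow{\mathbf{x}_{k}^{\prime }}/\sqrt{1+t^{2}})$. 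Abbreviating $\mathcal{L}=i\partial_{t}+\tfrac12\triangle_{\overrightarrow{\mathbf{x}_{k}}}-\tfrac12\triangle_{\overrightarrow{\mathbf{x}_{k}^{\prime }}}$, I would expand $\mathcal{L}(hg)$ by the Leibniz rule into $(\mathcal{L}h)g+h(\mathcal{L}g)$ plus the first-order cross terms $\nabla_{\overrightarrow{\mathbf{x}_{k}}}h\cdot\nabla_{\overrightarrow{\mathbf{x}_{k}}}g-\nabla_{\overrightarrow{\mathbf{x}_{k}^{\prime }}}h\cdot\nabla_{\overrightarrow{\mathbf{x}_{k}^{\prime }}}g$, and evaluate the three contributions in turn.

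For $\mathcal{L}h$: the $i\partial_{t}$ falling on $(1+t^{2})^{-nk/2}$ produces $-\tfrac{inkt}{1+t^{2}}h$, while $\tfrac12(\triangle_{\overrightarrow{\mathbf{x}_{k}}}-\triangle_{\overrightarrow{\mathbf{x}_{k}^{\prime }}})$ acting on the quadratic phase produces $+2i\omega nk\,h$ together with a genuinely quadratic piece; since $2\omega=\tfrac{t}{1+t^{2}}$ the two amplitude terms cancel, leaving $-(\dot{\omega}+2\omega^{2})(\left\vert \overrightarrow{\mathbf{x}_{k}}\right\vert ^{2}-\left\vert \overrightarrow{\mathbf{x}_{k}^{\prime }}\right\vert ^{2})h$. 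Using the elementary identities $\dot{\omega}+2\omega^{2}=\tfrac{1}{2(1+t^{2})^{2}}$ and $\left\vert \overrightarrow{\mathbf{x}_{k}}\right\vert ^{2}=(1+t^{2})\left\vert \overrightarrow{\mathbf{y}_{k}}\right\vert ^{2}$, this is $-\tfrac{1}{2(1+t^{2})}(\left\vert \overrightarrow{\mathbf{y}_{k}}\right\vert ^{2}-\left\vert \overrightarrow{\mathbf{y}_{k}^{\prime }}\right\vert ^{2})h$, i.e. exactly the harmonic-trap terms on the right-hand side. For $\mathcal{L}g$: the chain rule (with $\tfrac{d\tau}{dt}=\tfrac{1}{1+t^{2}}$ and $\partial_{t}(\overrightarrow{\mathbf{x}}/\sqrt{1+t^{2}})=-\tfrac{t}{1+t^{2}}\overrightarrow{\mathbf{y}}$) gives $i\partial_{t}g=\tfrac{1}{1+t^{2}}i\partial_{\tau}\gamma^{(k)}-\tfrac{it}{1+t^{2}}(\overrightarrow{\mathbf{y}_{k}}\cdot\nabla_{\overrightarrow{\mathbf{y}_{k}}}+\overrightarrow{\mathbf{y}_{k}^{\prime }}\cdot\nabla_{\overrightarrow{\mathbf{y}_{k}^{\prime }}})\gamma^{(k)}$ and $\tfrac12(\triangle_{\overrightarrow{\mathbf{x}_{k}}}-\triangle_{\overrightarrow{\mathbf{x}_{k}^{\prime }}})g=\tfrac{1}{2(1+t^{2})}(\triangle_{\overrightarrow{\mathbf{y}_{k}}}-\triangle_{\overrightarrow{\mathbf{y}_{k}^{\prime }}})\gamma^{(k)}$. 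Finally the cross terms, after substituting $\overrightarrow{\mathbf{x}}=\sqrt{1+t^{2}}\,\overrightarrow{\mathbf{y}}$, become $2i\omega(\overrightarrow{\mathbf{y}_{k}}\cdot\nabla_{\overrightarrow{\mathbf{y}_{k}}}+\overrightarrow{\mathbf{y}_{k}^{\prime }}\cdot\nabla_{\overrightarrow{\mathbf{y}_{k}^{\prime }}})\gamma^{(k)}h=\tfrac{it}{1+t^{2}}(\overrightarrow{\mathbf{y}_{k}}\cdot\nabla_{\overrightarrow{\mathbf{y}_{k}}}+\overrightarrow{\mathbf{y}_{k}^{\prime }}\cdot\nabla_{\overrightarrow{\mathbf{y}_{k}^{\prime }}})\gamma^{(k)}h$.

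Adding the three contributions, the first-order ``Euler'' term in $i\partial_{t}g$ is cancelled exactly by the cross terms, and what survives is
\begin{equation*}
\mathcal{L}(hg)=\frac{h}{1+t^{2}}\left[\left(i\partial_{\tau}-\tfrac12\left(-\triangle_{\overrightarrow{\mathbf{y}_{k}}}+\left\vert \overrightarrow{\mathbf{y}_{k}}\right\vert ^{2}\right)+\tfrac12\left(-\triangle_{\overrightarrow{\mathbf{y}_{k}^{\prime }}}+\left\vert \overrightarrow{\mathbf{y}_{k}^{\prime }}\right\vert ^{2}\right)\right)\gamma^{(k)}\right]
\end{equation*}
evaluated at $\tau=\arctan t$, $\overrightarrow{\mathbf{y}_{k}}=\overrightarrow{\mathbf{x}_{k}}/\sqrt{1+t^{2}}$, $\overrightarrow{\mathbf{y}_{k}^{\prime }}=\overrightarrow{\mathbf{x}_{k}^{\prime }}/\sqrt{1+t^{2}}$; since $h=h_{n}^{(k)}$ this is precisely the assertion. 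I expect the only delicate point to be this chain of cancellations — the $nk$-dependent amplitude terms against each other in $\mathcal{L}h$, and the first-order transport terms in $\mathcal{L}g$ against the cross terms — which hinges on the precise exponent $nk/2$ in $h_{n}^{(k)}$ together with $2\omega=\tfrac{t}{1+t^{2}}$ and $\dot{\omega}+2\omega^{2}=\tfrac{1}{2(1+t^{2})^{2}}$. For the genuinely anisotropic $H_{\mathbf{y}}(\tau)$ one repeats the same computation with $\alpha_{l},\beta_{l}$ from Claim~\ref{Claim:Properties of Alpha and Beta} in place of $\sin\tau,\cos\tau$; the cancellations are then powered by $\ddot{\beta}_{l}+\eta_{l}\beta_{l}=0$ and by the constancy of the Wronskian $\dot{\alpha}_{l}\beta_{l}-\alpha_{l}\dot{\beta}_{l}=1$.
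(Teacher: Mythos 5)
Your computation is correct and is exactly the "direct computation" that the paper invokes without displaying: the Leibniz expansion $\mathcal{L}(hg)=(\mathcal{L}h)g+h(\mathcal{L}g)+\nabla_{\overrightarrow{\mathbf{x}_{k}}}h\cdot\nabla_{\overrightarrow{\mathbf{x}_{k}}}g-\nabla_{\overrightarrow{\mathbf{x}_{k}^{\prime}}}h\cdot\nabla_{\overrightarrow{\mathbf{x}_{k}^{\prime}}}g$, the cancellation of the $nk$-amplitude terms via $2\omega=t/(1+t^{2})$, the identity $\dot\omega+2\omega^{2}=1/(2(1+t^{2})^{2})$ producing the harmonic-trap potential, and the cancellation of the transport term against the cross terms are all exactly as needed. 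Nothing is missing, and the remark about the anisotropic case via the Wronskian identity is the right way to extend it.
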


\begin{proof}
This is a direct computation.
\end{proof}

Via this proposition, we understand how the lens transform acts on
hierarchies \ref{hierarchy:BBGKY} and \ref{hierarchy:Q-GP}.

\begin{lemma}
\label{Lemma:Q-GPUnderLens}(Gross-Pitaevskii hierarchy under the lens
transform) $\left\{ \gamma ^{(k)}\right\} $ solves the quadratic trap
Gross-Pitaevskii hierarchy \ref{hierarchy:Q-GP}\ if and only if $\left\{
u^{(k)}=T_{l}^{-1}\gamma ^{(k)}\right\} $ solves the infinite hierarchy%
\begin{equation}
\left( i\partial _{t}+\frac{1}{2}\triangle _{\overrightarrow{\mathbf{x}_{k}}%
}-\frac{1}{2}\triangle _{\overrightarrow{\mathbf{x}_{k}^{\prime }}}\right)
u^{(k)}=\frac{\left( 1+t^{2}\right) ^{\frac{n}{2}}}{1+t^{2}}%
b_{0}\sum_{j=1}^{k}B_{j,k+1}\left( u^{(k+1)}\right) .
\label{hierarchy:Q-GP after Lens transform}
\end{equation}%
In particular, when $n=2$, the lens transform preserves the Gross-Pitaevskii
hierarchy.
\end{lemma}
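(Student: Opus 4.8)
The plan is to verify Lemma~\ref{Lemma:Q-GPUnderLens} as a direct consequence of Proposition~\ref{Proposition:LensTransformRelation}, applied level by level in the hierarchy. First I would apply Proposition~\ref{Proposition:LensTransformRelation} to the left-hand side of hierarchy~\ref{hierarchy:Q-GP after Lens transform}: since $u^{(k)}=T_{l}^{-1}\gamma^{(k)}$, the proposition tells us that
\begin{equation*}
\left( i\partial _{t}+\tfrac{1}{2}\triangle _{\overrightarrow{\mathbf{x}_{k}}}-\tfrac{1}{2}\triangle _{\overrightarrow{\mathbf{x}_{k}^{\prime }}}\right) u^{(k)}=\frac{h_{n}^{(k)}}{1+t^{2}}\left[ \left( i\partial _{\tau }-\tfrac{1}{2}H_{\overrightarrow{\mathbf{y}_{k}}}(\tau)+\tfrac{1}{2}H_{\overrightarrow{\mathbf{y}_{k}^{\prime}}}(\tau)\right)\gamma^{(k)}\right],
\end{equation*}
where the bracketed quantity is evaluated at $\tau=\arctan t$, $\mathbf{y}=\mathbf{x}/\sqrt{1+t^2}$. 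When $\{\gamma^{(k)}\}$ solves hierarchy~\ref{hierarchy:Q-GP}, that bracket equals $b_{0}\sum_{j=1}^{k}B_{j,k+1}(\gamma^{(k+1)})$, again at the transformed variables.

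The remaining step is to rewrite $\frac{h_n^{(k)}}{1+t^2}B_{j,k+1}(\gamma^{(k+1)})$ in terms of $B_{j,k+1}(u^{(k+1)})$. This is the only genuine computation: one must track how the contraction operator $B_{j,k+1}$ — which sets $\mathbf{y}_{k+1}=\mathbf{y}_{k+1}'=\mathbf{y}_j$ and integrates out one pair of variables — interacts with the pointwise prefactor $h_n^{(k)}$ and the rescaling $\mathbf{y}=\mathbf{x}/\sqrt{1+t^2}$. Evaluating the two Dirac deltas in formula~\ref{formula:B} at the scaled variables produces a Jacobian factor $(1+t^2)^{n/2}$ from each $\delta$-contraction in $\mathbf{y}$-coordinates versus $\mathbf{x}$-coordinates (the integration $d\mathbf{y}_{k+1}d\mathbf{y}_{k+1}'$ against $d\mathbf{x}_{k+1}d\mathbf{x}_{k+1}'$ contributes $(1+t^2)^{-n}$, while the two deltas contribute $(1+t^2)^{n}$ in the opposite direction; the phase factors in $h_n^{(k+1)}$ collapse onto those in $h_n^{(k)}$ because $|\mathbf{y}_{k+1}|^2=|\mathbf{y}_j|^2=|\mathbf{y}_{k+1}'|^2$ after contraction, so the extra phase cancels). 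Combining, $h_n^{(k+1)}B_{j,k+1}(\gamma^{(k+1)})=(1+t^2)^{n/2}h_n^{(k)}B_{j,k+1}(\gamma^{(k+1)}h_n^{(k+1)})=(1+t^2)^{n/2}h_n^{(k)}B_{j,k+1}(u^{(k+1)})$ after unwinding the definition $u^{(k+1)}=\gamma^{(k+1)}h_n^{(k+1)}$. Dividing by $h_n^{(k)}$ and by $1+t^2$ yields exactly the prefactor $\frac{(1+t^2)^{n/2}}{1+t^2}$ in hierarchy~\ref{hierarchy:Q-GP after Lens transform}. The converse direction is identical since $T_l$ and $T_l^{-1}$ are mutually inverse and Proposition~\ref{Proposition:LensTransformRelation} is an equivalence.

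The final sentence, that $n=2$ makes the lens transform preserve the Gross-Pitaevskii hierarchy, is then immediate: when $n=2$ the prefactor $\frac{(1+t^2)^{n/2}}{1+t^2}=\frac{1+t^2}{1+t^2}=1$, so hierarchy~\ref{hierarchy:Q-GP after Lens transform} reduces to the trap-free Gross-Pitaevskii hierarchy~\ref{equation:Gross-Pitaevskii hiearchy without a trap} with the same coupling $b_0$.

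The main obstacle is purely bookkeeping: getting the powers of $(1+t^2)$ and the matching of the Gaussian phases in $h_n^{(k)}$ versus $h_n^{(k+1)}$ exactly right through the double $\delta$-contraction, so that the scaling weight in $B_{j,k+1}$ is precisely the $L^2$-critical weight $(1+t^2)^{n/2-1}$ and nothing else survives. I would double-check this by verifying dimensional/scaling consistency (the Gross-Pitaevskii hierarchy is scale-invariant in $n=2$, which is exactly why the $n=2$ prefactor must be $1$), which provides a cross-check that no stray Jacobian has been dropped.
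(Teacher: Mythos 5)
Your proposal is correct and takes essentially the same route as the paper: Lemma~\ref{Lemma:Q-GPUnderLens} is obtained by applying Proposition~\ref{Proposition:LensTransformRelation} to the left-hand side and then computing directly how $B_{j,k+1}$ commutes with the lens change of variables, with the surviving weight coming from the mismatch between the $(1+t^2)^{-nk/2}$ and $(1+t^2)^{-n(k+1)/2}$ prefactors after the diagonal phase cancels. One small bookkeeping slip: the intermediate display $h_n^{(k+1)}B_{j,k+1}(\gamma^{(k+1)}) = (1+t^2)^{n/2}h_n^{(k)}B_{j,k+1}(u^{(k+1)})$ does not parse as written, since $h_n^{(k+1)}$ is a function of $(k+1)$-particle variables while $B_{j,k+1}(\gamma^{(k+1)})$ lives on $k$ particles; the clean identity is $h_n^{(k)}\,B_{j,k+1}(\gamma^{(k+1)}) = (1+t^2)^{n/2}\,B_{j,k+1}(u^{(k+1)})$, which then combines with the $\frac{h_n^{(k)}}{1+t^2}$ from Proposition~\ref{Proposition:LensTransformRelation} to give the stated prefactor.
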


\begin{lemma}
(BBGKY hierarchy under the lens transform) $\left\{ \gamma
_{N}^{(k)}\right\} $ solves the quadratic trap BBGKY hierarchy \ref%
{hierarchy:BBGKY}\ if and only if $\left\{ u_{N}^{(k)}=T_{l}^{-1}\gamma
_{N}^{(k)}\right\} $ solves the hierarchy%
\begin{eqnarray}
&&\left( i\partial _{t}+\frac{1}{2}\triangle _{\overrightarrow{\mathbf{x}_{k}%
}}-\frac{1}{2}\triangle _{\overrightarrow{\mathbf{x}_{k}^{\prime }}}\right)
u_{N}^{(k)}  \label{hiearchy:BBGKY-T} \\
&=&\frac{1}{N}\frac{1}{1+t^{2}}\sum_{1\leqslant i<j\leqslant k}\left( V_{N}(%
\frac{\mathbf{x}_{i}-\mathbf{x}_{j}}{\sqrt{1+t^{2}}})-V_{N}(\frac{\mathbf{x}%
_{i}^{\prime }-\mathbf{x}_{j}^{\prime }}{\sqrt{1+t^{2}}})\right) u_{N}^{(k)}
\notag \\
&&+\frac{N-k}{N}\frac{1}{1+t^{2}}\sum_{j=1}^{k}\int d\mathbf{x}_{k+1}[\left(
V_{N}(\frac{\mathbf{x}_{i}-\mathbf{x}_{k+1}}{\sqrt{1+t^{2}}})-V_{N}(\frac{%
\mathbf{x}_{i}^{\prime }-\mathbf{x}_{k+1}}{\sqrt{1+t^{2}}})\right)  \notag \\
&&u_{N}^{(k+1)}(t,\overrightarrow{\mathbf{x}_{k}},\mathbf{x}_{k+1};%
\overrightarrow{\mathbf{x}_{k}^{\prime }},\mathbf{x}_{k+1})],  \notag
\end{eqnarray}
\end{lemma}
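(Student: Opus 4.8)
The plan is to deduce this from Proposition~\ref{Proposition:LensTransformRelation}, which already handles the linear (free-plus-trap) part of the equation under $T_l^{-1}$, and then to track how the $V_N$-terms on the right-hand side of the BBGKY hierarchy~\ref{hierarchy:BBGKY} behave under the change of variables $\overrightarrow{\mathbf{y}_k}=\overrightarrow{\mathbf{x}_k}/\sqrt{1+t^2}$, $\tau=\arctan t$. Since $T_l$ is a bijection with inverse $T_l^{-1}$, it suffices to prove one implication; the other follows by running the same computation with $T_l$ in place of $T_l^{-1}$.

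First I would apply Proposition~\ref{Proposition:LensTransformRelation} with $\gamma^{(k)}$ replaced by $\gamma_N^{(k)}$, obtaining
\[
\left(i\partial_t+\frac12\triangle_{\overrightarrow{\mathbf{x}_k}}-\frac12\triangle_{\overrightarrow{\mathbf{x}_k^{\prime}}}\right)u_N^{(k)}=\frac{h_n^{(k)}}{1+t^2}\left[\left(i\partial_\tau-\frac12\left(-\triangle_{\overrightarrow{\mathbf{y}_k}}+|\overrightarrow{\mathbf{y}_k}|^2\right)+\frac12\left(-\triangle_{\overrightarrow{\mathbf{y}_k^{\prime}}}+|\overrightarrow{\mathbf{y}_k^{\prime}}|^2\right)\right)\gamma_N^{(k)}\right],
\]
so that $\{u_N^{(k)}\}$ solves the hierarchy~\ref{hiearchy:BBGKY-T} precisely when the bracketed quantity equals the right-hand side of~\ref{hierarchy:BBGKY}. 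Substituting that right-hand side, I would then invoke three elementary facts: (i) $\mathbf{y}_i-\mathbf{y}_j=(\mathbf{x}_i-\mathbf{x}_j)/\sqrt{1+t^2}$ and $\mathbf{y}_j-\mathbf{y}_{k+1}=(\mathbf{x}_j-\mathbf{x}_{k+1})/\sqrt{1+t^2}$, so every argument of $V_N$ rescales exactly as in~\ref{hiearchy:BBGKY-T}; (ii) $h_n^{(k)}$ is a pure multiplication operator, hence commutes with the $V_N$-factors, and $h_n^{(k)}\,\gamma_N^{(k)}(\tau,\overrightarrow{\mathbf{y}_k};\overrightarrow{\mathbf{y}_k^{\prime}})=u_N^{(k)}(t,\overrightarrow{\mathbf{x}_k};\overrightarrow{\mathbf{x}_k^{\prime}})$; (iii) evaluated with $\mathbf{x}_{k+1}$ in both the last unprimed and the last primed slot, the phase $h_n^{(k+1)}$ reduces to $(1+t^2)^{-n/2}h_n^{(k)}$, since the $\mathbf{x}_{k+1}$-part of the exponent cancels and the exponent of $1+t^2$ drops from $-n(k+1)/2$ to $-nk/2$.

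Combining (ii)--(iii) with the Jacobian $dy_{k+1}=(1+t^2)^{-n/2}\,d\mathbf{x}_{k+1}$, the coupling term picks up the factor $(1+t^2)^{-n/2}\cdot(1+t^2)^{n/2}=1$: the Jacobian power is cancelled exactly by the power absorbed into $h_n^{(k+1)}$ on the diagonal, so---in contrast with Lemma~\ref{Lemma:Q-GPUnderLens}, where $B_{j,k+1}$ carries \emph{two} delta functions and therefore leaves a residual $(1+t^2)^{n/2}$---no surplus power of $1+t^2$ survives here. After cancelling the common factor $h_n^{(k)}$ from every term and retaining the prefactor $\frac1{1+t^2}$ produced by Proposition~\ref{Proposition:LensTransformRelation}, one is left with exactly hierarchy~\ref{hiearchy:BBGKY-T}, the combinatorial weights $\frac1N$ and $\frac{N-k}{N}$ being untouched by the transform. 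The only thing to watch carefully is this exponent bookkeeping; there is no analytic obstacle, and the remaining verification is the same routine computation as for Proposition~\ref{Proposition:LensTransformRelation}.
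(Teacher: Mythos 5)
Your proposal is correct and matches the route the paper implies: the lemma is stated without an explicit proof precisely because, like Lemma~\ref{Lemma:Q-GPUnderLens}, it is a direct consequence of Proposition~\ref{Proposition:LensTransformRelation} plus bookkeeping of the change of variables. Your exponent accounting is accurate: the phase $h_n^{(k+1)}$ restricted to $\mathbf{x}_{k+1}'=\mathbf{x}_{k+1}$ equals $(1+t^2)^{-n/2}h_n^{(k)}$ since the $\mathbf{x}_{k+1}$ part of the exponent cancels, the Jacobian $d\mathbf{y}_{k+1}=(1+t^2)^{-n/2}d\mathbf{x}_{k+1}$ cancels the resulting $(1+t^2)^{n/2}$ exactly, $h_n^{(k)}$ is multiplicative and cancels across the equation, and the $V_N$ arguments rescale since $\mathbf{y}_i-\mathbf{y}_j=(\mathbf{x}_i-\mathbf{x}_j)/\sqrt{1+t^2}$. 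Your side remark contrasting this with the GP case is also essentially right: there each delta contributes a $(1+t^2)^{n/2}$ and the two $d\mathbf{y}$ integrations contribute $(1+t^2)^{-n}$, which net to zero, so the $(1+t^2)^{n/2}$ from the diagonal phase survives uncancelled; here the lone $d\mathbf{y}_{k+1}$ Jacobian cancels it.
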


We can now prove Theorem \ref{Theorem:BECin2D}.

\subsection{Proof of Theorem \protect\ref{Theorem:BECin2D}}

\begin{itemize}
\item[Step 1.] Let $n=2$, consider $\left\{ u_{N}^{(k)}=T_{l}^{-1}\gamma
_{N}^{(k)}\right\} $ which solves hierarchy \ref{hiearchy:BBGKY-T}.

\item[Step 2.] Write%
\begin{equation*}
\widetilde{V}(\mathbf{x})=\frac{1}{1+t^{2}}V(\frac{\mathbf{x}}{\sqrt{1+t^{2}}%
}),
\end{equation*}%
then%
\begin{equation*}
\frac{1}{\left( 1+T^{2}\right) ^{1-\frac{1}{p}}}\left\Vert V\right\Vert
_{p}\leqslant \left\Vert \widetilde{V}\right\Vert _{p}\leqslant \left\Vert
V\right\Vert _{p}\text{ when }T<\infty \text{ and }p\geqslant 1.
\end{equation*}%
Therefore we can employ the proof in Kirkpatrick-Schlein-Staffilani \cite%
{Kirpatrick} to show that the sequence $\left\{ u_{N}^{(k)}\right\} $ is
compact with respect to the weak* topology on the trace class operators and
every limit point $\left\{ u^{(k)}\right\} $ satisfies the Gross-Pitaevskii
hierarchy \ref{hierarchy:Q-GP after Lens transform}. Moreover, based on a
fixed time trace theorem argument as in \cite{Kirpatrick}, for $\alpha <1$,
we have 
\begin{equation*}
\int_{0}^{T}dt\left\Vert \dprod\limits_{j=1}^{k}\left( \left\langle \nabla _{%
\mathbf{x}_{j}}\right\rangle ^{\alpha }\left\langle \nabla _{\mathbf{x}%
_{j}}\right\rangle ^{\alpha }\right) B_{j,k+1}\left( u^{(k+1)}\right)
\right\Vert _{L^{2}(\mathbb{R}^{2k}\times \mathbb{R}^{2k})}\leqslant C^{k}.
\end{equation*}%
for every limit point $\left\{ u^{(k)}\right\} $. To be more precise, the
proof in \cite{Kirpatrick} involves a smooth approximation. We omit this
detail here.
\end{itemize}

\begin{remark}
The auxiliary Hamiltonian 
\begin{equation*}
\widetilde{H_{N}}(t)=\frac{1}{2}\sum_{j=1}^{N}L_{\mathbf{X}_{j}}(t)+\frac{1}{%
N}\sum_{i<j}N^{2\beta }\tilde{V}(N^{\beta }\left( \mathbf{x}_{i}-\mathbf{x}%
_{j}\right) ).
\end{equation*}%
which corresponds to the anisotropic quadratic potential case does not lead
to the conservation of the quantity%
\begin{equation*}
\left\langle \psi _{N},\left( \widetilde{H_{N}}(t)\right) ^{k}\psi
_{N}\right\rangle .
\end{equation*}%
On the other hand, the following estimate controls the energy.%
\begin{eqnarray*}
\frac{d}{dt}\left\langle \psi _{N},\left( \widetilde{H_{N}}(t)\right)
^{k}\psi _{N}\right\rangle &=&\left\langle \psi _{N},\left[ \frac{d}{dt}%
,\left( \widetilde{H_{N}}(t)\right) \right] \left( \widetilde{H_{N}}%
(t)\right) ^{k-1}\psi _{N}\right\rangle +... \\
&&+\left\langle \psi _{N},\left( \widetilde{H_{N}}(t)\right) ^{k-1}\left[ 
\frac{d}{dt},\left( \widetilde{H_{N}}(t)\right) \right] \psi
_{N}\right\rangle \\
&\leqslant &Ck\left\langle \psi _{N},\left( \widetilde{H_{N}}(t)\right)
^{k}\psi _{N}\right\rangle
\end{eqnarray*}%
since $a_{1}$ and $a_{2}$, the coefficients of $L_{\mathbf{X}}$, are $C^{1}$
in the context of Theorem \ref{Theorem:BECin2D}. Thus Gronwall's inequality
takes care of the problem for us as long as we are considering finite time.
\end{remark}

\begin{itemize}
\item[Step 3.] By Theorem \ref{Theorem:Uniqueness of 2d unknown GP} (2d
uniqueness) or Theorem 7.1 in \cite{Kirpatrick}, we deduce that%
\begin{equation*}
u^{(k)}(t,\overrightarrow{\mathbf{x}_{k}};\overrightarrow{\mathbf{x}%
_{k}^{\prime }})=\dprod\limits_{j=1}^{k}\tilde{\phi}(t,\mathbf{x}_{j})%
\overline{\tilde{\phi}(t,\mathbf{x}_{j}^{\prime })}
\end{equation*}%
where $\tilde{\phi}$ solves the 2d cubic NLS%
\begin{equation*}
i\partial _{t}\tilde{\phi}=-\frac{1}{2}\triangle \tilde{\phi}+b_{0}\tilde{%
\phi}\left\vert \tilde{\phi}\right\vert ^{2}.
\end{equation*}%
Hence the compact sequence $\left\{ u_{N}^{(k)}\right\} $ has only one limit
point, so 
\begin{equation*}
u_{N}^{(k)}\rightarrow \dprod\limits_{j=1}^{k}\tilde{\phi}(t,\mathbf{x}_{j})%
\overline{\tilde{\phi}(t,\mathbf{x}_{j}^{\prime })}
\end{equation*}%
in the weak* topology. Since $u^{(k)}$ is an orthogonal projection, the
convergence in the weak* topology is equivalent to the convergence in the
trace norm topology.
\end{itemize}

\begin{remark}
It is necessary to use Theorem \ref{Theorem:Uniqueness of 2d unknown GP} in
this paper for the general anisotropic quadratic traps case.
\end{remark}

\begin{itemize}
\item[Step 4.] Let $\phi $ solve the 2d quadratic trap cubic NLS%
\begin{equation*}
i\partial _{\tau }\phi =\frac{1}{2}\left( -\triangle +\left\vert \mathbf{y}%
\right\vert ^{2}\right) \phi +b_{0}\phi \left\vert \phi \right\vert ^{2},
\end{equation*}%
then the lens transform of $u^{(k)}$ is%
\begin{equation*}
\gamma ^{(k)}(\tau ,\overrightarrow{\mathbf{y}_{k}};\overrightarrow{\mathbf{y%
}_{k}^{\prime }})=\dprod\limits_{j=1}^{k}\phi (\tau ,\mathbf{y}_{j})%
\overline{\phi (\tau ,\mathbf{y}_{j}^{\prime })},
\end{equation*}%
due to the fact that the lens transform preserves mass critical NLS, which
is the cubic NLS in 2d.

\item[Step 5.] The convergence 
\begin{equation*}
u_{N}^{(k)}\rightarrow u^{(k)}
\end{equation*}%
in the trace norm indicates the convergence in the Hilbert-Schmidt norm. But
the lens transform 
\begin{equation*}
T_{l}:L^{2}(d\overrightarrow{\mathbf{x}}d\overrightarrow{\mathbf{x}%
^{^{\prime }}})\rightarrow L^{2}(d\overrightarrow{\mathbf{y}}d%
\overrightarrow{\mathbf{y}^{^{\prime }}})
\end{equation*}%
is unitary (so preserves the norm) and thus 
\begin{equation*}
\gamma _{N}^{(k)}=T_{l}u_{N}^{(k)}\rightarrow T_{l}u^{(k)}=\gamma ^{(k)}.
\end{equation*}%
Thence we conclude that $\gamma _{N}^{(k)}$ converges to%
\begin{equation*}
\gamma ^{(k)}(\tau ,\overrightarrow{\mathbf{y}_{k}};\overrightarrow{\mathbf{y%
}_{k}^{\prime }})=\dprod\limits_{j=1}^{k}\phi (\tau ,\mathbf{y}_{j})%
\overline{\phi (\tau ,\mathbf{y}_{j}^{\prime })},
\end{equation*}%
in the Hilbert-Schmidt norm, which is Theorem \ref{Theorem:BECin2D}.
\end{itemize}

\subsection{Comments about the 3d case\label{Section:3d?}}

It is natural to wonder what we can say about the 3d case using the above
method. Visiting Lemma \ref{Lemma:Q-GPUnderLens} again yields the hierarchy%
\begin{equation}
\left( i\partial _{t}+\frac{1}{2}\triangle _{\overrightarrow{\mathbf{x}_{k}}%
}-\frac{1}{2}\triangle _{\overrightarrow{\mathbf{x}_{k}^{\prime }}}\right)
u^{(k)}=\left( 1+t^{2}\right) ^{\frac{1}{2}}b_{0}\sum_{j=1}^{k}B_{j,k+1}%
\left( u^{(k+1)}\right) .  \label{hierarchy:n=3}
\end{equation}%
Due to the factor $\left( 1+t^{2}\right) ^{\frac{1}{2}}$, it is difficult to
see of what use a 3d version of Theorem \ref{Theorem:Uniqueness of 2d
unknown GP} might be. We can certainly give a uniqueness theorem regarding
hierarchy \ref{hierarchy:n=3} with the techniques in this paper. But it is
unknown how to verify the space-time bound when $n=3$ as stated earlier,

Another possibility to attack the 3d case is the standard
Elgart-Erdos-Schlein-Yau procedure, but we presently know very little about
the analysis of the Hermite like operator $H_{\mathbf{y}}(\tau )$.

Finally, we remark that it is not clear whether the Feynman diagrams
argument, the key to the uniqueness theorem in \cite{E-S-Y2} on which \cite%
{E-S-Y1,E-S-Y2,E-S-Y4, E-S-Y5, E-S-Y3} are based, leads to a 3d uniqueness
theorem of hierarchy \ref{equation:Gross-Pitaevskii hiearchy with
anisotropic traps} or \ref{hierarchy:n=3}, which represent the two sides of
the lens transform.

\section{Conclusion}

In this paper, we have derived rigorously the 2d cubic NLS with anisotropic
switchable quadratic traps through a modified Elgart-Erd\"{o}s-Schlein-Yau
procedure. We have attained partial results in 3d as well. Unfortunately,
when $n=3$, we still have unsolved problems as stated in Section \ref%
{Section:3d?}.

\section{Appendix: the Generalized Lens Transform and the Metaplectic
Representation}

In this appendix, we prove Lemmas \ref{Lemma:Monentum} and \ref%
{Lemma:GLensTransform} via the metaplectic representation. The 3d
anisotropic case drops out once we show the 1d case. Before we delve into
the proof, we remark that we currently do not have an explanation away from
direct computations for Proposition \ref{Proposition:LensTransformRelation}
or for the fact that the generalized lens transform preserves $L^{2}$
critical NLS. The group theory proof presented in this appendix only shows
the linear case: Lemmas \ref{Lemma:Monentum} and \ref{Lemma:GLensTransform}.

Through out this appendix, we consider the metaplectic representation%
\begin{equation*}
\mu :Sp\left( 2,\mathbb{R}\right) \rightarrow Unitary\text{ }Operators\text{ 
}on\text{ }L^{2}(\mathbb{R}).
\end{equation*}%
which has the property:%
\begin{equation*}
d\mu \left( 
\begin{pmatrix}
0 & 1 \\ 
-\eta (\tau ) & 0%
\end{pmatrix}%
\right) =i\left( -\frac{1}{2}\partial _{y}^{2}+\eta (\tau )\frac{y^{2}}{2}%
\right) .
\end{equation*}%
For more information regarding $\mu $ and $d\mu $, we refer the readers to
Folland's monograph \cite{Folland}. We comment that $\mu $ is not a
well-defined group homomorphism on all of $Sp\left( 2,\mathbb{R}\right) ,$
but the fact that it is well-defined in a neighborhood of the identity of $%
Sp\left( 2,\mathbb{R}\right) $ is good enough for our purpose here.

\subsection{Proof of Lemma \protect\ref{Lemma:GLensTransform} / the
Generalized Lens Transform}

\begin{proposition}
Define $\alpha $ and $\beta $ through the system%
\begin{eqnarray*}
\ddot{\alpha}(\tau )+\eta (\tau )\alpha (\tau ) &=&0,\alpha (0)=0,\dot{\alpha%
}(0)=1, \\
\ddot{\beta}(\tau )+\eta (\tau )\beta (\tau ) &=&0,\beta (0)=1,\dot{\beta}%
(0)=0,
\end{eqnarray*}%
and let 
\begin{equation*}
B(\tau )=%
\begin{pmatrix}
\beta (\tau ) & -\alpha (\tau ) \\ 
-\dot{\beta}(\tau ) & \dot{\alpha}(\tau )%
\end{pmatrix}%
.
\end{equation*}%
Assume $\beta $ is nonzero in some time interval $[0,T]$, then $\mu \left(
B\left( \tau \right) \right) f$ solves the Schr\"{o}dinger equation with
switchable quadratic trap:%
\begin{eqnarray}
i\partial _{\tau }u &=&\left( -\frac{1}{2}\partial _{y}^{2}+\eta (\tau )%
\frac{y^{2}}{2}\right) u\text{ in }\mathbb{R}\times \lbrack 0,T]
\label{eqn:the Schrodinger equation with switchable quadratic traps} \\
u(0,y) &=&f(y)\in L^{2}(\mathbb{R}).  \notag
\end{eqnarray}
\end{proposition}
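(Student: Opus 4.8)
The plan is to verify the claim by a direct computation using the properties of the metaplectic representation, exploiting that $\mu$ is a local homomorphism near the identity and that $d\mu$ intertwines the Lie algebra action. First I would observe that $B(\tau)$ is a curve in $Sp(2,\mathbb{R})$ with $B(0) = I$ (since $\alpha(0)=0$, $\dot\alpha(0)=1$, $\beta(0)=1$, $\dot\beta(0)=0$, the matrix $B(0) = \begin{pmatrix} 1 & 0 \\ 0 & 1\end{pmatrix}$), so the initial condition $\mu(B(0))f = \mu(I)f = f$ is immediate. It remains to check that $u(\tau) = \mu(B(\tau))f$ satisfies the evolution equation, which amounts to computing $\dot B(\tau) B(\tau)^{-1}$ and recognizing it as the Lie algebra element whose image under $d\mu$ is the (negative $i$ times the) Hamiltonian.

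The key computation: since $\alpha,\beta$ both solve $\ddot v + \eta v = 0$, differentiate $B(\tau)$ entrywise to get
\begin{equation*}
\dot B(\tau) = \begin{pmatrix} \dot\beta & -\dot\alpha \\ -\ddot\beta & \ddot\alpha \end{pmatrix} = \begin{pmatrix} \dot\beta & -\dot\alpha \\ \eta\beta & -\eta\alpha \end{pmatrix}.
\end{equation*}
The inverse is $B(\tau)^{-1} = \begin{pmatrix} \dot\alpha & \alpha \\ \dot\beta & \beta \end{pmatrix}$ since $\det B(\tau) = \dot\alpha\beta - \alpha\dot\beta = 1$ by the Wronskian identity (part (2) of Claim \ref{Claim:Properties of Alpha and Beta}). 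Multiplying out,
\begin{equation*}
\dot B(\tau) B(\tau)^{-1} = \begin{pmatrix} \dot\beta\dot\alpha - \dot\alpha\dot\beta & \dot\beta\alpha - \dot\alpha\beta \\ \eta\beta\dot\alpha - \eta\alpha\dot\beta & \eta\beta\alpha - \eta\alpha\beta \end{pmatrix} = \begin{pmatrix} 0 & -1 \\ \eta(\tau) & 0 \end{pmatrix},
\end{equation*}
again using the Wronskian. So $\dot B B^{-1} = -\begin{pmatrix} 0 & 1 \\ -\eta(\tau) & 0 \end{pmatrix}$, and by the homomorphism property of $\mu$ (valid in a neighborhood of the identity, which is where $B(\tau)$ lives for $\tau \in [0,T]$ by the nonvanishing of $\beta$ and Condition \ref{Condition:FastSwitch}), we get $\partial_\tau \mu(B(\tau)) = d\mu(\dot B B^{-1})\, \mu(B(\tau))$. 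Plugging in the stated property of $d\mu$, $d\mu\left(\begin{pmatrix} 0 & 1 \\ -\eta & 0\end{pmatrix}\right) = i\left(-\tfrac12\partial_y^2 + \eta\tfrac{y^2}{2}\right)$, so $\partial_\tau u = -i\left(-\tfrac12\partial_y^2 + \eta\tfrac{y^2}{2}\right)u$, i.e. $i\partial_\tau u = \left(-\tfrac12\partial_y^2 + \eta\tfrac{y^2}{2}\right)u$, which is equation \ref{eqn:the Schrodinger equation with switchable quadratic traps}.

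The main obstacle is not the algebra but the analytic bookkeeping around the metaplectic representation: $\mu$ is only a projective/local homomorphism, so I must confine the argument to $\tau$ in an interval where $B(\tau)$ stays in a neighborhood of the identity on which $\mu$ is a genuine homomorphism and $\partial_\tau \mu(B(\tau)) = d\mu(\dot B B^{-1})\mu(B(\tau))$ holds rigorously — this is exactly why the hypothesis "$\beta$ nonzero on $[0,T]$" (equivalently Condition \ref{Condition:FastSwitch}) is needed, and I would cite Folland \cite{Folland} for the differentiability of $\tau \mapsto \mu(B(\tau))$ as an operator-valued curve (strongly, on a suitable dense domain such as the Schwartz space). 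Once on that footing, the explicit form of the lens transform formula \ref{formula:3d generalized lens transform} in Lemma \ref{Lemma:GLensTransform} follows by writing out $\mu(B(\tau))$ as a composition of the elementary metaplectic operators (a dilation by $\beta(\tau)$, a multiplication by the chirp $e^{i\dot\beta y^2/(2\beta)}$, and a free Schrödinger evolution for time $\alpha(\tau)/\beta(\tau)$), which realizes the Iwasawa-type factorization of $B(\tau)$; the 3d anisotropic case is then just a tensor product over $l=1,2,3$.
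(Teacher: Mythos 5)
Your proposal is correct and follows essentially the same approach as the paper: you compute $\dot B(\tau)B(\tau)^{-1}$, simplify it to $\bigl(\begin{smallmatrix}0 & -1\\ \eta & 0\end{smallmatrix}\bigr)$ using the Wronskian identity, apply $d\mu$, and invoke the local homomorphism property of $\mu$ near the identity. The only cosmetic difference is that you write the matrix product all at once instead of the paper's two-line derivation, and you spell out the initial condition $B(0)=I$ explicitly, but the argument is the same.
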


\begin{proof}
We calculate%
\begin{eqnarray*}
\partial _{\tau }|_{\tau =0}\mu \left( B\left( \tau _{0}+\tau \right)
\right) f &=&\left( \partial _{\tau }|_{\tau =0}\mu \left( B\left( \tau
_{0}+\tau \right) \right) \right) f \\
&=&\left( \partial _{\tau }|_{\tau =0}\mu \left( B\left( \tau _{0}+\tau
\right) B^{-1}\left( \tau _{0}\right) B\left( \tau _{0}\right) \right)
\right) f \\
&=&\left( \partial _{\tau }|_{\tau =0}\mu \left( B\left( \tau _{0}+\tau
\right) B^{-1}\left( \tau _{0}\right) \right) \right) \mu \left( B\left(
\tau _{0}\right) \right) f \\
&=&d\mu (B^{\prime }(\tau _{0})B^{-1}\left( \tau _{0}\right) )\mu \left(
B\left( \tau _{0}\right) \right) f.
\end{eqnarray*}%
where%
\begin{eqnarray*}
B^{\prime }(\tau _{0})B^{-1}\left( \tau _{0}\right) &=&%
\begin{pmatrix}
\dot{\beta}(\tau _{0}) & -\dot{\alpha}(\tau _{0}) \\ 
-\ddot{\beta}(\tau _{0}) & \ddot{\alpha}(\tau _{0})%
\end{pmatrix}%
\begin{pmatrix}
\dot{\alpha}(\tau _{0}) & \alpha (\tau _{0}) \\ 
\dot{\beta}(\tau _{0}) & \beta (\tau _{0})%
\end{pmatrix}
\\
&=&%
\begin{pmatrix}
\dot{\beta}(\tau _{0}) & -\dot{\alpha}(\tau _{0}) \\ 
\eta (\tau _{0})\beta (\tau _{0}) & -\eta (\tau _{0})\alpha (\tau _{0})%
\end{pmatrix}%
\begin{pmatrix}
\dot{\alpha}(\tau _{0}) & \alpha (\tau _{0}) \\ 
\dot{\beta}(\tau _{0}) & \beta (\tau _{0})%
\end{pmatrix}
\\
&=&%
\begin{pmatrix}
0 & \dot{\beta}(\tau _{0})\alpha (\tau _{0})-\dot{\alpha}(\tau _{0})\beta
(\tau _{0}) \\ 
\eta (\tau _{0})\left( \dot{\alpha}(\tau _{0})\beta (\tau _{0})-\dot{\beta}%
(\tau _{0})\alpha (\tau _{0})\right) & 0%
\end{pmatrix}%
.
\end{eqnarray*}%
Notice that the Wronskian of $\alpha $ and $\beta $ is constant $1$ i.e.%
\begin{equation*}
\dot{\alpha}(\tau )\beta (\tau )-\alpha (\tau )\dot{\beta}(\tau )=1.
\end{equation*}%
So%
\begin{eqnarray*}
d\mu (B^{\prime }(\tau _{0})B^{-1}\left( \tau _{0}\right) ) &=&d\mu \left( 
\begin{pmatrix}
0 & -1 \\ 
\eta (\tau _{0}) & 0%
\end{pmatrix}%
\right) \\
&=&-\frac{i}{2}\left( -\partial _{y}^{2}+\eta (\tau _{0})y^{2}\right) .
\end{eqnarray*}%
In other words,%
\begin{equation*}
\partial _{\tau }\left( \mu \left( B\left( \tau \right) \right) f\right) =-%
\frac{i}{2}\left( -\partial _{y}^{2}+\eta (\tau )y^{2}\right) \left( \mu
\left( B\left( \tau \right) \right) f\right) .
\end{equation*}%
Before we end the proof, we remark that $\beta \neq 0$ is required for the
metaplectic representation to be well-defined.
\end{proof}

Through the LDU decomposition of the matrix $B$, we derive the generalized
lens transform. The LDU decomposition of the matrix $B$ is 
\begin{eqnarray*}
B(\tau ) &=&%
\begin{pmatrix}
\beta (\tau ) & -\alpha (\tau ) \\ 
-\dot{\beta}(\tau ) & \dot{\alpha}(\tau )%
\end{pmatrix}
\\
&=&%
\begin{pmatrix}
\beta (\tau ) & -\alpha (\tau ) \\ 
-\dot{\beta}(\tau ) & \alpha (\tau )\frac{\dot{\beta}(\tau )}{\beta (\tau )}+%
\frac{1}{\beta (\tau )}%
\end{pmatrix}
\\
&=&%
\begin{pmatrix}
1 & 0 \\ 
-\frac{\dot{\beta}(\tau )}{\beta (\tau )} & 1%
\end{pmatrix}%
\begin{pmatrix}
\beta (\tau ) & 0 \\ 
0 & \frac{1}{\beta (\tau )}%
\end{pmatrix}%
\begin{pmatrix}
1 & -\frac{\alpha (\tau )}{\beta (\tau )} \\ 
0 & 1%
\end{pmatrix}%
.
\end{eqnarray*}%
Hence we have%
\begin{equation}
\mu \left( B(\tau )\right) f=\mu \left( 
\begin{pmatrix}
1 & 0 \\ 
-\frac{\dot{\beta}(\tau )}{\beta (\tau )} & 1%
\end{pmatrix}%
\right) \mu \left( 
\begin{pmatrix}
\beta (\tau ) & 0 \\ 
0 & \frac{1}{\beta (\tau )}%
\end{pmatrix}%
\right) \mu \left( 
\begin{pmatrix}
1 & -\frac{\alpha (\tau )}{\beta (\tau )} \\ 
0 & 1%
\end{pmatrix}%
\right) f,  \label{equality:well-definedness}
\end{equation}%
where%
\begin{eqnarray*}
\mu \left( 
\begin{pmatrix}
1 & 0 \\ 
-\frac{\dot{\beta}(\tau )}{\beta (\tau )} & 1%
\end{pmatrix}%
\right) f(y) &=&e^{i\frac{\dot{\beta}(\tau )}{\beta (\tau )}\frac{y^{2}}{2}%
}f(y)\text{ by }\left( 4.25\right) \text{ in \cite{Folland}} \\
\mu \left( 
\begin{pmatrix}
\beta (\tau ) & 0 \\ 
0 & \frac{1}{\beta (\tau )}%
\end{pmatrix}%
\right) f(y) &=&\frac{1}{\left( \beta (\tau )\right) ^{\frac{1}{2}}}f(\frac{y%
}{\beta (\tau )})\text{ by }\left( 4.24\right) \text{ in \cite{Folland}} \\
\mu \left( 
\begin{pmatrix}
1 & -\frac{\alpha (\tau )}{\beta (\tau )} \\ 
0 & 1%
\end{pmatrix}%
\right) f(y) &=&e^{i\frac{\alpha (\tau )}{\beta (\tau )}\frac{\partial
_{y}^{2}}{2}}f\text{ by }\left( 4.54\right) \text{ in \cite{Folland}.}
\end{eqnarray*}%
Due to the definition of $\mu $, equality \ref{equality:well-definedness} in
fact holds up to a $"\pm "$ sign which depends on the time interval.
However, the LHS and the RHS of equality \ref{equality:well-definedness}
agree for sufficiently small $\tau $. By continuity, they must agree on the
time interval $\left[ 0,T\right] $ where $\beta \neq 0$. So we conclude the
following lemma concerning the generalized lens transform.

\begin{lemma}
\cite{Carles} Assume $\beta $ is nonzero in the time interval $[0,T]$, then
the solution of the Schr\"{o}dinger equation with switchable quadratic trap
(equation \ref{eqn:the Schrodinger equation with switchable quadratic traps}%
) in $[0,T]$ is given by%
\begin{equation*}
u(\tau ,y)=\frac{e^{i\frac{\dot{\beta}(\tau )}{\beta (\tau )}\frac{y^{2}}{2}}%
}{\left( \beta (\tau )\right) ^{\frac{1}{2}}}v(\frac{\alpha (\tau )}{\beta
(\tau )},\frac{y}{\beta (\tau )}),
\end{equation*}%
if $v(t,x)$ solves the free Schr\"{o}rdinger equation%
\begin{eqnarray*}
i\partial _{t}v &=&-\frac{1}{2}\partial _{x}^{2}v\text{ in }\mathbb{R}^{1+1}
\\
v(0,x) &=&f(x)\in L^{2}(\mathbb{R}).
\end{eqnarray*}
\end{lemma}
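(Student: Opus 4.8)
The plan is to read $u$ directly off the preceding Proposition, which already asserts that $\tau\mapsto\mu(B(\tau))f$ solves equation \ref{eqn:the Schrodinger equation with switchable quadratic traps} with datum $f$; since that Cauchy problem is uniquely solvable in $L^{2}(\mathbb{R})$, it suffices to recognize $\mu(B(\tau))f$ as the right-hand side of the claimed formula. First I would Gaussian-eliminate $B(\tau)=\begin{pmatrix}\beta&-\alpha\\-\dot{\beta}&\dot{\alpha}\end{pmatrix}$ inside $Sp(2,\mathbb{R})$; this is legitimate precisely because $\beta(\tau)\neq0$ on $[0,T]$, and after rewriting the lower-right entry via the unit Wronskian $\dot{\alpha}\beta-\alpha\dot{\beta}=1$ from Claim \ref{Claim:Properties of Alpha and Beta} it yields the $LDU$ factorization
\[
B(\tau)=\begin{pmatrix}1&0\\-\dot{\beta}/\beta&1\end{pmatrix}\begin{pmatrix}\beta&0\\0&1/\beta\end{pmatrix}\begin{pmatrix}1&-\alpha/\beta\\0&1\end{pmatrix}.
\]

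Next I would apply $\mu$ factor by factor, using the standard formulas in Folland \cite{Folland}: the lower unipotent factor becomes multiplication by the chirp $e^{i(\dot{\beta}/\beta)y^{2}/2}$, the diagonal factor becomes the $L^{2}$-unitary rescaling $g\mapsto\beta^{-1/2}g(\cdot/\beta)$, and the upper unipotent factor becomes the free Schr\"odinger propagator $e^{i(\alpha/\beta)\partial_{y}^{2}/2}$, i.e.\ evolving the free equation over the fictitious time $\alpha(\tau)/\beta(\tau)$. Composing these three operators in that order, and recalling that $v(t,x)=(e^{it\partial_{x}^{2}/2}f)(x)$, produces exactly $u(\tau,y)=\beta(\tau)^{-1/2}e^{i(\dot{\beta}(\tau)/\beta(\tau))y^{2}/2}\,v(\alpha(\tau)/\beta(\tau),y/\beta(\tau))$, which is the assertion.

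The one genuine difficulty, and the step I expect to be the main obstacle, is that $\mu$ is only a projective representation of $Sp(2,\mathbb{R})$ (the metaplectic group being a double cover), so the multiplicativity $\mu(L)\mu(D)\mu(U)=\mu(LDU)$ used above holds only up to a sign; moreover the factors $L,D,U$ leave any fixed neighborhood of the identity as $\beta(\tau)$ grows or decays, so one cannot simply remain where $\mu$ is canonically normalized. I would settle this by continuity: at $\tau=0$ one has $B(0)=I$ and both sides of the identity equal $f$, hence agree; both sides depend continuously on $\tau$; and the sign relating them is locally constant on the connected interval $[0,T]$ on which $\beta$ never vanishes, so it equals $+1$ throughout $[0,T]$. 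This gives the lemma. The $3$d anisotropic statement (Lemma \ref{Lemma:GLensTransform}) then follows by tensoring the three one-dimensional results for $\eta_{1},\eta_{2},\eta_{3}$, and Lemma \ref{Lemma:Monentum} follows by conjugating $-i\partial_{y}$ by the same metaplectic operator.
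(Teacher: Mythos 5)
Your proposal is correct and follows essentially the same route as the paper: read $u=\mu(B(\tau))f$ off the preceding proposition, LDU-factor $B(\tau)$ using the unit Wronskian and $\beta\neq0$, apply $\mu$ to each factor via Folland's formulas, and resolve the $\pm$ sign ambiguity of the metaplectic representation by a continuity argument on $[0,T]$ anchored at $\tau=0$.
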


The anisotropic case, Lemma \ref{Lemma:GLensTransform}, follows from the
above lemma.

\subsection{Proof of Lemma \protect\ref{Lemma:Monentum} / Evolution of
Momentum}

Using the metaplectic representation, we can also compute the evolution of
momentum and position.

\begin{lemma}
The evolution of momentum and position is given by%
\begin{eqnarray*}
P(\tau ) &=&\mu \left( B(\tau )\right) \circ \left( -i\partial _{y}\right)
\circ \left( \mu \left( B(\tau )\right) \right) ^{-1}=-i\beta (\tau
)\partial _{y}-\dot{\beta}(\tau )y \\
Y(\tau ) &=&\mu \left( B(\tau )\right) \circ y\circ \left( \mu \left( B(\tau
)\right) \right) ^{-1}=i\alpha (\tau )\partial _{y}+\dot{\alpha}(\tau )y.
\end{eqnarray*}
\end{lemma}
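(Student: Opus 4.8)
The plan is to differentiate the two conjugated operators in $\tau$ and observe that the coefficients are forced to solve the very ODEs that define $\alpha$ and $\beta$. Put $P(\tau)=\mu(B(\tau))\circ(-i\partial_y)\circ\mu(B(\tau))^{-1}$ and $Y(\tau)=\mu(B(\tau))\circ y\circ\mu(B(\tau))^{-1}$. From the computation already carried out in the proof above of Lemma \ref{Lemma:GLensTransform} we have $\tfrac{d}{d\tau}\mu(B(\tau))=d\mu(B'(\tau)B^{-1}(\tau))\,\mu(B(\tau))$, where $B'(\tau)B^{-1}(\tau)=\left(\begin{smallmatrix}0 & -1\\ \eta(\tau) & 0\end{smallmatrix}\right)$ by the Wronskian identity, so that $d\mu(B'(\tau)B^{-1}(\tau))=-\tfrac{i}{2}(-\partial_y^2+\eta(\tau)y^2)$. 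Differentiating $\mu(B(\tau))^{-1}$ as well and applying the product rule (using $\tfrac{d}{d\tau}U^{-1}=-U^{-1}(\tfrac{d}{d\tau}U)U^{-1}$) shows that $P$ and $Y$ both satisfy a Heisenberg-type equation
\[
\frac{d}{d\tau}P(\tau)=\big[d\mu(B'(\tau)B^{-1}(\tau)),\,P(\tau)\big],\qquad \frac{d}{d\tau}Y(\tau)=\big[d\mu(B'(\tau)B^{-1}(\tau)),\,Y(\tau)\big].
\]

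Next I would invoke the symplectic covariance of the metaplectic representation (Folland \cite{Folland}): conjugation by $\mu(A)$, for $A\in Sp(2,\mathbb{R})$ near the identity, maps the two-dimensional space $\operatorname{span}\{y,\,-i\partial_y\}$ into itself. Hence we may write $P(\tau)=a(\tau)(-i\partial_y)+b(\tau)y$ and $Y(\tau)=c(\tau)(-i\partial_y)+d(\tau)y$. A short commutator calculation from $[\partial_y^2,y]=2\partial_y$ and $[y^2,\partial_y]=-2y$ gives $[-\tfrac{i}{2}(-\partial_y^2+\eta y^2),\,-i\partial_y]=\eta\,y$ and $[-\tfrac{i}{2}(-\partial_y^2+\eta y^2),\,y]=-(-i\partial_y)$. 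Substituting into the two operator equations and matching the coefficients of $-i\partial_y$ and $y$ converts them into the scalar systems $\dot a=-b,\ \dot b=\eta a$ and $\dot c=-d,\ \dot d=\eta c$, i.e. $\ddot a+\eta a=0$ and $\ddot c+\eta c=0$.

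Finally I would read off the initial data at $\tau=0$. Since $B(0)=I$ and $\mu$ is normalized near the identity so that $\mu(I)=\mathrm{Id}$, we get $P(0)=-i\partial_y$ and $Y(0)=y$, hence $a(0)=1,\ \dot a(0)=-b(0)=0$ and $c(0)=0,\ \dot c(0)=-d(0)=-1$. Comparing with the system \ref{eqn: Alpha and Beta} defining $\alpha,\beta$, these uniquely identify $a=\beta$, $b=-\dot\beta$, $c=-\alpha$, $d=\dot\alpha$, which is exactly $P(\tau)=-i\beta(\tau)\partial_y-\dot\beta(\tau)y$ and $Y(\tau)=i\alpha(\tau)\partial_y+\dot\alpha(\tau)y$. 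The one genuinely fiddly point is the bookkeeping of signs and conventions — in particular the familiar $\pm$ ambiguity of $\mu$ flagged around equality \ref{equality:well-definedness} — but this is harmless here: conjugation cancels the scalar, and $\mu(B(0))=\mathrm{Id}$ together with continuity in $\tau$ selects the correct representatives throughout $[0,T]$. The $3$d anisotropic version in Lemma \ref{Lemma:Monentum} then follows by tensoring over the three coordinate directions exactly as in the passage from the $1$d to the $3$d generalized lens transform; alternatively one can skip the differentiation and apply symplectic covariance directly to $A=B(\tau)$, after fixing the convention by checking it on the shear and dilation generators whose $L^2$-actions are the ones recorded in \cite{Folland} and already used in the previous subsection.
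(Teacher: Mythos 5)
Your proof is correct, but it takes a genuinely different route from the paper's. The paper proves the lemma in a single algebraic step by invoking the quantitative form of symplectic covariance of the metaplectic representation (Theorem~2.15 in \cite{Folland}): for $A \in Sp(2,\mathbb{R})$ near the identity,
\begin{equation*}
\mu(A)\begin{pmatrix} -i\partial_y \\ y \end{pmatrix}\mu(A)^{-1} = A^{T}\begin{pmatrix} -i\partial_y \\ y \end{pmatrix},
\end{equation*}
and then simply substitutes $A = B(\tau)$ and reads off the two rows; no ODE is solved. You instead derive the Heisenberg equation $\tfrac{d}{d\tau}P = [\,d\mu(B'B^{-1}),\,P\,]$ from the already-established evolution identity for $\mu(B(\tau))$, compute the commutators of $-\tfrac{i}{2}(-\partial_y^2 + \eta y^2)$ with $-i\partial_y$ and $y$, match coefficients to obtain the scalar system $\ddot a + \eta a = 0$, $\ddot c + \eta c = 0$, and identify the solutions with $\beta$ and $-\alpha$ via the initial data $P(0) = -i\partial_y$, $Y(0) = y$. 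Your calculations are all right, including the sign giving $c = -\alpha$ and hence $Y = i\alpha\partial_y + \dot\alpha y$. The paper's approach is shorter and bypasses any discussion of uniqueness for the operator-valued ODE; your approach is more self-contained (it uses only the $d\mu$ of the single infinitesimal generator already computed in the preceding subsection rather than the full covariance formula) and makes the Heisenberg-picture interpretation explicit. One small caveat you should tighten if you write this up: the step ``the Heisenberg ODE has the linear ansatz as its unique solution'' is stated informally; either invoke Folland's covariance (as you note as an alternative at the end, which collapses your proof to the paper's) to justify a priori that $P(\tau)$ lies in $\operatorname{span}\{y,-i\partial_y\}$, or argue by computing $\tfrac{d}{d\tau}\bigl(\mu(B(\tau))^{-1}\tilde P(\tau)\mu(B(\tau))\bigr) = 0$ for the explicit $\tilde P(\tau) = \beta(-i\partial_y) - \dot\beta y$, which avoids any uniqueness issue for unbounded operator ODEs.
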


\begin{proof}
Let us only compute the momentum, position can be obtained similarly.%
\begin{eqnarray*}
\mu \left( B(\tau )\right) \left( -i\partial _{y}\right) \left( \mu \left(
B(\tau )\right) \right) ^{-1} &=&\mu \left( B(\tau )\right) 
\begin{pmatrix}
1 & 0%
\end{pmatrix}%
\begin{pmatrix}
-i\partial _{y} \\ 
y%
\end{pmatrix}%
\left( \mu \left( B(\tau )\right) \right) ^{-1} \\
&=&%
\begin{pmatrix}
1 & 0%
\end{pmatrix}%
\left( B(\tau )\right) ^{T}%
\begin{pmatrix}
-i\partial _{y} \\ 
y%
\end{pmatrix}%
\text{ (Theorem 2.15 in \cite{Folland})} \\
&=&%
\begin{pmatrix}
1 & 0%
\end{pmatrix}%
\begin{pmatrix}
\beta (\tau ) & -\dot{\beta}(\tau ) \\ 
-\alpha (\tau ) & \dot{\alpha}(\tau )%
\end{pmatrix}%
\begin{pmatrix}
-i\partial _{y} \\ 
y%
\end{pmatrix}
\\
&=&-i\beta (\tau )\partial _{y}-\dot{\beta}(\tau )y
\end{eqnarray*}
\end{proof}

\begin{remark}
We select $-i\partial _{y}$ to be the momentum to match the canonical
commutation relations in Folland \cite{Folland} which is 
\begin{equation*}
\left[ -i\partial _{y},y\right] =-iI.
\end{equation*}
\end{remark}

The above lemma reproduces the following result in Carles \cite{Carles}.

\begin{lemma}
\cite{Carles} The operators $P(\tau )$ and $Y(\tau )$ commute with the
linear operator 
\begin{equation*}
i\partial _{\tau }+\frac{1}{2}\partial _{y}^{2}-\eta (\tau )\frac{y^{2}}{2}
\end{equation*}%
Moreover, 
\begin{eqnarray*}
P(\tau )U(\tau ;s) &=&U(\tau ;s)P(s) \\
Y(\tau )U(\tau ;s) &=&U(\tau ;s)Y(s)
\end{eqnarray*}%
if we let $U_{y}(\tau ;s)$ be the solution operator of 
\begin{eqnarray*}
i\partial _{\tau }u &=&\left( -\frac{1}{2}\partial _{y}^{2}+\eta (\tau )%
\frac{y^{2}}{2}\right) u\text{ in }\mathbb{R}^{1+1} \\
u(s,y) &=&u_{s}(y)\in L^{2}(\mathbb{R}),
\end{eqnarray*}%
or in other words%
\begin{equation*}
U_{y}(\tau ;s)=\mu \left( B(\tau )\right) \mu \left( B(s)\right) ^{-1}.
\end{equation*}
\end{lemma}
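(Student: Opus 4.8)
The plan is to derive both assertions from the conjugation formulas just established for $P(\tau)$ and $Y(\tau)$, together with the fact that $\mu$ is a genuine group homomorphism in a neighborhood of the identity of $Sp(2,\mathbb{R})$.

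First I would identify the propagator with a product of metaplectic operators. Since $B(0)$ is the identity matrix and $\mu$ sends it to the identity operator, the proposition preceding the lemma says precisely that $U_{y}(\tau;0)=\mu(B(\tau))$ on the interval $[0,T]$ where $\beta\neq 0$; composing the flows gives $U_{y}(\tau;s)=U_{y}(\tau;0)U_{y}(0;s)=\mu(B(\tau))\mu(B(s))^{-1}$. Exactly as with equality \ref{equality:well-definedness}, the $\pm$ ambiguity of the metaplectic representation is harmless here: the two sides agree for $\tau$ close to $s$, and hence, by continuity in $\tau$, on all of $[0,T]$.

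With the propagator in hand, the intertwining relations follow by a one-line manipulation. Using $P(\tau)=\mu(B(\tau))\circ(-i\partial_{y})\circ\mu(B(\tau))^{-1}$ from the previous lemma,
\begin{equation*}
P(\tau)U_{y}(\tau;s)=\mu(B(\tau))(-i\partial_{y})\mu(B(s))^{-1}=U_{y}(\tau;s)P(s),
\end{equation*}
and the same with $y$ in place of $-i\partial_{y}$ yields $Y(\tau)U_{y}(\tau;s)=U_{y}(\tau;s)Y(s)$; note that the conjugate $\mu(B(\tau))\circ(-i\partial_{y})\circ\mu(B(\tau))^{-1}$ is insensitive to the sign ambiguity, so this step is unambiguous. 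For the assertion that $P(\tau)$ and $Y(\tau)$ commute with $L(\tau):=i\partial_{\tau}+\tfrac12\partial_{y}^{2}-\eta(\tau)\tfrac{y^{2}}{2}$, I would argue that $P(\tau)$ maps solutions to solutions: the proposition gives $L(\tau)\bigl(U_{y}(\tau;s)g\bigr)=0$ for every $g\in L^{2}$, and conversely every $L^{2}$-solution of $L(\tau)u=0$ on $[0,T]$ is of this form by uniqueness for the linear equation with the trap; so if $u(\tau,\cdot)=U_{y}(\tau;s)g$, the intertwining relation shows $P(\tau)u(\tau,\cdot)=U_{y}(\tau;s)\bigl(P(s)g\bigr)$ is again such a solution, i.e.\ $L(\tau)\bigl(P(\tau)u\bigr)=0$, and likewise for $Y(\tau)$. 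Alternatively one may verify the operator identities $[L(\tau),P(\tau)]=0$ and $[L(\tau),Y(\tau)]=0$ by a direct one-line computation, substituting $\ddot{\beta}=-\eta\beta$ and $\ddot{\alpha}=-\eta\alpha$.

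The only delicate point, and hence the main obstacle, is the sign ambiguity of the metaplectic representation: it forces the identification $U_{y}(\tau;s)=\mu(B(\tau))\mu(B(s))^{-1}$ of Step~1 to be justified by the local-to-global continuity argument already invoked for equality \ref{equality:well-definedness}, rather than by a naive use of $\mu(AB)=\mu(A)\mu(B)$. Once the propagator is correctly pinned down, the remaining steps are purely algebraic and formal.
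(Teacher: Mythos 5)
Your proposal is correct and fills in, with the right level of care, the argument the paper itself leaves implicit (the paper presents the lemma as an immediate consequence of the conjugation identities and the earlier proposition, citing Carles, and does not spell out the steps). You correctly identify $U_y(\tau;0)=\mu(B(\tau))$ from the preceding proposition and $B(0)=I$, obtain $U_y(\tau;s)=\mu(B(\tau))\mu(B(s))^{-1}$ from the flow property, derive the intertwining relations by a one-line conjugation, and observe that the sign ambiguity of $\mu$ both requires the local-to-global continuity argument for the propagator identity and is irrelevant for the conjugated operators.

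One small remark on the commutation claim: showing that $P(\tau)$ maps solutions to solutions proves $[L,P(\tau)]u=0$ for $u$ in the kernel of $L$, which is a priori weaker than the operator identity $[L,P(\tau)]=0$. It does become an honest proof once you note that $[L,P(\tau)]$ contains no $\partial_\tau$ (since $P(\tau)$ is purely spatial, $[i\partial_\tau,P(\tau)]=i\dot P(\tau)$), so at each fixed $\tau$ it is a spatial operator annihilating $U_y(\tau;s)g$ for all $g\in L^2$, hence annihilating all of $L^2$. Your alternative direct computation, substituting $\ddot\beta=-\eta\beta$ and $\ddot\alpha=-\eta\alpha$, avoids this subtlety entirely and is the cleaner route; either way the argument closes.
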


Thence we have shown Lemma \ref{Lemma:Monentum}.

\end{document}